\DeclareMathOperator*{\argmax}{arg\,max}
\newcommand{\E}{\operatorname{\mathbb E}}
\newcommand{\innermid}{\;\middle\lvert\;}
\newtheorem{lemma}{Lemma}
\newtheorem{corollary}{Corollary}
\newtheorem{theorem}{Theorem}
\newtheorem{definition}{Definition}
\newtheorem{assumption}{Assumption}
\newcommand\bc[1]{\left({#1}\right)}
\newcommand\cbc[1]{\left\{{#1}\right\}}
\newcommand\abs[1]{\left|{#1}\right|}
\newcommand\brk[1]{\left\lbrack{#1}\right\rbrack}
\newcommand\norm[1]{\left\|{#1}\right\|}
\newcommand{\bck}[1]{\left\langle{#1}\right\rangle}
\newcommand\scal[2]{\bck{{#1},{#2}}}
\newcommand\Erw{\mathbb{E}}
\newcommand{\Erdos}{Erd\H{o}s}
\newcommand{\Renyi}{R\'enyi}
\begin{document}

%

%

\twocolumn[

\aistatstitle{Learning Sparse Graphon Mean Field Games}

\aistatsauthor{ Christian Fabian \And Kai Cui \And  Heinz Koeppl}

\aistatsaddress{ Technische Universität Darmstadt \And  Technische Universität Darmstadt \And Technische Universität Darmstadt } ]

\begin{abstract}
  Although the field of multi-agent reinforcement learning (MARL) has made considerable progress in the last years, solving systems with a large number of agents remains a hard challenge. Graphon mean field games (GMFGs) enable the scalable analysis of MARL problems that are otherwise intractable. By the mathematical structure of graphons, this approach is limited to dense graphs which are insufficient to describe many real-world networks such as power law graphs. Our paper introduces a novel formulation of GMFGs, called LPGMFGs, which leverages the graph theoretical concept of $L^p$ graphons and provides a machine learning tool to efficiently and accurately approximate solutions for sparse network problems. This especially includes power law networks which are empirically observed in various application areas and cannot be captured by standard graphons. We derive theoretical existence and convergence guarantees and give empirical examples that demonstrate the accuracy of our learning approach for systems with many agents. Furthermore, we extend the Online Mirror Descent (OMD) learning algorithm to our setup to accelerate learning speed, empirically show its capabilities, and conduct a theoretical analysis using the novel concept of smoothed step graphons. In general, we provide a scalable, mathematically well-founded machine learning approach to a large class of otherwise intractable problems of great relevance in numerous research fields.
\end{abstract}

\section{INTRODUCTION} \label{sec:intro}
In various research areas, scientists are confronted with systems of many interacting individuals or components. Examples include neurons in the human brain (\cite{avena2018communication}, \cite{bullmore2009complex}, \cite{bullmore2012economy},  \cite{sporns2022structure}), people trading on a stock market (\cite{bakker2010social}, \cite{bian2016evolving}) or the spreading of contagious diseases among citizens of a society (\cite{newman2002spread}, \cite{pastor2015epidemic}). Due to their complexity, these systems are in general hard to model and are often controlled by using multi-agent reinforcement learning (MARL).  In the last years, the field of MARL has experienced significant progress, see \cite{canese2021multi} or \cite{yang2020overview} for an overview, but crucial open problems remain. While many approaches provide sound empirical results, they often lack a solid theoretical foundation (\cite{zhang2021multi}). Furthermore, as the number of agents in the system increases,  numerous MARL algorithms become computationally expensive and are thereby hardly scalable.

Mean field games (MFGs) (\cite{carmona2018probabilistic}, \cite{carmona2018prob}) have proven to be a valid approach for achieving both scalability and theoretical guarantees in multi-agent systems. Since they were introduced independently by \cite{huang2006large} and \cite{lasry2007mean} to address game theoretic challenges, they have become a major interest in various research fields. Extensions of the original MFG model include discrete-time formulations (\cite{cui2021learning}, \cite{saldi2018markov}), variants with major and minor agents (\cite{carmona2016probabilistic}, \cite{firoozi2020convex}, \cite{nourian2013mm}) as well as zero-sum games (\cite{choutri2019mean}, \cite{choutri2019optimal}). MFGs are based on the weak interaction principle where each individual has a negligible influence on the whole system. Besides the numerical and theoretical benefits of this principle, MFGs provide the modelling framework for various applications, such as autonomous driving (\cite{huang2020game}), cyber security (\cite{kolokoltsov2018corruption}), big data architectures (\cite{castiglione2014exploiting}), and systemic risk in financial markets (\cite{carmona2015mean}, \cite{elie2020large}). There is also some work that aims to apply MFGs to real world tasks, e.g. social networks (\cite{yang2018learning}) or swarm robotics (\cite{elamvazhuthi2019mean}, \cite{cui2022scalable}), but this field largely remains to be developed. Although our paper is of theoretical nature, its goal is to make MFGs more realistic, as we discuss below.

Both from the classical equilibrium learning perspective and the reinforcement learning (RL) perspective, MFGs are able to provide solutions for numerous challenges where other equilibrium learning or MARL algorithms become computationally intractable. Here, learning refers to both classical computation of equilibria and RL -- also known as approximate optimal control (\cite{bertsekas2019reinforcement}) -- with focus on solving complex control problems without knowing or using the model. Current RL research is addressing the approximation of Nash equilibria for stationary games (\cite{subramanian2019reinforcement}), under the occurrence of noise (\cite{carmona2019model}), using entropy regularization (\cite{anahtarci2020q}, \cite{cui2021approximately}, \cite{guo2022entropy}), leveraging Fictitious Play (\cite{cardaliaguet2017learning}, \cite{delarue2021exploration}, \cite{hadikhanloo2019finite}, \cite{mguni2018decentralised}, \cite{perrin2021generalization}, \cite{perrin2021mean}, \cite{perrin2020fictitious}), and increasing the robustness and efficiency of learning algorithms in general (\cite{guo2019learning}, \cite{guo2020general}). A learning scheme of particular interest for our paper is Online Mirror Descent (OMD) (\cite{orabona2015generalized}, \cite{srebro2011universality}). RL research has leveraged OMD  to learn MFGs (\cite{hadikhanloo2017learning}, \cite{lauriere2022scalable}, \cite{perolat2021scaling}) which ensures algorithmic scalability.

For learning applications, decision-making on graphs appears to be particularly interesting. Here, we refer to agents connected via graphical edges, as opposed to agents with states on a graph as considered e.g. in \cite{li2019efficient}. Apart from direct graphical decompositions in MARL \cite{qu2020scalable}, there has been recent research interest in MFGs on graphs. In classical MFGs each agent weakly interacts with all other agents at once which seems to be an unrealistic modelling assumption for many applications. To overcome this concern, graphon mean field games (\cite{aurell2021stochastic}, \cite{caines2019graphon}, \cite{carmona2022stochastic}, \cite{gao2017control}, \cite{gao2021linear}) (GMFGs) provide a well-established tool to model games with a graphical structure. For example, \cite{tangpi2022optimal} apply GMFGs to model investment decisions in a financial market. Also based on GMFGs, \cite{aurell2022finite} develop models on epidemics and provide the corresponding machine learning methods for outcome estimation.
So far, most of the literature has focused on MFGs on dense graphs. To the best of our knowledge, there are only a few papers that consider sparse graphs. While \cite{gkogkas2022graphop} focuses on Kuramoto-like models, \cite{lacker2022case} is concerned with linear-quadratic stochastic differential games. Finally, \cite{bayraktar2020graphon} considers systems on not-so-dense graphs but without control and without leveraging $L^p$ graphons. By utilizing $L^p$ graphons (\cite{borgs2018p}, \cite{borgs2019L}), our paper's aim is to provide a general framework for learning MFGs on sparse graphs.

Sparse power law graphs (\cite{barabasi1999emergence}, \cite{barabasi1999mean}) are of great interest for various research applications such as social networks (\cite{aparicio2015model}), software engineering (\cite{concas2007power}, \cite{louridas2008power}, \cite{wheeldon2003power}), finance (\cite{d2016complex}), or biology (\cite{nosonovsky2020scaling}). For more examples, see \cite{newman2018networks}. Although there is strong empirical evidence for power law graphs in many research fields as mentioned above, GMFGs cannot capture these naturally sparse structures. LPGMFGs and the corresponding learning methods presented in our paper provide a novel ML tool to solve such real-world problems that are otherwise intractable.
Our contributions can be summarized as follows:
(i) introducing MFGs on $L^p$ Graphons (LPGMFGs) which formalize MFGs on sparse graphs;
(ii) conducting a theoretical analysis of LPGMFGs that includes the existence of equilibria as well as convergence guarantees;
(iii) evaluating LPGMFGs on different examples empirically, especially in a multi-class agent setup;
(iv) adapting the OMD learning scheme to our setup and thereby accelerating learning speed;
(v) conducting both an empirical and a theoretical convergence analysis for the adapted OMD algorithm.
Thus, our paper provides a scalable, mathematically well founded approach for learning MARL problems on sparse graphs on the theoretical side. On the practical side, different empirical examples demonstrate the scalability of the learning method and give an impression of how models on sparse graphs are often more realistic than dense networks.

\section{$L^p$ GRAPHONS}

\paragraph{Central Concepts.}
In this section, we briefly introduce the concept of $L^p$ graphons pioneered by \cite{borgs2018p} and \cite{borgs2019L} which provide more details. $L^p$ graphons can be informally thought of as adjacency matrices for graphs with (almost) infinitely many nodes. Naturally, approximating sparse finite graphs by these $L^p$ graphons leads to the loss of some topological information, see e.g. \cite{borgs2018p}. Nevertheless, $L^p$ graphons provide an expressive asymptotic approximation of the finite case which we show both theoretically and empirically on the next pages. In contrast to standard graphons which are limited to dense graphs, $L^p$ graphons are more general and also apply to sparse graphs. An $L^p$ graphon is a symmetric, integrable function $W\colon \brk{0,1}^2 \rightarrow \mathbb{R}$ with $\norm{W}_p < \infty$  where the $L^p$ norm on graphons is $\norm{W}_p \coloneqq  \bc{\int_{\brk{0,1}^2} \abs{W(x,y)}^p \, \mathrm d x \, \mathrm d y }^{1/p}$ for $1 \leq p < \infty$ and the essential supremum if $p = \infty$.

To quantify whether a graphon is a good approximation for a sequence of finite graphs, we associate every finite graph $G = \bc{V(G), E(G)}$ with a graphon $W^G$. For a graph $G$ with $N$ nodes, we partition the unit interval $[0,1]$ into $N$ intervals $I_1, \ldots, I_N$ of equal length. Then, the function $W^G$ is assigned a constant value on each square $I_i \times I_j$ ($i, j \in V(G)$) which is equal to one if there is an edge between the nodes $i,j$ in $G$ and zero otherwise. Thus, $W^G$ is by construction a step-function and therefore often called step-graphon. To compare some graph $G$ and graphon $W$, we can simply compare the graphons $W$ and $W^G$ in the space of graphons. Instead of $W^G$ itself, we frequently consider the normalized associated graphon $W_G / \norm{G}_1$ to derive results that also hold in the sparse case. To measure how close two graphons are, the cut norm is a natural candidate and possesses many useful properties, see e.g. \cite{lovasz2012large} for a detailed discussion. For a graphon $W\colon \brk{0,1}^2 \rightarrow \mathbb{R}$, define the cut norm by 
\[
\norm{W}_\square \coloneqq \sup_{S,T \subseteq \brk{0,1}} \abs{\int_{S \times T} W(x,y) \, \mathrm dx \, \mathrm dy},
\]
 where $S$ and $T$ range over the measurable subsets of $\brk{0,1}$.

\begin{figure*}
    \centering
    \includegraphics[width=0.99\linewidth]{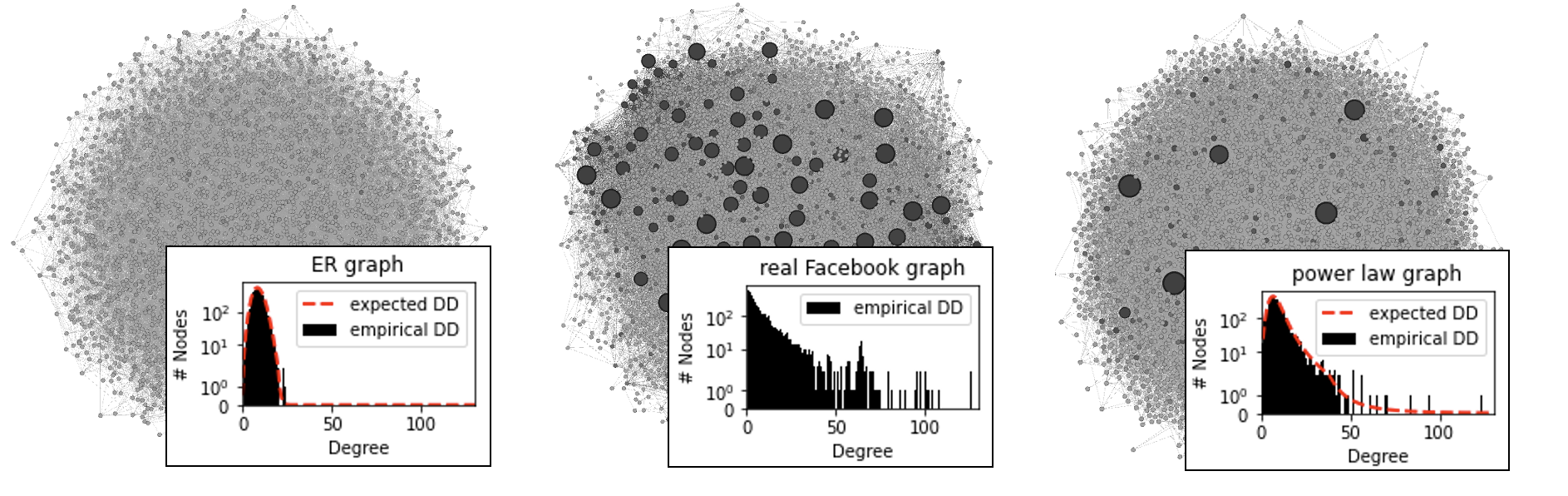}
    \caption{Three networks and their empirical and mathematically expected degree distributions (DD): \Erdos-\Renyi\ graph (left), real-world Facebook network (middle, data from \cite{rozemberczki2019gemsec}), power law graph (right, for expected DD see \cite{bollobas2007phase}); highly connected nodes are larger and darker: the Facebook graph has some nodes with high degrees and small and medium degrees otherwise. The power law graph generated by an $L^p$ graphon is qualitatively similar. All nodes in the ER graph generated by a standard graphon have degrees smaller than thirty which contradicts the real-world network. Other standard graphons, e.g. ranked attachment (\cite{borgs2011limits}), yield similar results as the ER graph but are omitted for space reasons. Each graph consists of 3892 nodes and around 17500 edges to match the real data set.}
    \label{fig:network-compare}
\end{figure*}

Starting with a graphon, we can use the following well-established construction to generate sparse random graphs with $N$ nodes where we assume without loss of generality that the vertices are labeled by $1, \ldots, N$. We choose $x_1, \ldots, x_N$ i.i.d. uniformly at random in $[0,1]$ and fix some $\rho > 0$. For all vertex pairs $1 \leq i < j \leq N$ there is an edge between $i$ and $j$ with probability $\min \cbc{\rho W(x_i, x_j), 1}$ which yields a sparse random graph $\mathbf{G} \bc{N, W, \rho}$. A sequence of sparse random graphs generated by this method converges to the generating graphon $W$ in the cut norm, see \cite[Theorem 2.14]{borgs2019L}.

\begin{assumption} \label{ass:W}
The sequence of normalized step-graphons $(W_N)_{N \in \mathbb N}$ converges in cut norm $\left\Vert \cdot \right\Vert_{\square}$ or equivalently in operator norm $\left\Vert \cdot \right\Vert_{L_\infty \to L_1}$ (see \cite{lovasz2012large}) as $N \to \infty$ to some graphon $W \in \mathcal W_0$, i.e.
\begin{align} \label{eq:Wconv}
    \left\Vert \rho_N^{-1} W_N  - W \right\Vert_{\square} \to 0, \quad \left\Vert \rho_N^{-1} W_N - W \right\Vert_{L_\infty \to L_1} \to 0 \, .
\end{align}
\end{assumption}

The limiting graphon in Assumption \ref{ass:W} is only guaranteed to exist for so-called $L^p$ upper regular graph sequences, for details see \cite{borgs2018p}, \cite{borgs2019L}. This implicit assumption means that our approach does not apply to arbitrary sequences of sparse random graphs. If the average degree in the graph sequence does not tend to infinity as $N \to \infty$, $L^p$ upper regularity is not fulfilled. Thus, for example, the asymptotic behavior of ultra-sparse graph sequences is beyond the scope of $L^p$ graphons.
Nevertheless, $L^p$ graphons are the limit of crucial sparse graph sequences such as power law graphs which cannot be provided by standard graphons. In our paper, 'power law' refers to the tail of the distribution.
Figure~\ref{fig:network-compare} shows the advantages of $L^p$ graphons over standard graphons using an exemplary real-world network (data from \cite{rozemberczki2019gemsec}). 
The examples in the next sections are usually based on power law graphons $W \colon \brk{0,1}^2 \rightarrow \mathbb{R}$ with $W(x,y) = (1 - \alpha)^2 (x y)^{- \alpha}$ where $\alpha \in (0,1)$, see \cite{borgs2018p} for details.

\paragraph{Smoothing Step Graphons.} 
For the theoretical analysis of the OMD algorithm in the next sections, we introduce the concept of smoothed step graphons which is new to the best of our knowledge. The basic idea is to smooth the borders of the steps. Then, the smoothed step graphon is Lipschitz continuous but still close to the original step graphon as we decrease the width $\xi$ of the border regions.

Consider an arbitrary step graphon $W_s$ on the unit interval partitioned into $M$ subintervals $\mathcal{I}_1, \ldots, \mathcal{I}_M$ of equal length $1/M$ such that $W_s (x, y) = w_{i, j} \geq 0$ for all $(x,y) \in \mathcal{I}_i \times \mathcal{I}_j, 1 \leq i, j \leq M$. Then, for an arbitrary but fixed $0 < \xi < 1/(2M)$, we define the corresponding smoothed step graphon $W_{s, \xi}$ as follows. For all $(x,y) \in \left\{(x,y) \in [0,1]^2: (x \leq \xi) \lor (x \geq 1 -\xi) \right\}$ and $(x,y) \in \left\{ (y \leq \xi) \lor (y \geq 1 -\xi) \right\}$ we define $W_{s, \xi} (x,y) \coloneqq W_s (x,y)$. The values of the two graphons are also defined to be identical for $(x,y)$ with $(x,y) \in \widetilde{\mathcal{I}}_i \times \widetilde{\mathcal{I}}_j$ for some $1 \leq i, j \leq M$ where $\widetilde{\mathcal{I}}_i \coloneqq [(i-1)/M + \xi, i/M - \xi)$ for all $1 \leq i \leq M$. In contrast to that, if $x \in \widetilde{\mathcal{I}}_i$ and $y \in [ j/M - \xi, j/M + \xi)$ for some $1 \leq i,j \leq M$, we have $W_{s, \xi} (x,y)   \coloneqq (\frac{1}{2} - \frac{y - j/M}{2 \xi}) w_{i, j} + \frac{y - j/M + \xi}{2 \xi} w_{i, j+1}$ and analogously for $x$ and $y$ with switched roles. Finally, if both $x \in [ i/M - \xi, i/M + \xi)$ and $y \in [ j/M - \xi, j/M + \xi)$ for some $1 \leq i,j \leq M$, we define
\begin{align*}
    W_{s, \xi} (x,y)   \coloneqq  &\left(\frac{1}{2} - \frac{x - i/M}{2 \xi}\right) \left(\frac{1}{2} - \frac{y - j/M}{2 \xi} \right) w_{i, j} \\
    & + \left(\frac{1}{2} - \frac{x - i/M}{2 \xi} \right) \frac{y - j/M + \xi}{2 \xi} w_{i, j+1} \\
    &+  \frac{x - i/M + \xi}{2 \xi} \left(\frac{1}{2} - \frac{y - j/M}{2 \xi} \right) w_{i + 1, j} \\
    &+ \frac{x - i/M + \xi}{2 \xi} \cdot \frac{y - j/M + \xi}{2 \xi}  w_{i + 1, j + j} \, .
\end{align*}
Note that $W_{s, \xi}$ is Lipschitz continuous and that by construction we have $\norm{W_{s, \xi} - W_s}_\square \leq 4 M \xi \cdot \max_{i, j} w_{i,j} $ which approaches zero as $\xi \to 0$.

\section{MODEL} \label{sec:model}

\paragraph{Finite Agent Model.} 
For the finite case, we assume that there are $N$ agents with finite state and action spaces $\mathcal{X}$ and $\mathcal{U}$, respectively. The agents implement actions at time points $\mathcal{T} \coloneqq \cbc{0, \ldots, T - 1}$ with terminal time point $T$. The interactions between individuals are modeled by a graph $G_N = (V_N, E_N)$ where each vertex represents one agent and each edge a connection between two agents. For an arbitrary finite set $A$ we denote by $\mathcal{P} (A)$ the set of all probability measures on $A$ and by $\mathcal{B} (A)$ the set of all bounded measures on $A$. Thus, the space of policies is defined as $\Pi \coloneqq \mathcal{P}(\mathcal{U})^{\mathcal{T} \times \mathcal{X}}$ and a policy of agent $i$ is denoted by $\pi^i = \bc{\pi^i_t}_{t \in \mathcal{T}} \in \Pi$ correspondingly. Furthermore, agents in the model are assumed to implement Markovian feedback policies such that they only consider local state information. Formally, this is captured by defining for all $t \in \mathcal T$ and  $i \in V_N$ the model dynamics
\begin{align}
	U^i_t \sim \pi^i_t(\cdot \mid X^i_t), \quad X^i_{t+1} \sim P(\cdot \mid X^i_t, U^i_t, \mathbb G^i_t)
\end{align}
with $X^i_0 \sim \mu_0$ for some transition kernel $P \colon \mathcal{X} \times \mathcal{U} \times \mathcal{B}(\mathcal{X}) \to \mathcal{P}(\mathcal{X})$ and the neighborhood state distribution
\begin{align}
	\mathbb G^i_t \coloneqq \frac 1 {N\rho_N} \sum_{j \in V_N} \boldsymbol{1}_{\cbc{i j \in E_N}} \delta_{X^j_t}
\end{align}
for each agent $i$ with $\delta$ being the Dirac measure and $\mathbb G^i_t \in \mathcal{B}(\mathcal{X})$ for all $i \leq N$ by definition. The normalization factor $\rho_N$ ensures that the neighborhood distribution does not converge to a vector of zeros as $N$ approaches infinity. Here, $\rho_N$ is assumed to have the same asymptotic order as the edge density of $G_N$, i.e. $\rho_N = \Theta \bc{\abs{E_N} / N^2}$ as $N \to \infty$. Each agent faces a reward function $r \colon  \mathcal{X} \times \mathcal{U} \times \mathcal{B}(\mathcal{X}) \to \mathbb{R}$ which yields her reward depending on her state, action, and the state distribution of her neighbors. Agents competitively try to maximize their expected sum of rewards
\begin{align}
	J_i^N(\pi^1, \ldots, \pi^N) \coloneqq \E \left[ \sum_{t=0}^{T-1} r \bc{X_{t}^i, U_{t}^i, \mathbb G^i_t} \right].
\end{align}
Note that we can handle the infinite-horizon discounted objective case analogously, see e.g. \cite{cui2021learning}. Finding equilibria for this type of model requires a suitable equilibrium concept. Although the classical Nash equilibrium notion (see, e.g. \cite{carmona2018probabilistic}) seems to be a natural candidate, its definition requires that no agent has an incentive to unilaterally deviate from the current policy. As we are primarily interested in an approximation via $L^p$ graphons, this equilibrium concept is to strict. Even in the limit $N \to \infty$ there can always occur (small) subgroups of agents whose graph connections deviate from the structure of the underlying graphon. Therefore, we work with the $(\epsilon, p)$-Markov-Nash equilibrium (see, for example \cite{carmona2004nash}, \cite{elie2020convergence}, \cite{cui2021learning}) which only requires optimality for a fraction $1 - p$ of all individuals. This fraction will increase, $(1 - p) \to 1$ as $N \to \infty$.

\begin{definition}
An $(\epsilon, p)$-Markov-Nash equilibrium (MNE) for $\epsilon, p > 0$ is defined as a tuple of policies $\boldsymbol{\pi} = \bc{\pi^1, \ldots, \pi^N} \in \Pi^N$ such that for any $i \in \mathcal W_N$ we have
\begin{align}
	J_i^N \bc{\boldsymbol{\pi}} \geq \sup_{\pi \in \Pi} J_i^N \bc{\pi^1, \ldots,  \pi^{i -1}, \pi,  \pi^{i+1}, \ldots,  \pi^N} - \epsilon
\end{align}
for some $\mathcal W_N \subseteq V_N$ with $\abs{\mathcal W_N} \geq \lfloor (1-p) N \rfloor$ such that $\mathcal W_N$ contains at least $\lfloor (1-p) N \rfloor$ agents.
\end{definition}

\paragraph{Mean Field Model.} 
The $L^p$ graphon mean field model (LPGMFG) constitutes the limit of the finite agent model as $N \to \infty$ and provides a reasonable approximation for the finite case. Before we formalize this claim and provide rigorous statements, we introduce the LPGMFG itself. The main difference to the $N$-agent model is that we now consider an infinite number of agents $\alpha \in \mathcal{I} \coloneqq[0,1]$. Thus, $\mathcal{M}_t \coloneqq \mathcal{P} (\mathcal{X})^{\mathcal{I}}$ denotes the space of measurable state marginal ensembles at time $t$, and $\boldsymbol{\mathcal{M}}\coloneqq \mathcal{P} (\mathcal{X})^{\mathcal{I} \times \mathcal{T}}$ the space of measurable mean field ensembles. Here, measurable means that $\alpha \mapsto \mu_t^\alpha (x)$ is measurable for all $\boldsymbol{\mu} \in \boldsymbol{\mathcal{M}}, t \in \mathcal{T}, x \in \mathcal{X}$. Analogously, a space of uniformly Lipschitz, measurable policy ensembles $\boldsymbol{\Pi} \subseteq \Pi^\mathcal{I}$ is defined such that $\alpha \mapsto \pi^\alpha_t (u \vert x)$ is measurable and $L_{\boldsymbol \Pi}$-Lipschitz for any $\boldsymbol{\pi} \in \boldsymbol{\Pi}, t \in \mathcal{T}$, $u \in \mathcal{U}, x \in \mathcal{X}$. Intuitively, a policy ensemble $\boldsymbol{\pi} \in \boldsymbol{\Pi}$ includes an infinite number of policies $\pi^\alpha \in \Pi$ where each policy is associated with one agent $\alpha$. State and action variables are defined for all $(\alpha, t) \in \mathcal I \times \mathcal T$ as
\begin{align} \label{eq:dynamics}
	 U^\alpha_t \sim \pi^\alpha_t(\cdot \mid X^\alpha_t), \quad X^\alpha_{t+1} \sim P(\cdot \mid X^\alpha_t, U^\alpha_t, \mathbb G^\alpha_t), 
\end{align}
with $X^\alpha_0 \sim \mu_0$ where the deterministic neighborhood MF of agent $\alpha$ for some deterministic MF $\boldsymbol{\mu} \in \boldsymbol{\mathcal{M}}$ is
\begin{align} \label{eq:empneighbormf}
	\mathbb G^\alpha_t \coloneqq \int_{\mathcal I} W(\alpha, \beta) \mu^\beta_t \, \mathrm d\beta
\end{align}
 with $\mathbb G^\alpha_t \in \mathcal{B} (\mathcal{X})$ by definition. Each agent tries to competitively maximize her rewards given by
\begin{align}
	J^{\boldsymbol \mu}_{\alpha}\bc{\pi^\alpha} \coloneqq \E \left[ \sum_{t=0}^{T-1} r \bc{X_{t}^\alpha, U_{t}^\alpha, \mathbb G^\alpha_t} \right].
\end{align}
Now, it remains to adapt the Nash equilibrium concept to the LPGMFG case.  Thus, we introduce two functions $\Psi \colon \boldsymbol{\Pi} \to \boldsymbol{\mathcal{M}}$ and $\Phi \colon \boldsymbol{\mathcal{M}} \to 2^{\boldsymbol{\Pi}}$. $\Psi$ maps a policy ensemble $\boldsymbol{\pi} \in \boldsymbol{\Pi}$ to the mean field ensemble $\boldsymbol{\mu} = \Psi (\boldsymbol{\pi}) \in \boldsymbol{\mathcal{M}}$ generated by $\boldsymbol{\pi}$ which is formalized by the recursive equation
\begin{align}
	\mu_{t+1}^\alpha \bc{x} = \sum_{\substack{x' \in \mathcal X \\ u \in \mathcal U}} \mu_{t}^\alpha \bc{x'} \pi_t^\alpha \bc{u \vert x'} P \bc{x \vert x', u, \mathbb G^\alpha_t}
\end{align}
for all $\alpha \in [0,1]$ with $\mu_0^\alpha \equiv \mu_0$. The second map $\Phi\colon \boldsymbol{\mathcal{M}} \to 2^{\boldsymbol{\Pi}}$ takes a mean field ensemble $\boldsymbol{\mu} \in \boldsymbol{\mathcal{M}}$ and maps it to the set of  policy ensembles $\Phi \bc{\boldsymbol{\mu}} \subseteq 2^{\boldsymbol{\Pi}}$ that are optimal with respect to $\boldsymbol{\mu}$, i.e. $\pi^\alpha = \argmax_{\pi \in \Pi} J^{\boldsymbol \mu}_{\alpha}\bc{\pi^\alpha}$ for all $\alpha \in [0,1]$.
With the above definitions, we can state the equilibrium concept for LPGMFGs, namely the $L^p$ graphon mean field equilibrium (GMFE).
\begin{definition}
	A GMFE is a tuple $\bc{\boldsymbol{\mu}, \boldsymbol{\pi}} \in \boldsymbol{\Pi} \times \boldsymbol{\mathcal M}$ such that $\boldsymbol{\pi} \in \Phi 
	\left(\boldsymbol{\mu}\right)$ and $\boldsymbol{\mu} = \Psi (\boldsymbol{\pi})$.
\end{definition}
We also refer to the policy part of a GMFE as its Nash Equilibrium (NE). Intuitively, a GMFE consists of a policy ensemble $\boldsymbol{\pi}$ and a MF ensemble $\boldsymbol{\mu}$ such that $\boldsymbol{\pi}$ generates $\boldsymbol{\mu}$ and is also an optimal response to the generated MF.
We will frequently use a Lipschitz assumption common in the MFG literature (\cite{bayraktar2020graphon},  \cite{carmona2018probabilistic}, \cite{cui2021learning}) to enable the derivation of expressive theoretical results. The power law graphon, however, is not Lipschitz, so we derive a Lipschitz cutoff version in Appendix~\ref{app:cut_off}. Since this cutoff power law graphon does not yield qualitatively different results compared to the power law graphon, it is omitted from the main text.
\begin{assumption} \label{ass:Lip}
	Let $r$, $P$, $W$ be Lipschitz continuous with Lipschitz constants $L_r, L_P, L_W > 0$, or alternatively there exist $L_W > 0$ and disjoint intervals $\{\mathcal I_1, \ldots,\mathcal I_Q\}$, $\cup_{i} \mathcal I_i = \mathcal I$ s.t. $\forall i,j \leq Q$ and $\forall (x,y),(\tilde x,\tilde y) \in \mathcal I_i \times \mathcal I_j$,
	\begin{align} \label{eq:blockwiseLip}
		|W(x,y) - W(\tilde x, \tilde y)| \leq L_W (|x - \tilde x| + |y - \tilde y|).
	\end{align}
\end{assumption}

Under Assumption~\ref{ass:Lip}, the model defined above has a GMFE which is formalized by the next theorem.

\begin{theorem} \label{thm:existence}
Under Assumption~\ref{ass:Lip} and for Lipschitz $W$, there exists a GMFE $(\boldsymbol \pi, \boldsymbol \mu) \in \boldsymbol \Pi \times \boldsymbol{\mathcal M}$.
\end{theorem}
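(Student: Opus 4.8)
The plan is to establish existence of a GMFE via a fixed-point argument on the composite map $\Gamma \coloneqq \Psi \circ \Phi$, applying a Schauder-type fixed-point theorem (Kakutani--Fan--Glicksberg for the set-valued $\Phi$, or simply Brouwer/Schauder if I can single out a continuous selection). The first step is to set up the right topology on the space of mean field ensembles $\boldsymbol{\mathcal M} = \mathcal P(\mathcal X)^{\mathcal I \times \mathcal T}$. Since $\mathcal X$ and $\mathcal T$ are finite, for each fixed $\alpha$ the object $\boldsymbol{\mu}^\alpha$ lives in a finite-dimensional simplex product, so I would topologize $\boldsymbol{\mathcal M}$ as a subset of $L^2(\mathcal I; \mathbb R^{|\mathcal X| \cdot T})$ (or an analogous weak-$*$/product topology) on which the measurability constraint defines a convex, bounded, and—crucially—compact set. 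Establishing convexity of $\boldsymbol{\mathcal M}$ is immediate since convex combinations of probability measures remain probability measures and measurability is preserved; compactness follows from the finiteness of $\mathcal X, \mathcal T$ together with a uniform integrability bound coming from $\norm{W}_p < \infty$ on the neighborhood operator $\mathbb G^\alpha_t = \int_{\mathcal I} W(\alpha,\beta)\mu^\beta_t\,\mathrm d\beta$.

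The core analytic work, and what I expect to be the main obstacle, is proving continuity of $\Gamma$. I would decompose this into three links. First, the map $\boldsymbol{\mu} \mapsto (\mathbb G^\alpha_t)_{\alpha,t}$ defined by \eqref{eq:empneighbormf} must be shown continuous; here the $L^p$-graphon structure matters, because for sparse $W$ the integral operator is no longer bounded into $L^\infty$, so I would use the Lipschitz continuity of $r$ and $P$ in their $\mathcal B(\mathcal X)$-argument together with Hölder's inequality against $\norm{W}_p$ to control how perturbations in $\boldsymbol{\mu}$ propagate to the neighborhood measures. Second, $\boldsymbol{\mu} \mapsto \Phi(\boldsymbol{\mu})$, the best-response map, must have closed graph and convex, nonempty values; nonemptiness follows from finite-horizon dynamic programming (backward induction over $\mathcal T$ yields an optimal Markovian policy), and upper hemicontinuity follows from continuity of the reward $J^{\boldsymbol\mu}_\alpha$ in $\boldsymbol{\mu}$, which in turn reduces to continuity of $\mathbb G^\alpha_t$ via the Lipschitz assumptions on $r$ and $P$. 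Third, $\boldsymbol{\pi} \mapsto \Psi(\boldsymbol{\pi})$ is continuous by the recursive forward equation, since it is a finite composition of continuous operations once the neighborhood measures depend continuously on the state marginals.

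Having assembled these pieces, I would invoke the Kakutani--Fan--Glicksberg fixed-point theorem: $\Phi$ is an upper hemicontinuous correspondence with nonempty, convex, compact values from the convex compact set $\boldsymbol{\mathcal M}$ into $\boldsymbol{\Pi}$, and composing with the continuous single-valued $\Psi$ yields a correspondence $\Gamma$ satisfying the same hypotheses, so a fixed point $\boldsymbol{\mu}^\star \in \Gamma(\boldsymbol{\mu}^\star)$ exists. By construction any such fixed point, paired with a corresponding best response $\boldsymbol{\pi}^\star \in \Phi(\boldsymbol{\mu}^\star)$ with $\boldsymbol{\mu}^\star = \Psi(\boldsymbol{\pi}^\star)$, is precisely a GMFE. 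The delicate point throughout is that the graphon is only $L^p$ rather than bounded, so every continuity and compactness estimate must route through $\norm{W}_p$ and Hölder rather than through an $L^\infty$ bound; I would expect the measurable-selection and equicontinuity details needed to make the best-response correspondence behave well in the chosen function-space topology to absorb most of the technical effort, and the blockwise-Lipschitz alternative in Assumption~\ref{ass:Lip} to be used to handle the non-Lipschitz power-law graphon by a cutoff approximation argument deferred to the appendix.
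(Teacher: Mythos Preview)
Your fixed-point strategy via Kakutani--Fan--Glicksberg is correct and is precisely what the paper does: its proof consists of a single sentence stating that the result follows by adapting the proof of Theorem~1 in \cite{cui2021learning} to the present setting, and that reference establishes existence by exactly this route (compact convex $\boldsymbol{\mathcal M}$, upper hemicontinuous best-response correspondence, continuous forward map $\Psi$).

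Where you diverge is in the technical emphasis. You repeatedly stress that the graphon is ``only $L^p$ rather than bounded'' and propose to route all continuity estimates through $\norm{W}_p$ and H\"older's inequality. But Theorem~\ref{thm:existence} is stated explicitly \emph{for Lipschitz $W$}; a Lipschitz function on the compact square $[0,1]^2$ is automatically bounded, so the neighborhood operator $\boldsymbol{\mu}_t \mapsto \mathbb G^\alpha_t$ is bounded from $L^1$ into $L^\infty$ and the standard arguments from \cite{cui2021learning} carry over without modification. The genuinely unbounded power-law graphon is handled separately by the cutoff construction in Appendix~\ref{app:cut_off}, not by a more delicate existence proof. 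In short, your plan works, but the $L^p$/H\"older layer you anticipate is unnecessary for this particular theorem and would only add noise.
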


\begin{proof}
    The existence of a GMFE follows from \cite[Theorem~3.3]{saldi2018markov} for the extended state space $\mathcal{X} \times [0,1]$. See also \cite[Proof of Theorem~1]{cui2021learning}.
\end{proof}

\paragraph{Mean Field Approximation.}
The proofs of all theoretical results can be found in the Appendix. This paragraph relates the finite agent model to the MF model by showing that LPGMFGs yield an increasingly accurate approximation for the $N$-agent case as the number of agents grows. We emphasize that the LPGMFG yields an approximation for the $N$-agent game for all $N$ at once. Both the theoretical results as well as the empirical findings show that the accuracy of this approximation increases with the number $N$ of agents, see the next sections for details. As a consequence, the LPGMFG concept provides a scalable and increasingly accurate approximation for otherwise intractable multi agent problems with a large number of individuals. By Theorem~\ref{thm:existence}, there exists a GMFE $(\boldsymbol \pi, \boldsymbol \mu)$ which yields an approximate NE for the $N$-agent problem through the map $\Gamma_N (\boldsymbol{\pi}) \coloneqq \bc{\pi^1, \ldots, \pi^N} \in \Pi^N$  defined by $\pi_t^i (u \vert x) \coloneqq \pi_t^{\alpha_i} (u \vert x)$ for all $\alpha \in \mathcal{I}$, $t \in \mathcal{T}$, $x \in \mathcal{X}, u \in \mathcal{U}$ with $\alpha_i = i/N$.

For a theoretical comparison, we lift both the policies and empirical distributions in the finite agent model to the continuous domain $\mathcal{I}$. Thus, for an $N$-agent policy tuple $\bc{\pi^1, \ldots, \pi^N} \in \Pi^N$  the corresponding step policy ensemble $\boldsymbol{\pi}^N \in \boldsymbol{\Pi}$  and the random empirical step measure ensemble $\boldsymbol{\mu}^N \in \boldsymbol{\mathcal{M}}$ are defined by $\pi_t^{N, \alpha} \coloneqq \sum_{i \in V_N} \boldsymbol{1}_{\cbc{\alpha \in (\frac{i-1}{N}, \frac{i}{N}]}} \cdot \pi_t^i$ and $\mu_t^{N, \alpha} \coloneqq \sum_{i \in V_N} \boldsymbol{1}_{\cbc{\alpha \in (\frac{i-1}{N}, \frac{i}{N}]}} \cdot \delta_{X_t^j}$
for all $\alpha \in \mathcal{I}$ and $t \in \mathcal{T}$. For notational convenience, we furthermore define for any $f\colon \mathcal{X} \times \mathcal{I} \to \mathbb{R}$ and state marginal ensemble $\boldsymbol{\mu}_t \in \mathcal{M}_t$, $\boldsymbol{\mu}_t \bc{f} \coloneqq \int_{\mathcal{I}} \sum_{x \in \mathcal{X}} f (x, a) \mu_t^\alpha (x) \, \mathrm d \alpha$.
With these definitions in place, we state our first main theoretical result.

\begin{theorem} \label{thm:muconv}
Consider $\boldsymbol \pi \in \boldsymbol \Pi$ with $\boldsymbol \mu = \Psi(\boldsymbol \pi)$. Under Assumption~\ref{ass:W} and the $N$-agent policy $(\pi^1, \ldots, \pi^{i-1}, \hat \pi, \pi^{i+1}, \ldots, \pi^N) \in \Pi^N$ with $(\pi^1, \pi^2, \ldots, \pi^N) = \Gamma_N(\boldsymbol \pi) \in \Pi^N$, $\hat \pi \in \Pi$, $t \in \mathcal T$, we have for all measurable functions $f \colon \mathcal X \times \mathcal I \to \mathbb R$ uniformly bounded by some $M_f > 0$, that
\begin{align}
    &\E \left[ \left| \boldsymbol \mu^N_t(f) - \boldsymbol \mu_t(f) \right| \right] \to 0
\end{align}
uniformly over all possible deviations $\hat \pi \in \Pi, i \in V_N$. If the graphon convergence in Assumption~\ref{ass:W} is up to rate $ O(1/\sqrt N)$, then this rate of convergence is the same.
\end{theorem}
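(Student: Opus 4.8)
The plan is to prove the statement by induction on the time index $t\in\mathcal T$, carrying as the induction hypothesis that $\E[\abs{\boldsymbol\mu^N_t(g)-\boldsymbol\mu_t(g)}]\to 0$ (at rate $O(1/\sqrt N)$ in the quantitative case) for \emph{every} uniformly bounded measurable $g\colon\mathcal X\times\mathcal I\to\mathbb R$, uniformly over the deviating agent $i\in V_N$ and its deviation $\hat\pi\in\Pi$. For the base case $t=0$, the initial states $X_0^i\sim\mu_0$ are i.i.d.\ and $\mu_0^\alpha\equiv\mu_0$, so $\boldsymbol\mu^N_0(f)=\sum_{i}\int_{(i-1)/N}^{i/N} f(X_0^i,\alpha)\,\mathrm d\alpha$ is a sum of $N$ independent terms of range $O(M_f/N)$ with mean exactly $\boldsymbol\mu_0(f)$; a Hoeffding/McDiarmid bound gives $\E[\abs{\boldsymbol\mu^N_0(f)-\boldsymbol\mu_0(f)}]=O(M_f/\sqrt N)$, and the single deviation is irrelevant at $t=0$.

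For the inductive step I would condition on the state history $\mathcal F^X_t=\sigma(X^i_s:s\le t,\,i\in V_N)$ and split
\begin{align*}
\boldsymbol\mu^N_{t+1}(f)-\boldsymbol\mu_{t+1}(f)
={}&\underbrace{\bc{\boldsymbol\mu^N_{t+1}(f)-\E\brk{\boldsymbol\mu^N_{t+1}(f)\mid\mathcal F^X_t}}}_{\text{(I)}}\\
&+\underbrace{\bc{\E\brk{\boldsymbol\mu^N_{t+1}(f)\mid\mathcal F^X_t}-\boldsymbol\mu_{t+1}(f)}}_{\text{(II)}}.
\end{align*}
Conditionally on $\mathcal F^X_t$ the pairs $(U^i_t,X^i_{t+1})$ are independent across $i$, so (I) is a centered sum of $N$ conditionally independent terms of range $O(M_f/N)$; a conditional variance bound yields $\E[\abs{\text{(I)}}]\le\E[\sqrt{\operatorname{Var}(\boldsymbol\mu^N_{t+1}(f)\mid\mathcal F^X_t)}]=O(M_f/\sqrt N)$, uniformly over deviations.

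The heart of the argument is the bias term (II). Marginalizing the action gives $\E[f(X^i_{t+1},\alpha)\mid\mathcal F^X_t]=\sum_u\pi^i_t(u\mid X^i_t)\sum_x f(x,\alpha)P(x\mid X^i_t,u,\mathbb G^i_t)$, so I would introduce the deterministic bounded test function $\phi_t(x',\alpha)\coloneqq\sum_u\pi^\alpha_t(u\mid x')\sum_x f(x,\alpha)P(x\mid x',u,\mathbb G^\alpha_t)$ and note, via the $\Psi$-recursion, that $\boldsymbol\mu_{t+1}(f)=\boldsymbol\mu_t(\phi_t)$ while $\int_\mathcal I\sum_{x'}\phi_t(x',\alpha)\mu^{N,\alpha}_t(x')\,\mathrm d\alpha=\boldsymbol\mu^N_t(\phi_t)$. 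Thus (II) decomposes as the induction term $\text{(II.b)}\coloneqq\boldsymbol\mu^N_t(\phi_t)-\boldsymbol\mu_t(\phi_t)$ plus a replacement error (II.a) incurred by swapping, inside $P$ and $\pi$, the finite neighborhood measure $\mathbb G^i_t$ for $\mathbb G^\alpha_t$ and the block-constant policy $\pi^{\alpha_i}_t$ for $\pi^\alpha_t$. The term (II.b) vanishes in expectation directly by the induction hypothesis applied to the bounded $\phi_t$.

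The step I expect to be the main obstacle is controlling (II.a). Using the step-graphon identity $\mathbb G^i_t=\int_\mathcal I\rho_N^{-1}W_N(\alpha,\beta)\mu^{N,\beta}_t\,\mathrm d\beta$ for $\alpha$ in block $i$ together with \eqref{eq:empneighbormf}, I would write $\mathbb G^i_t-\mathbb G^\alpha_t=\int_\mathcal I[\rho_N^{-1}W_N-W](\alpha,\beta)\mu^{N,\beta}_t\,\mathrm d\beta+\int_\mathcal I W(\alpha,\beta)[\mu^{N,\beta}_t-\mu^\beta_t]\,\mathrm d\beta$ and, after passing through the Lipschitz kernel $P$, integrate $\abs{\mathbb G^i_t-\mathbb G^\alpha_t}$ over $\alpha$. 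The first integral is bounded by $\norm{\rho_N^{-1}W_N-W}_{L_\infty\to L_1}\to0$ from Assumption~\ref{ass:W} (this is exactly why the operator/cut norm is the right object: it delivers uniformity in $\alpha$ for free); the second is handled by applying the induction hypothesis for each fixed $\alpha$ to $g_{\alpha,x}(x',\beta)=W(\alpha,\beta)\boldsymbol1_{\{x'=x\}}$ and then integrating in $\alpha$ by dominated convergence, with integrable dominating function $2\abs{W(\alpha,\cdot)}$ since $W\in L^1$. The remaining discrepancies — replacing $\pi^{\alpha_i}_t$ by $\pi^\alpha_t$ and $f(\cdot,\alpha_i)$ by $f(\cdot,\alpha)$ over each block of width $1/N$ — are $O(1/N)$ by the block-wise Lipschitz continuity of $\boldsymbol\pi$ away from its $D_\pi$ discontinuities, the finitely many exceptional blocks contributing a further $O(D_\pi/N)$, and the single deviating agent contributes $O(M_f/N)$. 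Summing (I), (II.a), (II.b) and these remainders and iterating over the finite horizon $T$ closes the induction; tracking each bound shows that if $\norm{\rho_N^{-1}W_N-W}_\square=O(1/\sqrt N)$ then every contribution is $O(1/\sqrt N)$, so the rate is preserved. The genuine difficulty throughout is the simultaneous handling of two limits — graphon convergence and empirical-measure convergence — with uniformity in both $\alpha$ and the deviation $(\hat\pi,i)$, which the operator-norm bound and the dominated-convergence step are designed to deliver.
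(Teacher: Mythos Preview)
Your proposal is correct and follows essentially the same route as the paper. The paper also proceeds by induction on $t$, handles $t=0$ via independence and a second-moment bound, and for the inductive step writes a five-term telescoping chain that maps exactly onto your (I) (conditional-independence fluctuation), the three ingredients of your (II.a) (graphon replacement $\rho_N^{-1}W_N\to W$ via the $L_\infty\!\to\!L_1$ norm; policy replacement $\pi^{\alpha_i}\to\pi^\alpha$ via Lipschitz continuity plus $D_\pi$ exceptional blocks and the single deviator; and the induction hypothesis applied to $g_{\alpha,x}(x',\beta)=W(\alpha,\beta)\mathbf 1_{\{x'=x\}}$), and your (II.b). One minor slip: there is no ``$f(\cdot,\alpha_i)$ vs.\ $f(\cdot,\alpha)$'' replacement to make, since $\boldsymbol\mu^N_{t+1}(f)$ already integrates $f(\cdot,\alpha)$ over $\alpha$; the only block-level replacements are in $\pi$ and $\mathbb G$.
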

Based on Theorem~\ref{thm:muconv}, we can derive a central result of this paper which formalizes the capability of LPGMFGs to approximate finite $N$-agent models. In contrast to prior work, proving Theorem~\ref{thm:muconv} requires an additional mathematical effort which is discussed in Appendix~\ref{sec:appendix_overview}.
\begin{theorem} \label{thm:approxnash}
Consider a GMFE $(\boldsymbol \pi, \boldsymbol \mu)$ under Assumptions~\ref{ass:W} and \ref{ass:Lip}. For any $\varepsilon, p > 0$ there exists $N'$ such that for all $N > N'$, the policy $ \Gamma_N(\boldsymbol \pi) \in \Pi^N$ is an $(\varepsilon, p)$-MNE.
\end{theorem}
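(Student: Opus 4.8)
The plan is to transfer the mean field optimality of $\boldsymbol\pi$ to the finite system by showing that, for all but a $p$-fraction of agents, both the equilibrium payoff and every deviation payoff in the $N$-agent game are well approximated by their mean field counterparts. Concretely, fix an agent $i$ with $\alpha_i = i/N$ and write $(\pi^1,\ldots,\pi^N) = \Gamma_N(\boldsymbol\pi)$. Since $(\boldsymbol\pi,\boldsymbol\mu)$ is a GMFE, $\boldsymbol\pi \in \Phi(\boldsymbol\mu)$, so $\pi^{\alpha_i}$ is optimal against the limiting mean field, i.e. $J^{\boldsymbol\mu}_{\alpha_i}(\pi^{\alpha_i}) \geq J^{\boldsymbol\mu}_{\alpha_i}(\hat\pi)$ for every $\hat\pi \in \Pi$. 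If I can establish the two-sided reward approximation
\begin{align}
	\abs{J_i^N\bc{\pi^1,\ldots,\pi^N} - J^{\boldsymbol\mu}_{\alpha_i}\bc{\pi^{\alpha_i}}} \leq \varepsilon/3, \qquad \abs{J_i^N\bc{\pi^1,\ldots,\hat\pi,\ldots,\pi^N} - J^{\boldsymbol\mu}_{\alpha_i}\bc{\hat\pi}} \leq \varepsilon/3
\end{align}
uniformly over deviations $\hat\pi$, then chaining these through the mean field optimality by a triangle inequality yields $J_i^N(\pi^1,\ldots,\pi^N) \geq \sup_{\hat\pi} J_i^N(\pi^1,\ldots,\hat\pi,\ldots,\pi^N) - \varepsilon$, which is exactly the $(\varepsilon,p)$-Markov--Nash condition for agent $i$.

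The reward approximation reduces to approximating the neighborhood mean field. Using the step-graphon identity $\mathbb G^i_t(x) = \rho_N^{-1}\int_{\mathcal I} W_N(\alpha_i,\beta)\,\mu_t^{N,\beta}(x)\,\mathrm d\beta$ together with the limiting definition \eqref{eq:empneighbormf}, I would decompose
\begin{align}
	\mathbb G^i_t - \mathbb G^{\alpha_i}_t = \int_{\mathcal I} \rho_N^{-1} W_N(\alpha_i,\beta) \bc{\mu_t^{N,\beta} - \mu_t^\beta} \mathrm d\beta + \int_{\mathcal I} \bc{\rho_N^{-1} W_N(\alpha_i,\beta) - W(\alpha_i,\beta)} \mu_t^\beta \, \mathrm d\beta .
\end{align}
The first term is a bounded integral against the empirical fluctuation and is controlled by Theorem~\ref{thm:muconv} applied to $f(x,\beta) = \rho_N^{-1}W_N(\alpha_i,\beta)\,\boldsymbol{1}_{\cbc{x = x_0}}$, which is uniformly bounded since here $W$ is Lipschitz and hence bounded; crucially, Theorem~\ref{thm:muconv} holds uniformly over the deviating agent and over $\hat\pi$, so a single agent's deviation does not perturb this term. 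The second term involves only the deterministic limiting mean field and, after integrating over $\alpha_i$, is bounded by $\norm{\rho_N^{-1}W_N - W}_{L_\infty\to L_1}$, which vanishes by Assumption~\ref{ass:W}. Given closeness of the neighborhood mean fields, I would couple the finite-agent and mean field state--action trajectories of agent $i$ under the common policy (respectively, the common deviation $\hat\pi$) and propagate the neighborhood discrepancy through the Lipschitz kernel $P$ and reward $r$ by a standard induction over the finitely many time steps $t = 0,\ldots,T-1$, yielding the two reward bounds above.

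The $(\varepsilon,p)$ relaxation is what makes this work, and it enters precisely at the second (graphon) term, which is controlled only \emph{on average} over $\alpha_i$ through the operator norm, not pointwise for every agent. Writing $e_N(\alpha_i)$ for the resulting per-agent error and using that $\int_{\mathcal I} e_N(\alpha)\,\mathrm d\alpha \to 0$, Markov's inequality shows that the set of agents with $e_N(\alpha_i) > \varepsilon/3$ has measure at most $p/2$ for $N$ large. Adjoining the $O(D_\pi/N)$ agents whose index $\alpha_i$ lands near a discontinuity of $\boldsymbol\pi$ (where $\Gamma_N(\boldsymbol\pi)$ fails to match $\pi^{\alpha_i}$), I would take $\mathcal W_N$ to be the complement; it satisfies $\abs{\mathcal W_N} \geq \lfloor (1-p)N\rfloor$ for $N$ large, and the equilibrium inequality holds on $\mathcal W_N$ by the three-part estimate of the first paragraph.

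I expect the main obstacle to be the uniform control of the first (empirical-fluctuation) term over all deviations $\hat\pi$ and all agents $i$ simultaneously, together with the $N$- and $\alpha_i$-dependence of the test function $f = \rho_N^{-1}W_N(\alpha_i,\cdot)\,\boldsymbol{1}_{\cbc{x = x_0}}$: one must ensure the bound from Theorem~\ref{thm:muconv} is genuinely uniform across this family and does not degrade near the discontinuities of $\boldsymbol\pi$ or in the sparse, large-weight regions of the graphon, which is exactly where the cutoff of the power law graphon and the blockwise-Lipschitz structure of Assumption~\ref{ass:Lip} become essential.
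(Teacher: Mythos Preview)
Your overall strategy is correct and is exactly the paper's route: reduce the $(\varepsilon,p)$-Nash property to a uniform-over-deviations reward approximation, which in turn reduces to approximating the neighborhood mean field $\mathbb G^i_t$ by $\mathbb G^{\alpha_i}_t$ for all but a $p$-fraction of agents, and then feed this through the Lipschitz $P$ and $r$ by induction over $t$. The paper packages the neighborhood step as Lemma~\ref{lem:xconv} and then finishes by the argument you describe.

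The gap is in the first term of your decomposition. You claim that $f(x,\beta)=\rho_N^{-1}W_N(\alpha_i,\beta)\,\boldsymbol{1}_{\cbc{x=x_0}}$ is ``uniformly bounded since here $W$ is Lipschitz and hence bounded,'' and that Theorem~\ref{thm:muconv} therefore applies. This is false: $W_N$ is the $\{0,1\}$-valued step graphon of the realized $N$-vertex graph, so $\rho_N^{-1}W_N(\alpha_i,\cdot)$ takes values in $\{0,\rho_N^{-1}\}$, and $\rho_N^{-1}\to\infty$ in the sparse regime. Cut-norm (equivalently $L_\infty\to L_1$) convergence of $\rho_N^{-1}W_N$ to the bounded graphon $W$ gives no $L^\infty$ control whatsoever, so Theorem~\ref{thm:muconv}, which requires a fixed uniform bound $M_f$ on the test function, cannot be invoked here. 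You correctly flag exactly this as ``the main obstacle'' in your last paragraph, but the body of your sketch treats it as already settled.

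The paper closes this gap with a separate lemma (Lemma~\ref{thm:muconv2}) tailored to $N$-dependent test functions of precisely this shape: $f_N$ supported on a set $A_N$ of measure $O(c_N)$ and of size $O(c_N^{-1})$ there, with $c_N\to 0$ and $N c_N\to\infty$. For $f_N=\rho_N^{-1}W_N(\alpha_i,\cdot)\,\boldsymbol{1}_{\cbc{x=x_0}}$ the support in $\beta$ is the rescaled neighbor set of vertex $i$, so one takes $c_N$ of order $\rho_N$. The proof re-runs the induction of Theorem~\ref{thm:muconv} but replaces the crude bound $M_f$ by the product $(\sup f_N)\cdot(\text{measure of }A_N)=O(1)$ in the conditional-variance step, and in the graphon-approximation step extracts pointwise-a.e.\ control from the $L_\infty\to L_1$ convergence rather than a global bound. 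This lemma, not Theorem~\ref{thm:muconv}, is what is plugged into the first term of the decomposition inside Lemma~\ref{lem:xconv}; once that is in place, the rest of your plan goes through verbatim.
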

Intuitively, Theorem~\ref{thm:approxnash} states that the GMFE provides an increasingly accurate approximation of the $N$-agent problem as the number of agents goes up. 
Since the algorithmic computation of NE is in general intractable (\cite{conitzer2008new}, \cite{papadimitriou2001algorithms}, \cite{papadimitriou2007complexity}), the LPGMFGs approximation can overcome these difficulties by choosing $\epsilon$ and $p$ in Theorem~\ref{thm:approxnash} close to zero when the number $N$ of agents is sufficiently large.

\section{LEARNING LPGMFGS} \label{sec:learning}
\paragraph{Equivalence Class Method.}
For learning equilibria in LPGMFGs, we introduce equivalence classes (\cite{cui2021learning}). We discretize the continuous interval $\mathcal{I}$ of agents by some finite number $M$ of subintervals that form a partition of $\mathcal{I}$. For convenience, we usually assume that every subinterval has the same length. Then, all agents within one class, i.e. a subinterval, are approximated by the agent who is located at the center of the respective subinterval. Subsequently, we can solve the optimal control problem for each equivalence class separately by applying either backwards induction or RL. Although this formulation seems to resemble classical multi-population mean field games (MP MFGs) (\cite{huang2006large}, \cite{perolat2021scaling}) at first, the crucial advantages of LPGMFGs are that they are on the one hand rigorously connected to finite agent games. On the other hand, they can handle an uncountable number of agent equivalence classes that cannot be captured by the standard multi-class model. Beyond that, the just described learning method for LPGMFGs does not just provide an approximation for some finite $N$-agent problem with a fixed $N$. Instead, it yields an estimation for the $N$-agent problem for all arbitrary, large enough $N$ at once. The technical details of the approach can be found in Appendix~\ref{app:learning_methods}.

\paragraph{Online Mirror Descent (OMD).}
The discretized game generated by the equivalence class method can be interpreted as a MP MFG with $M$ populations. In the literature, the concept of OMD is used to learn equilibria in such MP MFGs (\cite{hadikhanloo2017learning}, \cite{perolat2021scaling}). Our paper leverages these concepts to learn LPGMFGs.

To prove convergence for the OMD algorithm, we have to ensure that a NE exists. Here, the discretized GMFG can be interpreted as a GMFG on the step-graphon $W_s$ created by discretization. To facilitate the theoretical analysis of the OMD algorithm, we consider the corresponding smoothed step graphon $W_{s, \xi}$ and the smoothed GMFG given by the dynamics $\hat{U}^\alpha_t \sim \pi^\alpha_t(\cdot \mid \hat{X}^\alpha_t)$ and $\hat{X}^\alpha_{t+1} \sim P(\cdot \mid \hat{X}^\alpha_t, \hat{U}^\alpha_t, \hat{\mathbb{G}}^\alpha_t)$,
with $\hat{X}^\alpha_0 \sim \mu_0$ for all $(\alpha, t) \in \mathcal{I} \times \mathcal{T}$ where $\hat{\mathbb{G}}$ is the neighborhood state distribution for the smoothed step graphon. One advantage of this approach is that the existence of a GMFE $(\boldsymbol{\mu}_{s, \xi}, \boldsymbol{\pi}_{s, \xi})$ is ensured by Theorem 1 for this GMFG. Also, for $\xi$ close enough to zero, the smoothed step graphon converges to the original step graphon in the cut norm. 
\begin{theorem}\label{thm:smoothed_reward}
Suppose that $(\boldsymbol{\mu}_{s, \xi}, \boldsymbol{\pi}_{s, \xi})$ is a GMFE in the smoothed version of the MP MFG on the step graphon $W$ under Assumption~\ref{ass:Lip}. Then, for every $\epsilon, p > 0$ there exists a $\xi' >0$ such that for all $0 < \xi < \xi'$ 
\begin{align}
    \sup_{\boldsymbol{\pi} \in \Pi} J_{\alpha, W}(\boldsymbol{\pi}_{s, \xi}) - J_{\alpha, W} (\boldsymbol{\pi}) \leq  \varepsilon
\end{align}
for all $\alpha \in \mathcal{J}$ for some $\mathcal{J} \subseteq \mathcal{I}$ with Lebesgue measure $\lambda (\mathcal{J}) \geq 1-p$.
\end{theorem}
This means that a GMFE in the smoothed version of the GMFG is an $(\epsilon, p)$-MNE for the discretized game. Combining this insight with existing results \cite[Theorem 5]{cui2021learning} indicates that the smoothed GMFE provides a good approximation for the finite agent case, but we leave a rigorous proof for future work.
We call a smoothed MP MFG weakly monotone if for any $\boldsymbol \pi, \boldsymbol \pi' \in \boldsymbol \Pi$ we have 
\begin{align}\label{def:mon_metric}
	\tilde{d}\bc{\boldsymbol{\pi}, \boldsymbol{\pi'}} &\coloneqq \int_{\mathcal{I}} \left[ J^{\boldsymbol \mu}_{\alpha}\bc{\pi^\alpha} + J^{\boldsymbol \mu'}_{\alpha}\bc{{\pi'}^\alpha} \right. \\
	&\qquad \qquad \left. - J^{\boldsymbol \mu}_{\alpha}\bc{{\pi'}^\alpha} -J^{\boldsymbol \mu'}_{\alpha}\bc{\pi^\alpha} \right] \mathrm d \alpha \leq 0 \nonumber
\end{align}
where $\boldsymbol \mu = \Psi(\boldsymbol \pi)$ and $\boldsymbol \mu' = \Psi(\boldsymbol \pi')$ are the MFs associated with the respective policies. If the inequality is strict $\forall \boldsymbol \pi \neq \boldsymbol \pi'$, we call the MP MFG strictly weakly monotone. 
\begin{assumption}\label{ass:swmonotone}
The smoothed MP MFG is strictly weakly monotone.
\end{assumption}
Weak monotonicity can be interpreted as agents preferring less crowded areas over crowded ones. 
Under Assumption~\ref{ass:swmonotone}, the NE guaranteed by Theorem~\ref{thm:existence} is unique.
\begin{lemma}\label{lem:omd_unique}
If the smoothed MP MFG satisfies Assumptions~\ref{ass:Lip} and \ref{ass:swmonotone}, it has a unique NE.
\end{lemma}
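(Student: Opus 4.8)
The plan is to argue by contradiction, using the strict weak monotonicity inequality \eqref{def:mon_metric} together with the best-response optimality that is built into the definition of a GMFE. Suppose there were two distinct Nash equilibria, i.e. policy ensembles $\boldsymbol\pi \neq \boldsymbol\pi'$ arising as the policy parts of two GMFE, with associated mean fields $\boldsymbol\mu = \Psi(\boldsymbol\pi)$ and $\boldsymbol\mu' = \Psi(\boldsymbol\pi')$. By the GMFE condition we have $\boldsymbol\pi \in \Phi(\boldsymbol\mu)$ and $\boldsymbol\pi' \in \Phi(\boldsymbol\mu')$, which by the definition of $\Phi$ means that for (almost) every agent $\alpha \in \mathcal I$ the policy $\pi^\alpha$ maximizes $J^{\boldsymbol\mu}_\alpha(\cdot)$ over $\Pi$, while ${\pi'}^\alpha$ maximizes $J^{\boldsymbol\mu'}_\alpha(\cdot)$ over $\Pi$.

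First I would exploit these two optimality statements agent-wise. Since $\pi^\alpha$ is optimal against $\boldsymbol\mu$, we have $J^{\boldsymbol\mu}_\alpha(\pi^\alpha) - J^{\boldsymbol\mu}_\alpha({\pi'}^\alpha) \geq 0$; since ${\pi'}^\alpha$ is optimal against $\boldsymbol\mu'$, we have $J^{\boldsymbol\mu'}_\alpha({\pi'}^\alpha) - J^{\boldsymbol\mu'}_\alpha(\pi^\alpha) \geq 0$. Regrouping the integrand of \eqref{def:mon_metric} as the sum of exactly these two brackets shows that it is nonnegative for (almost) every $\alpha$, and integrating over $\mathcal I$ yields $\tilde d(\boldsymbol\pi, \boldsymbol\pi') \geq 0$. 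This directly contradicts Assumption~\ref{ass:swmonotone}, which requires $\tilde d(\boldsymbol\pi, \boldsymbol\pi') < 0$ whenever $\boldsymbol\pi \neq \boldsymbol\pi'$. Hence $\boldsymbol\pi \neq \boldsymbol\pi'$ is untenable, so $\boldsymbol\pi = \boldsymbol\pi'$, and consequently $\boldsymbol\mu = \Psi(\boldsymbol\pi) = \Psi(\boldsymbol\pi') = \boldsymbol\mu'$, giving uniqueness of the full GMFE and in particular of its NE. Existence of at least one equilibrium is supplied by Theorem~\ref{thm:existence}, so only uniqueness needs this argument.

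I expect the main obstacle to be the careful treatment of the agent-wise optimality and the attendant measurability, rather than the sign bookkeeping. Because \eqref{def:mon_metric} integrates the individual objectives over $\alpha$, I must ensure that the best-response condition $\boldsymbol\pi \in \Phi(\boldsymbol\mu)$ really delivers the pointwise inequalities $J^{\boldsymbol\mu}_\alpha(\pi^\alpha) \geq J^{\boldsymbol\mu}_\alpha(\cdot)$ for almost every $\alpha$ through measurable selections, and that $\alpha \mapsto J^{\boldsymbol\mu}_\alpha$ and $\alpha \mapsto J^{\boldsymbol\mu'}_\alpha$ are integrable so that $\tilde d$ is well defined and may be split and recombined termwise. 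This is precisely where the passage to the \emph{smoothed} MP MFG is used: the smoothing (entropy-type regularization) makes the best-response map single-valued and the objectives sufficiently regular, and I would invoke those properties of the smoothed game established in Appendix~\ref{app:learning_methods} to justify the manipulations. Once these regularity points are in place, the contradiction above closes the argument immediately.
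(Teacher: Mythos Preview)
Your proposal is correct and follows essentially the same contradiction argument as the paper: existence from Theorem~\ref{thm:existence}, then the two best-response inequalities $\int_{\mathcal I} J^{\boldsymbol\mu}_\alpha(\pi^\alpha) - J^{\boldsymbol\mu}_\alpha({\pi'}^\alpha)\,\mathrm d\alpha \geq 0$ and $\int_{\mathcal I} J^{\boldsymbol\mu'}_\alpha({\pi'}^\alpha) - J^{\boldsymbol\mu'}_\alpha(\pi^\alpha)\,\mathrm d\alpha \geq 0$ force $\tilde d(\boldsymbol\pi,\boldsymbol\pi') \geq 0$, contradicting strict weak monotonicity. The paper's proof is terser and does not dwell on the measurability and smoothing points you raise, so your additional caution is not strictly needed for the argument to go through.
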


We define the OMD algorithm as in \cite{perolat2021scaling} and consider the continuous time case (CTOMD) where we denote the time of the algorithm by $\tau > 0$. Then, we obtain the following convergence result.
\begin{theorem}\label{thm:unique_conv}
	If the smoothed MP MFG satisfies Assumptions~\ref{ass:Lip} and \ref{ass:swmonotone} and the transition kernel does not depend on the MF, the sequence of policies $\bc{\boldsymbol{\pi}_\tau}_{\tau \geq 0}$ generated by the CTOMD algorithm converges to the unique NE as $\tau \to \infty$. 
\end{theorem}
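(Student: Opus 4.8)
The plan is to run a Lyapunov (Fenchel‑coupling) argument in which the strict weak monotonicity of Assumption~\ref{ass:swmonotone} supplies the variational stability of the unique equilibrium $\boldsymbol{\pi}^\ast$ guaranteed by Lemma~\ref{lem:omd_unique}; write $\boldsymbol{\mu}^\ast = \Psi(\boldsymbol{\pi}^\ast)$. Recall that CTOMD maintains for every population $\alpha$ and every $(t,x)$ a cumulative score $y_\tau^\alpha$ evolving by the state–action value of the current frozen mean field, $\dot y_\tau^\alpha = Q^{\boldsymbol{\mu}_\tau,\boldsymbol{\pi}_\tau}_\alpha$ with $\boldsymbol{\mu}_\tau = \Psi(\boldsymbol{\pi}_\tau)$, and recovers the policy through the softmax mirror map $\pi_\tau^\alpha = \Gamma(y_\tau^\alpha)$. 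Because the game is smoothed, the equilibrium policy is interior, so the potential
\[
V(\tau) \coloneqq \int_{\mathcal I} \sum_{t,x} d^{\ast,\alpha}_t(x)\, \mathrm{KL}\!\bc{\pi^{\ast,\alpha}_t(\cdot\mid x)\,\big\|\,\pi^\alpha_{\tau,t}(\cdot\mid x)}\,\mathrm d\alpha
\]
is finite, non‑negative, continuous in $\boldsymbol{\pi}_\tau$, and vanishes exactly at $\boldsymbol{\pi}^\ast$, where $d^{\ast,\alpha}$ is the occupancy measure of $\pi^{\ast,\alpha}$ under $\boldsymbol{\mu}^\ast$.

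First I would differentiate $V$ along the flow. Using $\dot y_\tau^\alpha = Q^{\boldsymbol{\mu}_\tau,\boldsymbol{\pi}_\tau}_\alpha$ and the softmax identity $\tfrac{\mathrm d}{\mathrm d\tau}\log\pi^\alpha_{\tau,t}(u\mid x) = \dot y^\alpha_{\tau,t}(x,u) - \scal{\pi^\alpha_{\tau,t}(\cdot\mid x)}{\dot y^\alpha_{\tau,t}(x,\cdot)}$, the derivative collapses to a weighted advantage,
\[
\dot V(\tau) = -\int_{\mathcal I}\sum_{t,x} d^{\ast,\alpha}_t(x)\sum_u \bc{\pi^{\ast,\alpha}_t(u\mid x) - \pi^\alpha_{\tau,t}(u\mid x)}\,Q^{\boldsymbol{\mu}_\tau,\boldsymbol{\pi}_\tau}_{\alpha,t}(x,u)\,\mathrm d\alpha .
\]
The next step is to recognise the inner sum, via the performance‑difference lemma for the frozen Markov decision problem with mean field $\boldsymbol{\mu}_\tau$, as the per‑population value gap, so that $\dot V(\tau) = -\int_{\mathcal I}\brk{J^{\boldsymbol{\mu}_\tau}_\alpha(\pi^{\ast,\alpha}) - J^{\boldsymbol{\mu}_\tau}_\alpha(\pi^\alpha_\tau)}\mathrm d\alpha$. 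Its sign is then immediate: applying Assumption~\ref{ass:swmonotone} to the pair $(\boldsymbol{\pi}_\tau,\boldsymbol{\pi}^\ast)$ and rearranging yields
\[
\int_{\mathcal I}\brk{J^{\boldsymbol{\mu}_\tau}_\alpha(\pi^{\ast,\alpha}) - J^{\boldsymbol{\mu}_\tau}_\alpha(\pi^\alpha_\tau)}\mathrm d\alpha \;\ge\; \int_{\mathcal I}\brk{J^{\boldsymbol{\mu}^\ast}_\alpha(\pi^{\ast,\alpha}) - J^{\boldsymbol{\mu}^\ast}_\alpha(\pi^\alpha_\tau)}\mathrm d\alpha \;\ge\; 0,
\]
the last inequality because $\pi^{\ast,\alpha}$ best responds to $\boldsymbol{\mu}^\ast$ for every $\alpha$. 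Hence $\dot V(\tau)\le 0$, and \emph{strict} weak monotonicity makes the first inequality, and therefore $\dot V(\tau)$, strict whenever $\boldsymbol{\pi}_\tau\neq\boldsymbol{\pi}^\ast$.

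To turn monotone decrease into convergence I would proceed as follows. As $V$ is non‑increasing and bounded below it has a limit $V_\infty\ge 0$, and $\int_0^\infty(-\dot V)\,\mathrm d\tau = V(0)-V_\infty<\infty$. Writing $G(\boldsymbol{\pi}_\tau)\coloneqq -\dot V(\tau)$, a continuous non‑negative function of $\boldsymbol{\pi}_\tau$ alone (since $\boldsymbol{\mu}_\tau=\Psi(\boldsymbol{\pi}_\tau)$ depends continuously on $\boldsymbol{\pi}_\tau$), the finite integral forces $\liminf_{\tau\to\infty}G(\boldsymbol{\pi}_\tau)=0$. Choosing $\tau_k\to\infty$ with $G(\boldsymbol{\pi}_{\tau_k})\to 0$ and using compactness of the finite product of action simplices, a subsequential limit $\bar{\boldsymbol{\pi}}$ satisfies $G(\bar{\boldsymbol{\pi}})=0$, hence $\bar{\boldsymbol{\pi}}=\boldsymbol{\pi}^\ast$ by strict weak monotonicity; continuity of the interior KL then gives $V(\tau_k)\to 0$, so $V_\infty=0$. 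Finally Pinsker's inequality $V(\tau)\ge \tfrac12\int_{\mathcal I}\|\pi^{\ast,\alpha}-\pi^\alpha_\tau\|_1^2\,\mathrm d\alpha$ upgrades $V(\tau)\to 0$ into $\boldsymbol{\pi}_\tau\to\boldsymbol{\pi}^\ast$.

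The step I expect to be the main obstacle is the reduction of $\dot V$ to the value gap $\int_{\mathcal I}[J^{\boldsymbol{\mu}_\tau}_\alpha(\pi^{\ast,\alpha})-J^{\boldsymbol{\mu}_\tau}_\alpha(\pi^\alpha_\tau)]\,\mathrm d\alpha$. In the mean‑field‑independent transition setting of prior work the occupancy of $\pi^{\ast,\alpha}$ does not move with $\tau$, so the fixed weight $d^{\ast,\alpha}_t(x)$ in $V$ is exactly the one demanded by the performance‑difference lemma and the identity is exact. Here, because $P$ depends on the mean field, the lemma produces the occupancy of $\pi^{\ast,\alpha}$ under the current $\boldsymbol{\mu}_\tau$ rather than under $\boldsymbol{\mu}^\ast$, leaving a residual proportional to $d^{\pi^\ast,\boldsymbol{\mu}_\tau}-d^{\ast}$. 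Controlling it is precisely where Assumption~\ref{ass:Lip} enters: Lipschitz continuity of $P$ and $W$ bounds this occupancy discrepancy by the mean‑field discrepancy $\|\boldsymbol{\mu}_\tau-\boldsymbol{\mu}^\ast\|$, itself controlled by $V$, so the residual can be absorbed into the strictly negative monotonicity term (for sufficiently strong smoothing) or shown integrable in $\tau$ without destroying $\liminf G=0$. Making this absorption rigorous, together with checking that the flow is well posed — the vector field is Lipschitz and $Q$ is bounded over the finite horizon, giving global existence of $\boldsymbol{\pi}_\tau$ and continuity of $\tau\mapsto\boldsymbol{\mu}_\tau$ — is the technical heart of the argument.
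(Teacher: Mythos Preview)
Your Lyapunov strategy and the identification of the derivative as an integrated value gap are exactly the paper's mechanism, but you diverge at the single point that matters: the choice of weights in the potential. You weight the KL by the \emph{equilibrium} occupancy $d^{\ast,\alpha}$ (occupancy of $\pi^{\ast,\alpha}$ under $\boldsymbol{\mu}^\ast$). The paper instead weights by the occupancy of $\pi^{\ast,\alpha}$ under the \emph{current} mean field $\boldsymbol{\mu}_\tau=\Psi(\boldsymbol{\pi}_\tau)$, i.e.
\[
f(\boldsymbol{\pi}_\tau)=\int_{\mathcal I}\sum_{t,x}\mu_t^{\alpha,\boldsymbol{\pi}^\ast,\boldsymbol{\mu}_\tau}(x)\,\mathrm{KL}\bc{\pi_t^{\ast,\alpha}(\cdot\mid x)\,\big\|\,\pi_{\tau,t}^\alpha(\cdot\mid x)}\,\mathrm d\alpha.
\]
With this $\tau$-dependent weight the performance-difference telescoping under the frozen $\boldsymbol{\mu}_\tau$-MDP is \emph{exact}: the forward-equation identity $\sum_{x,u}\pi_t^{\ast,\alpha}(u\mid x)\,\mu_t^{\alpha,\boldsymbol{\pi}^\ast,\boldsymbol{\mu}_\tau}(x)\,P(x'\mid x,u,\mu_{\tau,t})=\mu_{t+1}^{\alpha,\boldsymbol{\pi}^\ast,\boldsymbol{\mu}_\tau}(x')$ collapses the sum to $\int_{\mathcal I}\brk{J_\alpha^{\boldsymbol{\mu}_\tau}(\pi_\tau^\alpha)-J_\alpha^{\boldsymbol{\mu}_\tau}(\pi^{\ast,\alpha})}\mathrm d\alpha=\Delta J(\boldsymbol{\pi}_\tau,\boldsymbol{\pi}^\ast)+\tilde d(\boldsymbol{\pi}_\tau,\boldsymbol{\pi}^\ast)$ (Lemma~\ref{lem:derivative_H}), with no residual. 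The paper then invokes LaSalle's theorem directly, since $\Delta J\le 0$ and $\tilde d\le 0$ with equality only at $\boldsymbol{\pi}^\ast$.

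Your fixed-weight potential produces precisely the mismatch you flag: the performance-difference lemma for the $\boldsymbol{\mu}_\tau$-MDP wants the occupancy $\mu_t^{\alpha,\boldsymbol{\pi}^\ast,\boldsymbol{\mu}_\tau}$, not $d^{\ast,\alpha}=\mu_t^{\alpha,\boldsymbol{\pi}^\ast,\boldsymbol{\mu}^\ast}$. Your proposed remedies are not convincing as written. ``Absorb into the strictly negative monotonicity term for sufficiently strong smoothing'' is not available---the theorem assumes only strict weak monotonicity, with no quantitative modulus to dominate a residual of size $\|\boldsymbol{\mu}_\tau-\boldsymbol{\mu}^\ast\|$; and ``shown integrable in $\tau$'' begs the question, since you have no a priori decay of $\|\boldsymbol{\mu}_\tau-\boldsymbol{\mu}^\ast\|$ before you have established convergence. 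The clean fix is not to estimate the residual but to change the weights so that it never appears, which is exactly the paper's move. (Your endgame via integrability of $-\dot V$, subsequential limits, and Pinsker is a fine alternative to LaSalle once $\dot V\le 0$ with equality only at $\boldsymbol{\pi}^\ast$; the paper uses LaSalle, but that distinction is cosmetic.)
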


Empirical evidence collected in our simulations suggests that a convergence guarantee as in Theorem~\ref{thm:unique_conv} also holds for the case where the MF depends on the transition kernel, but we leave a rigorous proof for future work.

\section{EXPERIMENTS} \label{sec:experiments}

\begin{figure*}
    \centering
    \includegraphics[width=0.99\linewidth]{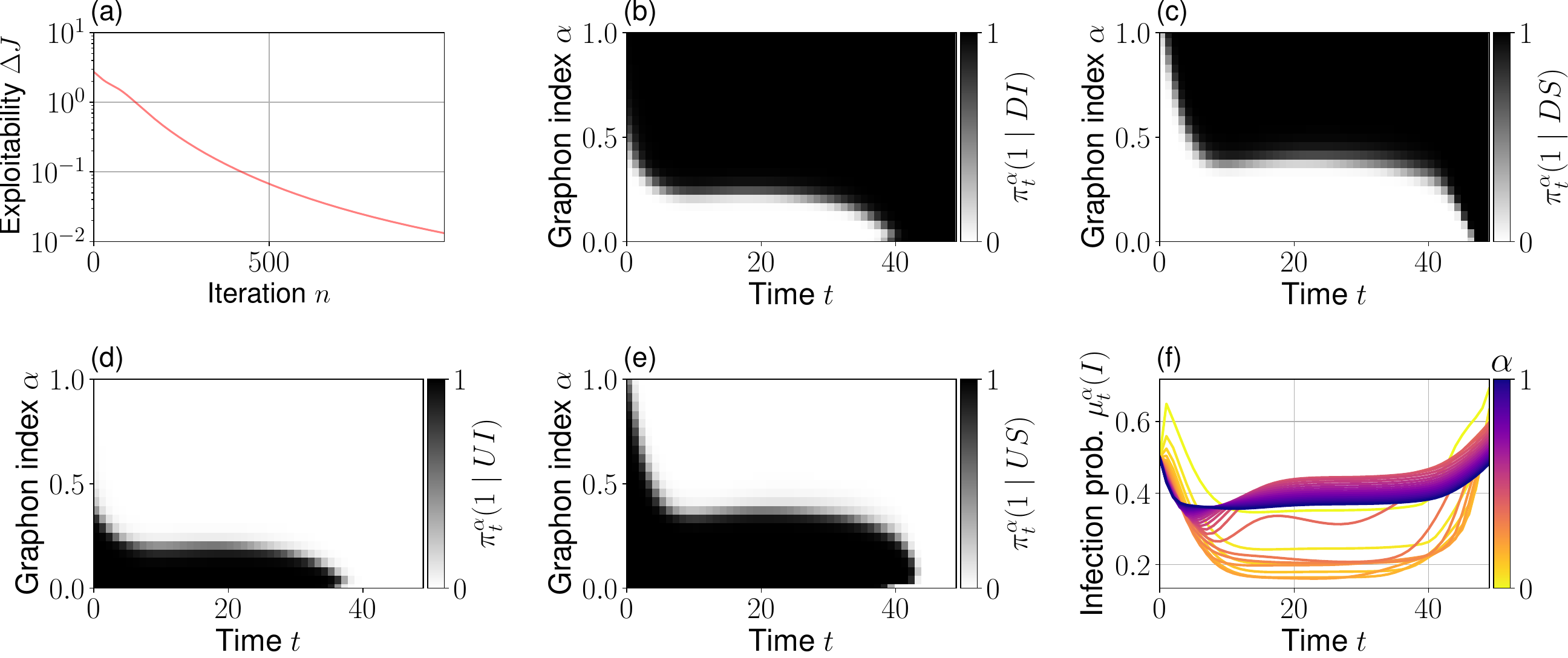}
    \caption{Experimental results for OMD on the Cyber Security problem. (a): The exploitability $\Delta J$ over iterations $n$ of OMD; (b)-(e): The probability of choosing action $u=1$ at graphon index $\alpha$ and time $t$ under the final equilibrium policy in states $DI, DS, UI, US$ respectively; (f): The probability (mean-field) of infected agents, visualized for each discretized $\alpha$.}
    \label{fig:cyber}
\end{figure*}

In our experiments, we use OMD with its hyperparameter $\gamma$ set to $1$, the power law $L^p$-graphon $W$, and discretize $\mathcal I$ into $M=25$ subintervals for the Cyber Security problem or $M=10$ for the Beach Bar problem given as follows. Here, we emphasize that using $L^p$-graphons in the experiments is a key component of our LPFGMFG approach. This allows us to model many realistic networks which are characterized by sparsity and power law degree distributions. As we discussed previously, standard GMFG approaches are conceptually unable to capture these networks.

\begin{figure*}[ht]
    \centering
    \includegraphics[width=0.99\linewidth]{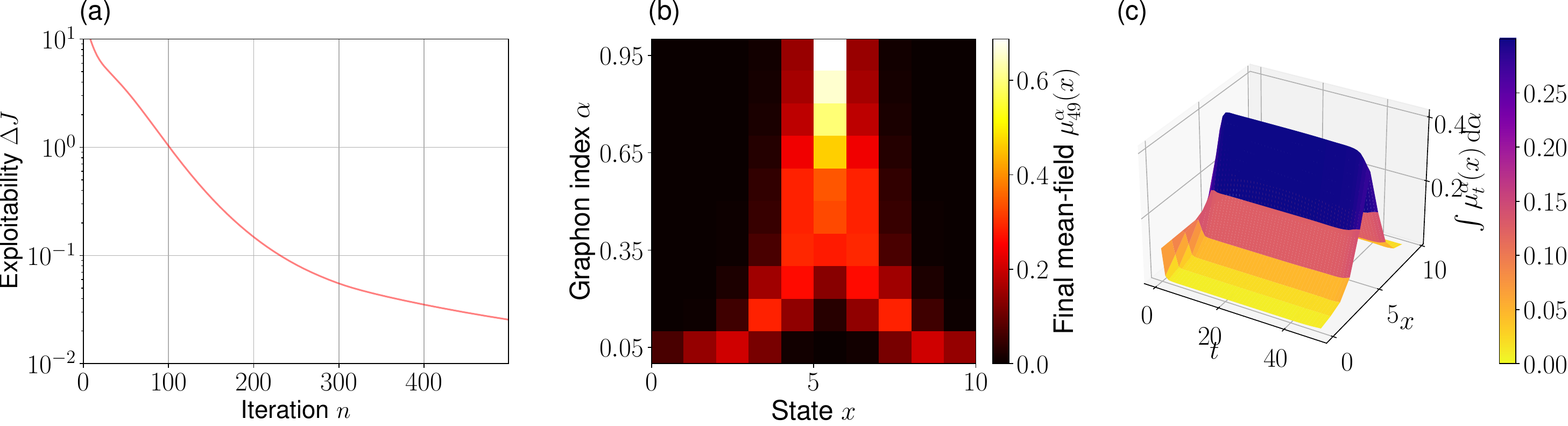}
    \caption{Experimental results for OMD on the Beach problem. (a) The exploitability $\Delta J$ over iterations $n$ of OMD; (b) The final distribution over positions on the beach at time $t=T-1$ for each discretized $\alpha$; (c) The evolution of distributions over time.}
    \label{fig:beach}
\end{figure*}

\paragraph{Cyber Security.} 
We modify an existing cyber security model (\cite{carmona2018probabilistic}, \cite{kolokoltsov2016mean}) where a virus spreads to computers either directly by an attack, or by other nearby infected computers.
In contrast to existing work, we use LPGMFGs to allow malware spread only by neighboring computers to increase the modelling accuracy. 
Each computer can be either infected ($I$) or susceptible ($S$), as well as defended ($D$) or unprotected ($U$), formally $\mathcal{X} \coloneqq \cbc{DI, DS, UI, US}$. Agents may attempt to switch (with geometrically distributed delay) between defense states, $\mathcal{U} \coloneqq \cbc{0, 1}$. The recovery and infection probabilities depend on the defense state and number of infected neighbors, while the reward function consists of costs for being defended or infected.
Details can be found in Appendix~\ref{app:cyber}.

\paragraph{Heterogeneous Cyber Security.}

A natural extension of the cyber security model is the adaptation to a multi-class framework with heterogeneous agent classes. For illustrative purposes we focus on only two types of agents -- private ($\mathrm{Pri}$) and corporate ($\mathrm{Cor}$), see Appendix~\ref{app:heterogeneous} for details.

\paragraph{Beach Bar Process.} Introduced as the Santa Fe bar problem (\cite{arthur1994inductive}, \cite{farago2002fair}), variations of the Beach Bar Process are frequently used to demonstrate the capabilities of learning algorithms (\cite{perolat2021scaling}, \cite{perrin2020fictitious}). Agents can move their towels between locations and try to be close to the bar but also avoid crowded areas and neighbors in an underlying network. Formally, we consider a one-dimensional beach bar process with $|\mathcal X| = 10$ locations $\mathcal X = \{0, 1, \ldots, |\mathcal X|-1\}$ where a bar is located in the middle $B = |\mathcal X|/2$ of the beach. The $N$ agents may move their towel between locations, $\mathcal U = \{-1, 0, 1\}$. Furthermore, the agents are connected by a power law network where connected agents try to avoid being close to each other. See Appendix~\ref{app:beach_bar} for details.

\begin{figure*}
    \centering
    \includegraphics[width=0.99\linewidth]{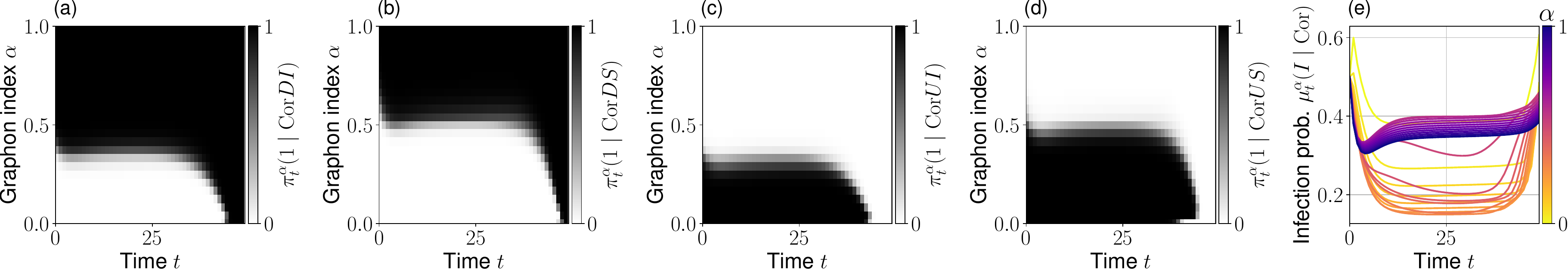}
    \includegraphics[width=0.99\linewidth]{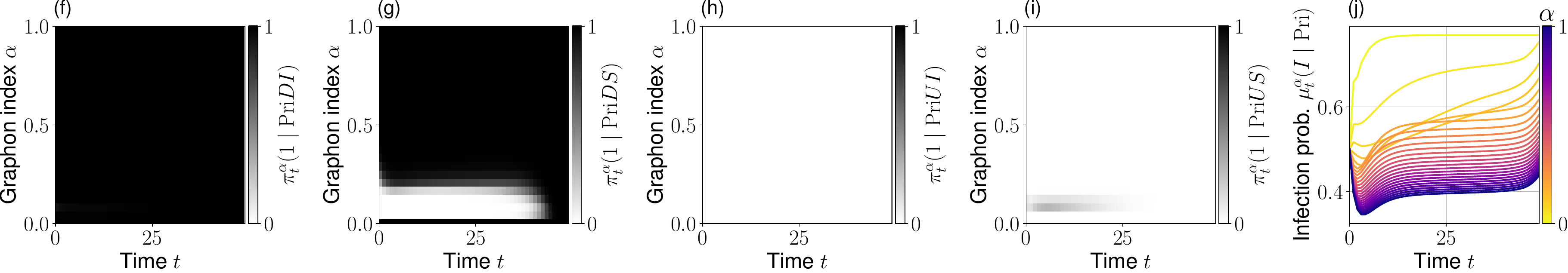}
    \caption{Experimental results for OMD on the heterogeneous Cyber Security problem. (a)-(d): The probability of action $u=1$ at graphon index $\alpha$ and time $t$ under the final equilibrium policy in states $\mathrm{Cor}DI, \mathrm{Cor}DS, \mathrm{Cor}UI, \mathrm{Cor}US$; (e): The probability (MF) of infected $\mathrm{Cor}$ agents, visualized for each discretized $\alpha$; (f)-(j): Same as in (a)-(e) but for $\mathrm{Pri}$ agents.}
    \label{fig:cyber2}
\end{figure*}

\begin{figure}
    \centering
    \includegraphics[width=0.99\linewidth]{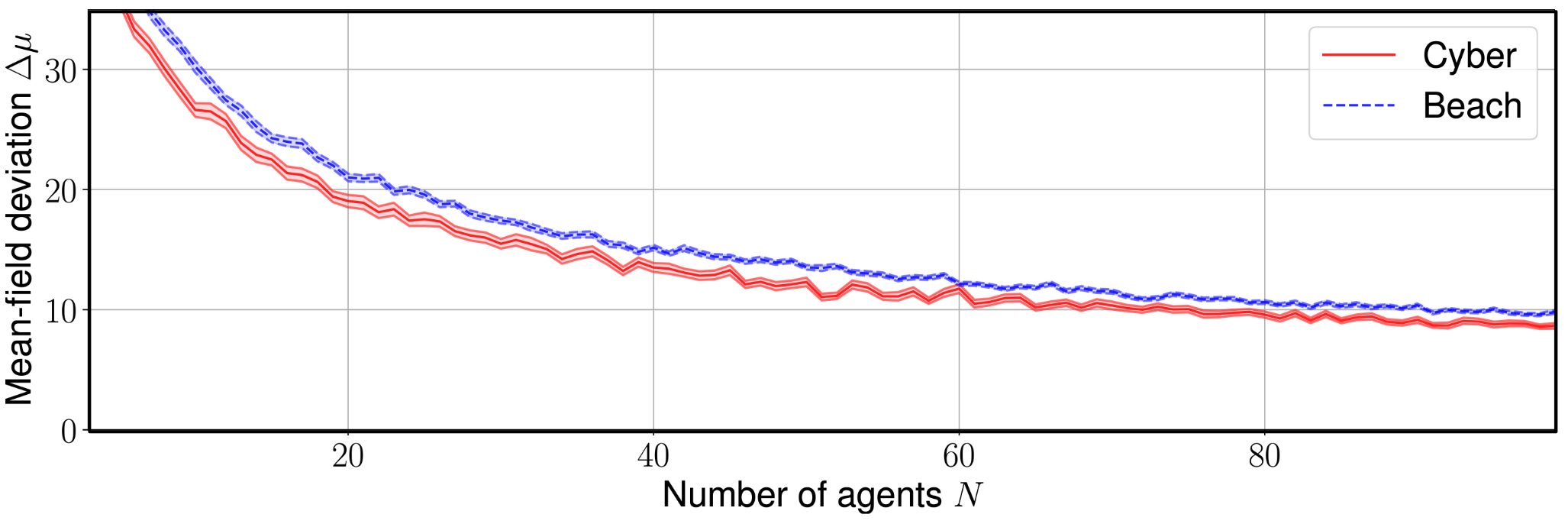}
    \caption{The $L_1$ error between the empirical distribution and the limiting MF $\Delta \mu = \mathbb E \left[ \sum_{t \in \mathcal T, x \in \mathcal X} \left| \frac 1 N \sum_{i} \delta_{X^i_t}(x) - \int_{\mathcal I} \mu^\alpha_t(x) \, \mathrm d\alpha \right| \right]$ at $\beta=0.51$ averaged over $100$ randomly sampled graphs with $N$ nodes and $68\%$ confidence interval (shaded).}
    \label{fig:n-compare}
\end{figure}

\paragraph{Experimental Results.}
As seen in Figures~\ref{fig:cyber} and \ref{fig:beach}, the approximate exploitability $\Delta J(\pi) = \int_{\mathcal I} \sup_{\pi^* \in \Pi} J^{\Psi(\pi)}_\alpha(\pi^*) - J^{\Psi(\pi)}_\alpha(\pi) \, \mathrm d\alpha$
of a MF policy $\pi$ quantifies the sub-optimality of the obtained equilibrium and quickly converges in the Cyber Security and Beach Bar scenario using OMD. We obtain near-identical results also for the Heterogeneous Cyber Security problem, which are omitted for space reasons. The algorithm converges to an equilibrium where, as expected, the agents with the most connections in the graph attempt to defend at fixed cost, as their expected cost from not defending is higher than for agents with fewer connections. The system quickly runs into an almost time-stationary state, where the costs of defending equilibrate with the expected cost of future infection, see Figure~\ref{fig:cyber}. Since we consider a finite-horizon however, the option of defending becomes increasingly unattractive as time runs out. The probability of an agent $\alpha$ being infected at any time shows an interesting behavior: At $\alpha=0$, the probability is quite high due to the great number of connections. As $\alpha$ decreases, so does the probability of being infected at all times. However, as soon as $\alpha$ passes a threshold where defense is no longer individually worth it, the fraction of infected nodes jumps up.

In the heterogeneous case, as seen in Figure~\ref{fig:cyber2}, we consider an additional class of nodes with partially similar behavior. For very high connectivity $\alpha \to 0$ however, we observe that $\mathrm{Pri}$ nodes will never defend themselves, since for the given problem parameters, the probability of infection will be very high regardless of the defense state. Otherwise, we can observe similar behavior as in the homogeneous case. Perhaps most interesting is the asymmetry between choosing to switch between defended and undefended. When agents are susceptible, some agents will opt to neither switch from defended to undefended, nor vice versa. This stems from the fact that agents switching in state $US$ could likely jump to $UI$ and $DI$, while in state $DS$ likely jumps are $DS$ and $US$, each of which may have different future returns. For the Beach Bar process in Figure~\ref{fig:beach}, we see results similar to the classical ones in \cite{perrin2020fictitious}. By giving each agent an incentive to avoid only their direct graphical neighbors, we obtain an equilibrium behavior where agents with many connections will stay further away from the bar, while agents with few connections will not mind many other agents.

\begin{figure}
    \centering
    \includegraphics[width=0.99\linewidth]{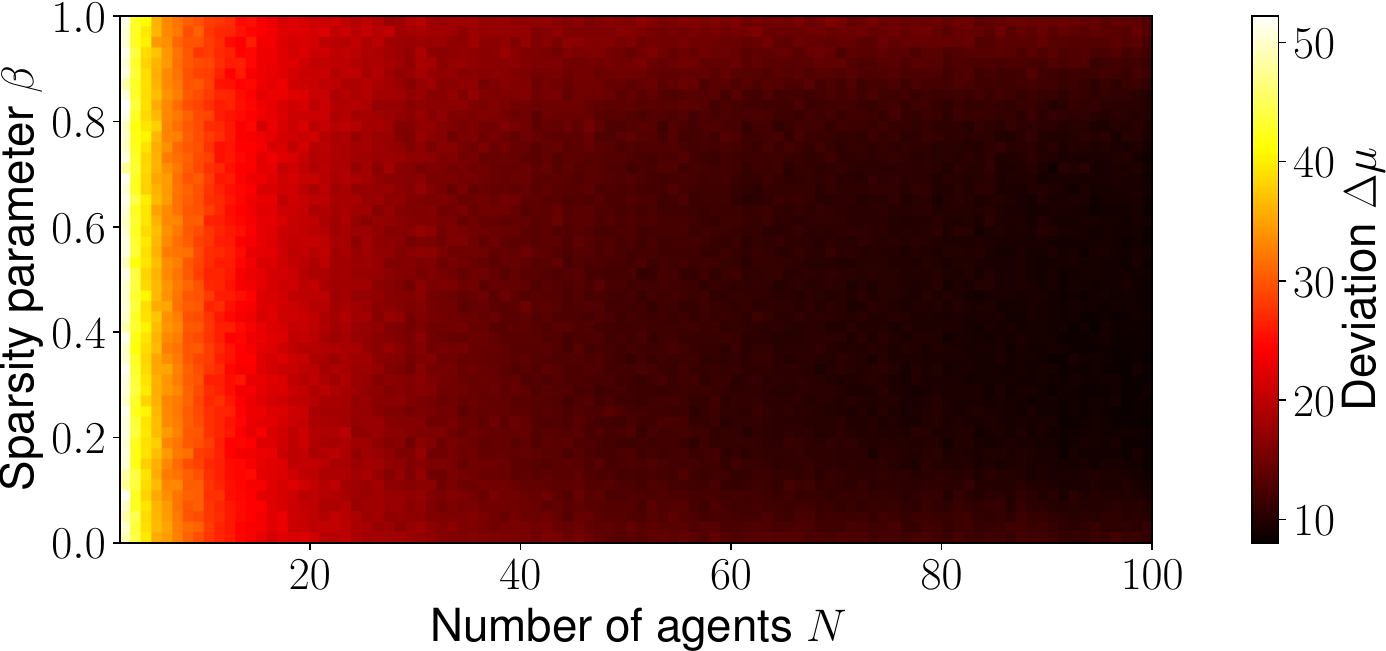}
    \caption{The $L_1$ error between the empirical distribution and the limiting MF as in Figure~\ref{fig:n-compare} over $50$ uniformly spaced $\beta \in (0,1)$ and $N \leq 100$ for the Cyber Security problem.}
    \label{fig:full-n-compare}
\end{figure}

Finally, in Figures~\ref{fig:n-compare} and \ref{fig:full-n-compare} for $\rho_N = N^{-\beta}$ and sparsity parameter $\beta \in (0,1)$, we observe convergence of the $N$-agent system objective to the MF objective, implying that sufficiently large finite systems are well-approximated by the LPGMFG. In Figure~\ref{fig:full-n-compare}, for $\beta$ close to $0$ or $1$, convergence slows down, as by \cite[Theorem 2.14]{borgs2019L} convergence is only guaranteed for $0 < \beta < 1$. Even though for $\beta = 0$, one would get the same model as in \cite{cui2021learning}, since the power law graphon is not $[0,1]$-valued, approximation guarantees fail for $\beta = 0$ and we observe increasingly slow convergence as we approach zero.

\section{CONCLUSION} \label{sec:conclusion}
In this paper we have introduced LPGMFGs which enable the scalable, mathematically sound analysis of otherwise intractable MARL problems on large sparse graphs. We rigorously derived existence and convergence guarantees for LPGMFGs and provided learning schemes to find equilibria algorithmically where we adapted the OMD learning algorithm to the setting of LPGMFGs. Beyond that, we demonstrated the benefits of our approach empirically on different examples and showed that the practical results match the theory. As for the societal impact we foresee from our work, we believe that while our techniques remain very general, they could in the future lead to an analysis of self-interested agents on real graphs such as from social networks. This could find application e.g. in control strategies for future pandemics, or other interventions. Future work could extend our model in numerous ways such as considering continuous time, action, and state spaces or adding noise terms. A challenging task could also be to find similar learning concepts for ultra-sparse graphs where the degrees remain constant as the number of agents becomes large.  For applications, it would be interesting to use LPGMFGs to solve real-world problems that occur in various research fields. In general, we hope that our work contributes to the MARL literature and inspires future work on scalable learning methods on sparse graphs.

\subsubsection*{Acknowledgements}
 This work has been co-funded by the Hessian Ministry of Science and the Arts (HMWK) within the projects "The Third Wave of Artificial Intelligence - 3AI" and hessian.AI, and the LOEWE initiative (Hesse, Germany) within the emergenCITY center.


\bibliography{main}

\begin{thebibliography}{}

\bibitem[Anahtarci et~al., 2020]{anahtarci2020q}
Anahtarci, B., Kariksiz, C.~D., and Saldi, N. (2020).
\newblock Q-learning in regularized mean-field games.
\newblock {\em arXiv preprint arXiv:2003.12151}.

\bibitem[Aparicio et~al., 2015]{aparicio2015model}
Aparicio, S., Villaz{\'o}n-Terrazas, J., and {\'A}lvarez, G. (2015).
\newblock A model for scale-free networks: application to twitter.
\newblock {\em Entropy}, 17(8):5848--5867.

\bibitem[Arthur, 1994]{arthur1994inductive}
Arthur, W.~B. (1994).
\newblock Inductive reasoning and bounded rationality.
\newblock {\em The American Economic Review}, 84(2):406--411.

\bibitem[Aurell et~al., 2022]{aurell2022finite}
Aurell, A., Carmona, R., Dayan{\i}kl{\i}, G., and Lauri{\`e}re, M. (2022).
\newblock Finite state graphon games with applications to epidemics.
\newblock {\em Dynamic Games and Applications}, pages 1--33.

\bibitem[Aurell et~al., 2021]{aurell2021stochastic}
Aurell, A., Carmona, R., and Lauriere, M. (2021).
\newblock Stochastic graphon games: {II}. the linear-quadratic case.
\newblock {\em arXiv preprint arXiv:2105.12320}.

\bibitem[Avena-Koenigsberger et~al., 2018]{avena2018communication}
Avena-Koenigsberger, A., Misic, B., and Sporns, O. (2018).
\newblock Communication dynamics in complex brain networks.
\newblock {\em Nature Reviews Neuroscience}, 19(1):17--33.

\bibitem[Bakker et~al., 2010]{bakker2010social}
Bakker, L., Hare, W., Khosravi, H., and Ramadanovic, B. (2010).
\newblock A social network model of investment behaviour in the stock market.
\newblock {\em Physica A: Statistical Mechanics and its Applications},
  389(6):1223--1229.

\bibitem[Barab{\'a}si and Albert, 1999]{barabasi1999emergence}
Barab{\'a}si, A.-L. and Albert, R. (1999).
\newblock Emergence of scaling in random networks.
\newblock {\em Science}, 286(5439):509--512.

\bibitem[Barab{\'a}si et~al., 1999]{barabasi1999mean}
Barab{\'a}si, A.-L., Albert, R., and Jeong, H. (1999).
\newblock Mean-field theory for scale-free random networks.
\newblock {\em Physica A: Statistical Mechanics and its Applications},
  272(1-2):173--187.

\bibitem[Bayraktar et~al., 2020]{bayraktar2020graphon}
Bayraktar, E., Chakraborty, S., and Wu, R. (2020).
\newblock Graphon mean field systems.
\newblock {\em arXiv preprint arXiv:2003.13180}.

\bibitem[Bertsekas, 2019]{bertsekas2019reinforcement}
Bertsekas, D. (2019).
\newblock {\em Reinforcement learning and optimal control}.
\newblock Athena Scientific.

\bibitem[Bian et~al., 2016]{bian2016evolving}
Bian, Y.-t., Xu, L., and Li, J.-s. (2016).
\newblock Evolving dynamics of trading behavior based on coordination game in
  complex networks.
\newblock {\em Physica A: Statistical Mechanics and its Applications},
  449:281--290.

\bibitem[Bollob{\'a}s et~al., 2007]{bollobas2007phase}
Bollob{\'a}s, B., Janson, S., and Riordan, O. (2007).
\newblock The phase transition in inhomogeneous random graphs.
\newblock {\em Random Structures \& Algorithms}, 31(1):3--122.

\bibitem[Borgs et~al., 2018]{borgs2018p}
Borgs, C., Chayes, J., Cohn, H., and Zhao, Y. (2018).
\newblock An {Lp} theory of sparse graph convergence {II}: Ld convergence,
  quotients and right convergence.
\newblock {\em The Annals of Probability}, 46(1):337--396.

\bibitem[Borgs et~al., 2019]{borgs2019L}
Borgs, C., Chayes, J., Cohn, H., and Zhao, Y. (2019).
\newblock An {Lp} theory of sparse graph convergence {I}: Limits, sparse random
  graph models, and power law distributions.
\newblock {\em Transactions of the American Mathematical Society},
  372(5):3019--3062.

\bibitem[Borgs et~al., 2011]{borgs2011limits}
Borgs, C., Chayes, J., Lov{\'a}sz, L., S{\'o}s, V., and Vesztergombi, K.
  (2011).
\newblock Limits of randomly grown graph sequences.
\newblock {\em European Journal of Combinatorics}, 32(7):985--999.

\bibitem[Bullmore and Sporns, 2009]{bullmore2009complex}
Bullmore, E. and Sporns, O. (2009).
\newblock Complex brain networks: graph theoretical analysis of structural and
  functional systems.
\newblock {\em Nature Reviews Neuroscience}, 10(3):186--198.

\bibitem[Bullmore and Sporns, 2012]{bullmore2012economy}
Bullmore, E. and Sporns, O. (2012).
\newblock The economy of brain network organization.
\newblock {\em Nature Reviews Neuroscience}, 13(5):336--349.

\bibitem[Caines and Huang, 2019]{caines2019graphon}
Caines, P.~E. and Huang, M. (2019).
\newblock Graphon mean field games and the {GMFG} equations:
  $\varepsilon$-{Nash} equilibria.
\newblock In {\em 2019 IEEE 58th Conference on Decision and Control (CDC)},
  pages 286--292. IEEE.

\bibitem[Canese et~al., 2021]{canese2021multi}
Canese, L., Cardarilli, G.~C., Di~Nunzio, L., Fazzolari, R., Giardino, D., Re,
  M., and Span{\`o}, S. (2021).
\newblock Multi-agent reinforcement learning: A review of challenges and
  applications.
\newblock {\em Applied Sciences}, 11(11):4948.

\bibitem[Cardaliaguet and Hadikhanloo, 2017]{cardaliaguet2017learning}
Cardaliaguet, P. and Hadikhanloo, S. (2017).
\newblock Learning in mean field games: the fictitious play.
\newblock {\em ESAIM: Control, Optimisation and Calculus of Variations},
  23(2):569--591.

\bibitem[Carmona, 2004]{carmona2004nash}
Carmona, G. (2004).
\newblock {Nash} equilibria of games with a continuum of players.

\bibitem[Carmona et~al., 2022]{carmona2022stochastic}
Carmona, R., Cooney, D.~B., Graves, C.~V., and Lauriere, M. (2022).
\newblock Stochastic graphon games: I. the static case.
\newblock {\em Mathematics of Operations Research}, 47(1):750--778.

\bibitem[Carmona and Delarue, 2018a]{carmona2018probabilistic}
Carmona, R. and Delarue, F. (2018a).
\newblock {\em Probabilistic Theory of Mean Field Games with Applications {I}:
  Mean Field FBSDEs, Control, and Games}, volume~83.
\newblock Springer.

\bibitem[Carmona and Delarue, 2018b]{carmona2018prob}
Carmona, R. and Delarue, F. (2018b).
\newblock {\em Probabilistic {Theory} of {Mean} {Field} {Games} with
  {Applications} {II}: {Mean} {Field} {Games} with {Common} {Noise} and
  {Master} {Equations}}.
\newblock Probability {Theory} and {Stochastic} {Modelling}. Springer
  International Publishing.

\bibitem[Carmona et~al., 2015]{carmona2015mean}
Carmona, R., Fouque, J.-P., and Sun, L.-H. (2015).
\newblock Mean field games and systemic risk.
\newblock {\em Communications in Mathematical Sciences}, 13(4):911--933.

\bibitem[Carmona et~al., 2019]{carmona2019model}
Carmona, R., Lauri{\`e}re, M., and Tan, Z. (2019).
\newblock Model-free mean-field reinforcement learning: mean-field mdp and
  mean-field q-learning.
\newblock {\em arXiv preprint arXiv:1910.12802}.

\bibitem[Carmona and Zhu, 2016]{carmona2016probabilistic}
Carmona, R. and Zhu, X. (2016).
\newblock A probabilistic approach to mean field games with major and minor
  players.
\newblock {\em The Annals of Applied Probability}, 26(3):1535--1580.

\bibitem[Castiglione et~al., 2014]{castiglione2014exploiting}
Castiglione, A., Gribaudo, M., Iacono, M., and Palmieri, F. (2014).
\newblock Exploiting mean field analysis to model performances of big data
  architectures.
\newblock {\em Future Generation Computer Systems}, 37:203--211.

\bibitem[Choutri and Djehiche, 2019]{choutri2019mean}
Choutri, S.~E. and Djehiche, B. (2019).
\newblock Mean-field risk sensitive control and zero-sum games for markov
  chains.
\newblock {\em Bulletin des Sciences Math{\'e}matiques}, 152:1--39.

\bibitem[Choutri et~al., 2019]{choutri2019optimal}
Choutri, S.~E., Djehiche, B., and Tembine, H. (2019).
\newblock Optimal control and zero-sum games for markov chains of mean-field
  type.
\newblock {\em Mathematical Control \& Related Fields}, 9(3):571.

\bibitem[Concas et~al., 2007]{concas2007power}
Concas, G., Marchesi, M., Pinna, S., and Serra, N. (2007).
\newblock Power-laws in a large object-oriented software system.
\newblock {\em IEEE Transactions on Software Engineering}, 33(10):687--708.

\bibitem[Conitzer and Sandholm, 2008]{conitzer2008new}
Conitzer, V. and Sandholm, T. (2008).
\newblock New complexity results about {Nash} equilibria.
\newblock {\em Games and Economic Behavior}, 63(2):621--641.

\bibitem[Cui and Koeppl, 2021]{cui2021approximately}
Cui, K. and Koeppl, H. (2021).
\newblock Approximately solving mean field games via entropy-regularized deep
  reinforcement learning.
\newblock In {\em International Conference on Artificial Intelligence and
  Statistics}, pages 1909--1917. PMLR.

\bibitem[Cui and Koeppl, 2022]{cui2021learning}
Cui, K. and Koeppl, H. (2022).
\newblock Learning graphon mean field games and approximate {Nash} equilibria.
\newblock In {\em International Conference on Learning Representations}.

\bibitem[Cui et~al., 2022]{cui2022scalable}
Cui, K., Li, M., Fabian, C., and Koeppl, H. (2022).
\newblock Scalable task-driven robotic swarm control via collision avoidance
  and learning mean-field control.
\newblock {\em arXiv preprint arXiv:2209.07420}.

\bibitem[Delarue and Vasileiadis, 2021]{delarue2021exploration}
Delarue, F. and Vasileiadis, A. (2021).
\newblock Exploration noise for learning linear-quadratic mean field games.
\newblock {\em arXiv preprint arXiv:2107.00839}.

\bibitem[D’Arcangelis and Rotundo, 2016]{d2016complex}
D’Arcangelis, A.~M. and Rotundo, G. (2016).
\newblock Complex networks in finance.
\newblock In {\em Complex Networks and Dynamics}, pages 209--235. Springer.

\bibitem[Elamvazhuthi and Berman, 2019]{elamvazhuthi2019mean}
Elamvazhuthi, K. and Berman, S. (2019).
\newblock Mean-field models in swarm robotics: A survey.
\newblock {\em Bioinspiration \& Biomimetics}, 15(1):015001.

\bibitem[Elie et~al., 2020a]{elie2020large}
Elie, R., Ichiba, T., and Lauri{\`e}re, M. (2020a).
\newblock Large banking systems with default and recovery: A mean field game
  model.
\newblock {\em arXiv preprint arXiv:2001.10206}.

\bibitem[Elie et~al., 2020b]{elie2020convergence}
Elie, R., Perolat, J., Lauri{\`e}re, M., Geist, M., and Pietquin, O. (2020b).
\newblock On the convergence of model free learning in mean field games.
\newblock In {\em Proceedings of the AAAI Conference on Artificial
  Intelligence}, volume~34, pages 7143--7150.

\bibitem[Farago et~al., 2002]{farago2002fair}
Farago, J., Greenwald, A., and Hall, K. (2002).
\newblock Fair and efficient solutions to the santa fe bar problem.
\newblock In {\em Grace Hopper Celebration of Women in Computing}. Citeseer.

\bibitem[Firoozi et~al., 2020]{firoozi2020convex}
Firoozi, D., Jaimungal, S., and Caines, P.~E. (2020).
\newblock Convex analysis for lqg systems with applications to major--minor lqg
  mean--field game systems.
\newblock {\em Systems \& Control Letters}, 142:104734.

\bibitem[Gao and Caines, 2017]{gao2017control}
Gao, S. and Caines, P.~E. (2017).
\newblock The control of arbitrary size networks of linear systems via graphon
  limits: An initial investigation.
\newblock In {\em 2017 IEEE 56th Annual Conference on Decision and Control
  (CDC)}, pages 1052--1057. IEEE.

\bibitem[Gao et~al., 2021]{gao2021linear}
Gao, S., Tchuendom, R.~F., and Caines, P.~E. (2021).
\newblock Linear quadratic graphon field games.
\newblock {\em Communications in Information and Systems}, 21(3):341--369.

\bibitem[Gkogkas and Kuehn, 2022]{gkogkas2022graphop}
Gkogkas, M.~A. and Kuehn, C. (2022).
\newblock Graphop mean-field limits for kuramoto-type models.
\newblock {\em SIAM Journal on Applied Dynamical Systems}, 21(1):248--283.

\bibitem[Guo et~al., 2019]{guo2019learning}
Guo, X., Hu, A., Xu, R., and Zhang, J. (2019).
\newblock Learning mean-field games.
\newblock In {\em Advances in Neural Information Processing Systems}, pages
  4966--4976.

\bibitem[Guo et~al., 2020]{guo2020general}
Guo, X., Hu, A., Xu, R., and Zhang, J. (2020).
\newblock A general framework for learning mean-field games.
\newblock {\em arXiv preprint arXiv:2003.06069}.

\bibitem[Guo et~al., 2022]{guo2022entropy}
Guo, X., Xu, R., and Zariphopoulou, T. (2022).
\newblock Entropy regularization for mean field games with learning.
\newblock {\em Mathematics of Operations Research}.

\bibitem[Hadikhanloo, 2017]{hadikhanloo2017learning}
Hadikhanloo, S. (2017).
\newblock Learning in anonymous nonatomic games with applications to
  first-order mean field games.
\newblock {\em arXiv preprint arXiv:1704.00378}.

\bibitem[Hadikhanloo and Silva, 2019]{hadikhanloo2019finite}
Hadikhanloo, S. and Silva, F.~J. (2019).
\newblock Finite mean field games: fictitious play and convergence to a first
  order continuous mean field game.
\newblock {\em Journal de Math{\'e}matiques Pures et Appliqu{\'e}es},
  132:369--397.

\bibitem[Huang et~al., 2020]{huang2020game}
Huang, K., Di, X., Du, Q., and Chen, X. (2020).
\newblock A game-theoretic framework for autonomous vehicles velocity control:
  Bridging microscopic differential games and macroscopic mean field games.
\newblock {\em Discrete \& Continuous Dynamical Systems-B}, 25(12):4869.

\bibitem[Huang et~al., 2006]{huang2006large}
Huang, M., Malham{\'e}, R.~P., and Caines, P.~E. (2006).
\newblock Large population stochastic dynamic games: closed-loop mckean-vlasov
  systems and the {Nash} certainty equivalence principle.
\newblock {\em Communications in Information \& Systems}, 6(3):221--252.

\bibitem[Khalil, 2002]{khalil2002nonlinear}
Khalil, H. (2002).
\newblock {\em Nonlinear Systems}.
\newblock Pearson Education. Prentice Hall.

\bibitem[Kolokoltsov and Bensoussan, 2016]{kolokoltsov2016mean}
Kolokoltsov, V.~N. and Bensoussan, A. (2016).
\newblock Mean-field-game model for botnet defense in cyber-security.
\newblock {\em Applied Mathematics \& Optimization}, 74(3):669--692.

\bibitem[Kolokoltsov and Malafeyev, 2018]{kolokoltsov2018corruption}
Kolokoltsov, V.~N. and Malafeyev, O.~A. (2018).
\newblock Corruption and botnet defense: a mean field game approach.
\newblock {\em International Journal of Game Theory}, 47(3):977--999.

\bibitem[Lacker and Soret, 2022]{lacker2022case}
Lacker, D. and Soret, A. (2022).
\newblock A case study on stochastic games on large graphs in mean field and
  sparse regimes.
\newblock {\em Mathematics of Operations Research}, 47(2):1530--1565.

\bibitem[Lasry and Lions, 2007]{lasry2007mean}
Lasry, J.-M. and Lions, P.-L. (2007).
\newblock Mean field games.
\newblock {\em Japanese Journal of Mathematics}, 2(1):229--260.

\bibitem[Lauri{\`e}re et~al., 2022]{lauriere2022scalable}
Lauri{\`e}re, M., Perrin, S., Girgin, S., Muller, P., Jain, A., Cabannes, T.,
  Piliouras, G., P{\'e}rolat, J., {\'E}lie, R., Pietquin, O., et~al. (2022).
\newblock Scalable deep reinforcement learning algorithms for mean field games.
\newblock {\em arXiv preprint arXiv:2203.11973}.

\bibitem[Li et~al., 2019]{li2019efficient}
Li, M., Qin, Z., Jiao, Y., Yang, Y., Wang, J., Wang, C., Wu, G., and Ye, J.
  (2019).
\newblock Efficient ridesharing order dispatching with mean field multi-agent
  reinforcement learning.
\newblock In {\em The world wide web conference}, pages 983--994.

\bibitem[Louridas et~al., 2008]{louridas2008power}
Louridas, P., Spinellis, D., and Vlachos, V. (2008).
\newblock Power laws in software.
\newblock {\em ACM Transactions on Software Engineering and Methodology
  (TOSEM)}, 18(1):1--26.

\bibitem[Lov{\'a}sz, 2012]{lovasz2012large}
Lov{\'a}sz, L. (2012).
\newblock {\em Large networks and graph limits}, volume~60.
\newblock American Mathematical Soc.

\bibitem[Mertikopoulos et~al., 2018]{mertikopoulos2018cycles}
Mertikopoulos, P., Papadimitriou, C., and Piliouras, G. (2018).
\newblock Cycles in adversarial regularized learning.
\newblock In {\em Proceedings of the Twenty-Ninth Annual ACM-SIAM Symposium on
  Discrete Algorithms}, pages 2703--2717. SIAM.

\bibitem[Mguni et~al., 2018]{mguni2018decentralised}
Mguni, D., Jennings, J., and de~Cote, E.~M. (2018).
\newblock Decentralised learning in systems with many, many strategic agents.
\newblock {\em Thirty-Second AAAI Conference on Artificial Intelligence}.

\bibitem[Mondal et~al., 2021]{mondal2021approximation}
Mondal, W.~U., Agarwal, M., Aggarwal, V., and Ukkusuri, S.~V. (2021).
\newblock On the approximation of cooperative heterogeneous multi-agent
  reinforcement learning ({MARL}) using mean field control ({MFC}).
\newblock {\em arXiv preprint arXiv:2109.04024}.

\bibitem[Newman, 2005]{newman2005power}
Newman, M. (2005).
\newblock Power laws, pareto distributions and zipf's law.
\newblock {\em Contemporary Physics}, 46(5):323--351.

\bibitem[Newman, 2018]{newman2018networks}
Newman, M. (2018).
\newblock {\em Networks}.
\newblock Oxford University Press.

\bibitem[Newman, 2002]{newman2002spread}
Newman, M.~E. (2002).
\newblock Spread of epidemic disease on networks.
\newblock {\em Physical Review E}, 66(1):016128.

\bibitem[Nosonovsky and Roy, 2020]{nosonovsky2020scaling}
Nosonovsky, M. and Roy, P. (2020).
\newblock Scaling in colloidal and biological networks.
\newblock {\em Entropy}, 22(6):622.

\bibitem[Nourian and Caines, 2013]{nourian2013mm}
Nourian, M. and Caines, P.~E. (2013).
\newblock $\epsilon$-{Nash} mean field game theory for nonlinear stochastic
  dynamical systems with major and minor agents.
\newblock {\em SIAM Journal on Control and Optimization}, 51(4):3302--3331.

\bibitem[Orabona et~al., 2015]{orabona2015generalized}
Orabona, F., Crammer, K., and Cesa-Bianchi, N. (2015).
\newblock A generalized online mirror descent with applications to
  classification and regression.
\newblock {\em Machine Learning}, 99(3):411--435.

\bibitem[Papadimitriou, 2001]{papadimitriou2001algorithms}
Papadimitriou, C. (2001).
\newblock Algorithms, games, and the internet.
\newblock In {\em Proceedings of the Thirty-Third Annual ACM Symposium on
  Theory of Computing}, pages 749--753.

\bibitem[Papadimitriou, 2007]{papadimitriou2007complexity}
Papadimitriou, C.~H. (2007).
\newblock The complexity of finding {Nash} equilibria.
\newblock {\em Algorithmic Game Theory}, 2:30.

\bibitem[Pastor-Satorras et~al., 2015]{pastor2015epidemic}
Pastor-Satorras, R., Castellano, C., Van~Mieghem, P., and Vespignani, A.
  (2015).
\newblock Epidemic processes in complex networks.
\newblock {\em Reviews of Modern Physics}, 87(3):925.

\bibitem[Perolat et~al., 2021]{perolat2021scaling}
Perolat, J., Perrin, S., Elie, R., Lauri{\`e}re, M., Piliouras, G., Geist, M.,
  Tuyls, K., and Pietquin, O. (2021).
\newblock Scaling up mean field games with online mirror descent.
\newblock {\em arXiv preprint arXiv:2103.00623}.

\bibitem[Perrin et~al., 2021a]{perrin2021generalization}
Perrin, S., Lauri{\`e}re, M., P{\'e}rolat, J., {\'E}lie, R., Geist, M., and
  Pietquin, O. (2021a).
\newblock Generalization in mean field games by learning master policies.
\newblock {\em arXiv preprint arXiv:2109.09717}.

\bibitem[Perrin et~al., 2021b]{perrin2021mean}
Perrin, S., Laurière, M., Pérolat, J., Geist, M., Élie, R., and Pietquin, O.
  (2021b).
\newblock Mean field games flock! the reinforcement learning way.
\newblock In {\em Proceedings of the Thirtieth International Joint Conference
  on Artificial Intelligence, {IJCAI-21}}, pages 356--362. International Joint
  Conferences on Artificial Intelligence Organization.
\newblock Main Track.

\bibitem[Perrin et~al., 2020]{perrin2020fictitious}
Perrin, S., P{\'e}rolat, J., Lauri{\`e}re, M., Geist, M., Elie, R., and
  Pietquin, O. (2020).
\newblock Fictitious play for mean field games: Continuous time analysis and
  applications.
\newblock {\em Advances in Neural Information Processing Systems},
  33:13199--13213.

\bibitem[Qu et~al., 2020]{qu2020scalable}
Qu, G., Wierman, A., and Li, N. (2020).
\newblock Scalable reinforcement learning of localized policies for multi-agent
  networked systems.
\newblock In {\em Learning for Dynamics and Control}, pages 256--266. PMLR.

\bibitem[Rozemberczki et~al., 2019]{rozemberczki2019gemsec}
Rozemberczki, B., Davies, R., Sarkar, R., and Sutton, C. (2019).
\newblock {GEMSEC}: Graph embedding with self clustering.
\newblock In {\em 2019 IEEE/ACM International Conference on Advances in Social
  Networks Analysis and Mining (ASONAM)}, pages 65--72. IEEE Computer Society.

\bibitem[Saldi et~al., 2018]{saldi2018markov}
Saldi, N., Basar, T., and Raginsky, M. (2018).
\newblock Markov--{Nash} equilibria in mean-field games with discounted cost.
\newblock {\em SIAM Journal on Control and Optimization}, 56(6):4256--4287.

\bibitem[Sporns, 2022]{sporns2022structure}
Sporns, O. (2022).
\newblock Structure and function of complex brain networks.
\newblock {\em Dialogues in Clinical Neuroscience}.

\bibitem[Srebro et~al., 2011]{srebro2011universality}
Srebro, N., Sridharan, K., and Tewari, A. (2011).
\newblock On the universality of online mirror descent.
\newblock {\em Advances in Neural Information Processing Systems}, 24.

\bibitem[Subramanian and Mahajan, 2019]{subramanian2019reinforcement}
Subramanian, J. and Mahajan, A. (2019).
\newblock Reinforcement learning in stationary mean-field games.
\newblock In {\em Proceedings of the 18th International Conference on
  Autonomous Agents and MultiAgent Systems}, pages 251--259.

\bibitem[Tangpi and Zhou, 2022]{tangpi2022optimal}
Tangpi, L. and Zhou, X. (2022).
\newblock Optimal investment in a large population of competitive and
  heterogeneous agents.
\newblock {\em arXiv preprint arXiv:2202.11314}.

\bibitem[Wheeldon and Counsell, 2003]{wheeldon2003power}
Wheeldon, R. and Counsell, S. (2003).
\newblock Power law distributions in class relationships.
\newblock In {\em Proceedings Third IEEE International Workshop on Source Code
  Analysis and Manipulation}, pages 45--54. IEEE.

\bibitem[Yang et~al., 2018]{yang2018learning}
Yang, J., Ye, X., Trivedi, R., Xu, H., and Zha, H. (2018).
\newblock Learning deep mean field games for modeling large population
  behavior.
\newblock In {\em International Conference on Learning Representations}.

\bibitem[Yang and Wang, 2020]{yang2020overview}
Yang, Y. and Wang, J. (2020).
\newblock An overview of multi-agent reinforcement learning from game
  theoretical perspective.
\newblock {\em arXiv preprint arXiv:2011.00583}.

\bibitem[Zhang et~al., 2021]{zhang2021multi}
Zhang, K., Yang, Z., and Ba{\c{s}}ar, T. (2021).
\newblock Multi-agent reinforcement learning: A selective overview of theories
  and algorithms.
\newblock {\em Handbook of Reinforcement Learning and Control}, pages 321--384.

\end{thebibliography}

\newpage

\onecolumn
\appendix

\section*{SUPPLEMENTARY MATERIAL}

The following pages provide additional information and proofs for the statements in the main part of the paper "Learning Sparse Graphon Mean Field Games". In the sections~\ref{sec:appendix_overview} to \ref{sect:thm_2_3_last} we give the proofs of Theorems~\ref{thm:muconv} and \ref{thm:approxnash}. Subsequently, sections~\ref{sec:smoothed_proofs} to \ref{sec:thm_smoothed_last} contain the proofs of Theorem~\ref{thm:smoothed_reward} and the necessary intermediate results.
Furthermore, the sections \ref{app:learning_methods} to \ref{sec:final_proof_OMD} include the details of the learning methods as well as the proofs for the theoretical convergence guarantee (Theorem~\ref{thm:unique_conv}) of the OMD algorithm. We provide a derivation of the cutoff power law graphon in section~\ref{app:cut_off} and conclude the supplementary materials with the problem details for our examples in section~\ref{app:prob_details}.

\section{OVERVIEW}\label{sec:appendix_overview}

In this section we collect intermediate results and auxiliary definitions that help to prove Theorems~\ref{thm:muconv} and \ref{thm:approxnash}. Proofs which require an extensive argumentation are deferred to the subsequent paragraphs.
First, we define the $\alpha$-neighborhood map $\mathbb G^\alpha \colon \mathcal M_t \to \mathcal P(\mathcal X)$ of an agent $\alpha \in \mathcal{I}$ by 
\begin{align*}
    \mathbb G^\alpha(\boldsymbol \mu_t) \coloneqq \int_{\mathcal I} W(\alpha, \beta) \mu^\beta_t \, \mathrm d\beta
\end{align*}
and similarly the empirical $\alpha$-neighborhood map $\mathbb G^\alpha_N \colon \mathcal M_t \to \mathcal P(\mathcal X)$ by
\begin{align*}
 \mathbb G^\alpha_N(\boldsymbol \mu_t) \coloneqq \int_{\mathcal I} \frac{W_N(\alpha, \beta)}{\rho_N} \mu^\beta_t \, \mathrm d\beta.
\end{align*}
Here, we point out that $\mathbb G^\alpha_t = \mathbb G^\alpha(\boldsymbol \mu_t)$  holds for the MF system and $\mathbb G^i_t = \mathbb G^{\frac i N}_N(\boldsymbol \mu^N_t)$ holds for the finite model.
For the following proofs, we require the concept of an ensemble transition kernel operator $P_{t, \boldsymbol{\nu}, W}^{\boldsymbol{\pi}}$ which is defined by
\begin{align*}
    \bc{\boldsymbol{\nu} P_{t, \boldsymbol{\nu}', W}^{\boldsymbol{\pi}}}^\alpha &= \sum_{x \in \mathcal X} \nu^\alpha (x) \sum_{u \in \mathcal{U}} \pi_t^\alpha (u \vert x) \cdot P \bc{\cdot \, \bigg\vert x, u, \int_{\mathcal{I}} W \bc{\alpha, \beta} {\nu'}^\beta \, \mathrm d \beta}
\end{align*}
where $W$ is a graphon, $\boldsymbol{\nu}, \boldsymbol{\nu}' \in \mathcal{M}_t$, and $\boldsymbol{\pi} \in \boldsymbol{\Pi}$. This definition especially implies the useful property $\boldsymbol{\mu}_{t+1} = \boldsymbol{\mu}_{t} P_{t, \boldsymbol{\mu}_t, W}^{\boldsymbol{\pi}}$. For convenience, we also introduce the notation
\begin{align*}
    (W \mu) (\alpha, x) \coloneqq \int_{\mathcal I} W (\alpha, \beta) \mu^\beta_t (x) \, \mathrm d\beta.
\end{align*}
To prove Theorem~\ref{thm:approxnash}, we have to establish the following result which is proven in a subsequent section. Note that this Lemma \ref{thm:muconv2} is not required in the standard graphon theory but becomes necessary for our $L^p$ graphon approach. Specifically, the proof of Lemma~\ref{lem:xconv} will include functions of the form $f_{N,i,x}'(x', \beta) = \frac{1}{\rho_N} W_N(\frac i N, \beta) \cdot \mathbf 1_{\cbc{x=x'}}$ for which a result like the one given by Theorem~\ref{thm:muconv} is required. However, the factor $\frac{1}{\rho_N}$ introduced as part of the $L^p$ graphon approach makes it impossible to bound functions such as $f_{N,i,x}'(x', \beta)$ uniformly over all $N$ because for sparse graphs $\frac{1}{\rho_N}$ approaches infinity as $N$ approaches infinity. Therefore, Theorem~\ref{thm:muconv} is not applicable in this case which makes a new statement, i.e. Lemma \ref{thm:muconv2}, necessary. Put differently, the formulation and proof of Lemma \ref{thm:muconv2} is a new mathematical contribution of our work.

\begin{lemma} \label{thm:muconv2}
	Let $\bc{c_N}_{N\geq 1}$ be a sequence of positive real numbers such that $c_N = o(1)$ and $\bc{c_N}^{-1} = o\bc{N}$ and let $B_N \subseteq \mathcal{V}_N$  be some sequence of sets with $\abs{B_N} = O \bc{N \cdot c_N}$. Then, we consider a sequence of  measurable functions $f_N \colon \mathcal X \times \mathcal I \to \mathbb R$ where for all $x \in \mathcal X$ we have $f_N (x, \alpha) = O \bc{c_N^{-1}}$ if $\alpha \in A_N \coloneqq \bigcup_{i \in B_N} \big( \frac{i -1}{N}, \frac{i}{N}\big]$ and $f_N (x, \alpha) = 0$ otherwise.
	Consider Lipschitz continuous $\boldsymbol \pi \in \boldsymbol \Pi$ up to a finite number of discontinuities $D_\pi$, with associated mean field ensemble $\boldsymbol \mu = \Psi(\boldsymbol \pi)$. Under Assumption~\ref{ass:W} and the $N$-agent policy $(\pi^1, \ldots, \pi^{i-1}, \hat \pi, \pi^{i+1}, \ldots, \pi^N) \in \Pi^N$ with $(\pi^1, \pi^2, \ldots, \pi^N) = \Gamma_N(\boldsymbol \pi) \in \Pi^N$, $\hat \pi \in \Pi$, $t \in \mathcal T$, we have
	\begin{align}
		&\E \brk{\abs{ \boldsymbol \mu^N_t \bc{f_N} - \boldsymbol \mu_t \bc{f_N}}} \to 0
	\end{align}
	uniformly over all possible deviations $\hat \pi \in \Pi, i \in  V_N$.
\end{lemma}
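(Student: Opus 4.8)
The plan is to run a forward induction on the time index $t$, proving the stated bound simultaneously for \emph{all} admissible triples $(c_N, B_N, f_N)$, with a rate depending only on the implied $O(\cdot)$ constants and on the graphon-convergence rate in Assumption~\ref{ass:W}, never on the particular sets or functions. The single structural fact driving everything is that, although $f_N$ blows up pointwise like $c_N^{-1}$, it is supported on $A_N$, whose Lebesgue measure is $\abs{B_N}/N = O(c_N)$; hence $f_N$ has uniformly bounded total mass, $\int_{\mathcal I}\sum_x \abs{f_N(x,\alpha)}\,\mathrm d\alpha = O(c_N^{-1})\cdot O(c_N)=O(1)$. I would first rewrite, with $I_i=(\tfrac{i-1}{N},\tfrac iN]$, both quantities as sums over active indices only, $\boldsymbol\mu^N_t(f_N)=\sum_{i\in B_N}\int_{I_i} f_N(X^i_t,\alpha)\,\mathrm d\alpha$ and $\boldsymbol\mu_t(f_N)=\sum_{i\in B_N}\int_{I_i}\sum_x f_N(x,\alpha)\mu^\alpha_t(x)\,\mathrm d\alpha$. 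The deviating agent $\hat\pi$ enters at most one summand, contributing $O(c_N^{-1}/N)=o(1)$, which already yields the claimed uniformity over $\hat\pi$ and $i$.

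\textbf{Base case and fluctuation term.} For $t=0$ the summands $Y_i:=\int_{I_i}\bc{f_N(X^i_0,\alpha)-\E_{X\sim\mu_0}[f_N(X,\alpha)]}\,\mathrm d\alpha$ are independent and centered with $\abs{Y_i}=O(c_N^{-1}/N)$, so by independence $\mathrm{Var}\bc{\sum_{i\in B_N}Y_i}=\abs{B_N}\cdot O(c_N^{-2}/N^2)=O(Nc_N)\cdot O(c_N^{-2}/N^2)=O\bc{1/(Nc_N)}$, and Cauchy--Schwarz gives $\E\abs{\sum_i Y_i}=O(1/\sqrt{Nc_N})\to 0$ precisely because $c_N^{-1}=o(N)$. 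In the step $s\to s+1$ I would split $\boldsymbol\mu^N_{s+1}(f_N)-\boldsymbol\mu_{s+1}(f_N)=(\mathrm I)+(\mathrm{II})$, with $(\mathrm I)=\boldsymbol\mu^N_{s+1}(f_N)-\E[\boldsymbol\mu^N_{s+1}(f_N)\mid\mathcal F_s]$ and $(\mathrm{II})=\E[\boldsymbol\mu^N_{s+1}(f_N)\mid\mathcal F_s]-\boldsymbol\mu_{s+1}(f_N)$, where $\mathcal F_s$ is the history through the actions at time $s$. Conditionally on $\mathcal F_s$ the next states are independent, so the same variance computation gives $\E\abs{(\mathrm I)}\le\sqrt{\E[\mathrm{Var}((\mathrm I)\mid\mathcal F_s)]}=O(1/\sqrt{Nc_N})\to 0$.

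\textbf{Propagation term.} For $(\mathrm{II})$ I would expand one step of the dynamics and compare term by term, first replacing the empirical aggregate $\mathbb G^i_s$ by the true $\mathbb G^\alpha(\boldsymbol\mu_s)$ and then the empirical measure $\mu^N_s$ by its limit $\mu_s$. The second replacement collapses exactly to $\boldsymbol\mu^N_s(g_N)-\boldsymbol\mu_s(g_N)$ for the transformed weight $g_N(x',\alpha)=\sum_x f_N(x,\alpha)\sum_u\pi^\alpha_s(u\mid x')P(x\mid x',u,\mathbb G^\alpha(\boldsymbol\mu_s))$; since $\pi,P\le 1$ and $\mathcal X,\mathcal U$ are finite, $g_N$ inherits height $O(c_N^{-1})$ and support $A_N$, hence is admissible with the \emph{same} parameter $c_N$ and the induction hypothesis applies at time $s$. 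The first replacement is the crux: using the Lipschitz bound $\abs{P(\cdot\mid x',u,\mathbb G^i_s)-P(\cdot\mid x',u,\mathbb G^\alpha(\boldsymbol\mu_s))}\le L_P\norm{\mathbb G^i_s-\mathbb G^\alpha(\boldsymbol\mu_s)}$ and $\sum_{x'}\mu^{N,i}_s(x')=1$, it is bounded by $O(1)\cdot L_P\max_{i}\E\norm{\mathbb G^i_s-\mathbb G^\alpha(\boldsymbol\mu_s)}$. I would then split $\norm{\mathbb G^i_s-\mathbb G^\alpha(\boldsymbol\mu_s)}$ into $\sum_{x'}\abs{\boldsymbol\mu^N_s(h^{x'}_i)-\boldsymbol\mu_s(h^{x'}_i)}$ with $h^{x'}_i(y,\beta)=\rho_N^{-1}W_N(\tfrac iN,\beta)\mathbf 1_{\{y=x'\}}$, plus the graphon residual $\int(\rho_N^{-1}W_N-W)(\tfrac iN,\beta)\mu^\beta_s\,\mathrm d\beta$, which vanishes uniformly in $i$ by the operator-norm convergence of Assumption~\ref{ass:W}. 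Crucially $h^{x'}_i$ is itself admissible, with parameter $\rho_N$ (height $\rho_N^{-1}$, support the neighbors of $i$ of measure $O(\rho_N)$, and $\rho_N=o(1)$, $\rho_N^{-1}=o(N)$), so the induction hypothesis applies to it uniformly over $i$.

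\textbf{Main obstacle.} I expect the essential difficulty to be exactly this last step: the graphon factor carries its own diverging weight $\rho_N^{-1}$, so to close the induction one must carry the statement across \emph{different} scaling parameters ($c_N$ for $f_N$, $\rho_N$ for $h^{x'}_i$) with a rate uniform in $B_N$, $f_N$ and $i$. The reabsorption works because every error produced at time $s$ is $O(1)$-uniform, and multiplying it back by the weight $\abs{f_N}$ and summing over the active indices is controlled by the repeatedly used bounded-mass identity $\abs{B_N}\cdot c_N^{-1}/N=O(1)$. Finally, the finitely many discontinuities $D_\pi$ of $\boldsymbol\pi$ touch only $O(D_\pi)$ intervals $I_i$, contributing $O(D_\pi c_N^{-1}/N)\to 0$, and replacing $\mu^\alpha_s$ by $\mu^{i/N}_s$ within each $I_i$ costs $O(1/N)$ after summation by Lipschitz continuity, so neither obstructs the argument; this structure mirrors the proof of Theorem~\ref{thm:muconv} but with $M_f$ everywhere replaced by $O(c_N^{-1})$ restricted to $A_N$.
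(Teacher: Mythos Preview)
Your overall architecture---induction on $t$, the variance bound $O(1/\sqrt{Nc_N})$ for the martingale increment, and the reduction of the propagated term to an admissible $g_N$ with the same parameter $c_N$---matches the paper's proof, and the base case and your terms $(\mathrm I)$ and ``second replacement'' are essentially identical to the corresponding pieces there.

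The genuine gap is in your ``first replacement''. Two claims do not follow from the hypotheses. First, you assert that the graphon residual $\int_{\mathcal I}(\rho_N^{-1}W_N-W)(\tfrac iN,\beta)\mu^\beta_s\,\mathrm d\beta$ vanishes \emph{uniformly in $i$} by Assumption~\ref{ass:W}. Operator-norm convergence gives only $\int_{\mathcal I}\bigl|\int_{\mathcal I}(\rho_N^{-1}W_N-W)(\alpha,\beta)g(\beta)\,\mathrm d\beta\bigr|\,\mathrm d\alpha\to 0$, i.e.\ $L^1$-control in the row index $\alpha$, not $L^\infty$; you cannot pass to a $\max_i$. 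Second, you claim $h^{x'}_i(y,\beta)=\rho_N^{-1}W_N(\tfrac iN,\beta)\mathbf 1_{\{y=x'\}}$ is admissible with parameter $\rho_N$ because its support has measure $O(\rho_N)$. But that support has measure $\deg(i)/N$, and nothing in Assumption~\ref{ass:W} forces $\deg(i)=O(N\rho_N)$ \emph{uniformly} in $i$; for power-law graphs high-degree vertices violate it. So the induction hypothesis does not apply to $h^{x'}_i$ uniformly, and you correctly flag this step as ``the main obstacle'', but the proposed bootstrapping across scales does not close it.

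The paper avoids both issues by ordering the substitutions differently. In its five-term split of the propagation, it first replaces $W_N/\rho_N$ by the limiting $W$ inside the neighborhood (keeping $\boldsymbol\mu^N_t$), and only afterwards compares $\boldsymbol\mu^N_t$ with $\boldsymbol\mu_t$. The point of doing it in this order is that, once the neighborhood weight is $W(\alpha,\cdot)$, the auxiliary test function becomes $f'_{x',\alpha}(x,\beta)=W(\alpha,\beta)\mathbf 1_{\{x=x'\}}$, which is \emph{uniformly bounded}; hence Theorem~\ref{thm:muconv} (the bounded-function result), not the induction hypothesis of the present lemma, yields $\E\bigl|\boldsymbol\mu^N_t(f'_{x',\alpha})-\boldsymbol\mu_t(f'_{x',\alpha})\bigr|=o(1)$ uniformly in $\alpha$. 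Integrating this over $A_N$ of measure $O(c_N)$ gives $o(c_N)$, which cleanly absorbs the $O(c_N^{-1})$ prefactor. The diverging factor $\rho_N^{-1}$ thus never appears inside a test function passed to a measure-comparison lemma, and neither of your two problematic claims is needed.
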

With Lemma~\ref{thm:muconv2} in place, we are able to provide another intermediate result which is a key element for the proof of Theorem~\ref{thm:approxnash}.
\begin{lemma} \label{lem:xconv}
Consider Lipschitz continuous $\boldsymbol \pi \in \boldsymbol \Pi$ up to a finite number of discontinuities $D_\pi$, with associated mean field ensemble $\boldsymbol \mu = \Psi(\boldsymbol \pi)$. Under Assumptions~\ref{ass:W} and \ref{ass:Lip} and the $N$-agent policy $(\pi^1, \ldots, \pi^{i-1}, \hat \pi, \pi^{i+1}, \ldots, \pi^N) \in \Pi^N$ where $(\pi^1, \pi^2, \ldots, \pi^N) = \Gamma_N(\boldsymbol \pi) \in \Pi^N$, $\hat \pi \in \Pi$ arbitrary, for any uniformly bounded family of functions $\mathcal G$ from $\mathcal X$ to $\mathbb R$ and any $\varepsilon, p > 0$, $t \in \mathcal T$, there exists $N' \in \mathbb N$ such that for all $N > N'$ we have
\begin{align} \label{eq:xconv}
    \sup_{g \in \mathcal G} \left| \E \left[ g(X^i_{t}) \right] - \E \left[ g(\hat X^{\frac i N}_{t}) \right] \right| < \varepsilon
\end{align}
uniformly over $\hat \pi \in \Pi, i \in \mathcal W_N$ for some $\mathcal W_N \subseteq V_N$ with $|\mathcal W_N| \geq \left\lfloor (1-p) N \right\rfloor$. 

Similarly, for any uniformly Lipschitz, uniformly bounded family of measurable functions $\mathcal H$ from $\mathcal X \times \mathcal B(\mathcal X)$ to $\mathbb R$ and any $\varepsilon, p > 0$, $t \in \mathcal T$, there exists $N' \in \mathbb N$ such that for all $N > N'$ we have
\begin{align} \label{eq:xmuconv}
    \sup_{h \in \mathcal H} \left| \E \left[ h(X^i_{t}, \mathbb G^{\frac i N}_N(\boldsymbol \mu^N_t)) \right] - \E \left[ h(\hat X^{\frac i N}_{t}, \mathbb G^{\frac i N}(\boldsymbol \mu_t)) \right] \right| < \varepsilon
\end{align}
uniformly over $\hat \pi \in \Pi, i \in \mathcal W_N$ for some $\mathcal W_N \subseteq V_N$ with $|\mathcal W_N| \geq \left\lfloor (1-p) N \right\rfloor$.
\end{lemma}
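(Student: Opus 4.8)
The plan is to establish \eqref{eq:xconv} and \eqref{eq:xmuconv} simultaneously by induction on $t \in \mathcal T$, since in the forward dynamics \eqref{eq:dynamics} each statement at a given time feeds into the other. The exceptional set $\mathcal W_N$ will be defined as the complement of two small sets of indices: those on which a deterministic, $\hat\pi$-independent graphon comparison is large, and those corresponding to abnormally high-degree nodes. The first set only becomes small after averaging over the agent index, which is precisely what forces the relaxation to a fraction $1-p$ rather than all agents. At $t = 0$ both $X^i_0$ and $\hat X^{i/N}_0$ are distributed as $\mu_0$, so \eqref{eq:xconv} holds with zero error and the state contribution to \eqref{eq:xmuconv} vanishes, leaving only the neighborhood mean field to control by the single-step estimate below.

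For the inductive step I would pass from $t$ to $t+1$ in two moves. First, writing $\tilde g(x', \nu) \coloneqq \sum_{u} \hat\pi_t(u \mid x') \sum_{x} P(x \mid x', u, \nu)\, g(x)$, the dynamics give $\E\brk{g(X^i_{t+1})} = \E\brk{\tilde g(X^i_t, \mathbb G^i_t)}$ and $\E\brk{g(\hat X^{i/N}_{t+1})} = \E\brk{\tilde g(\hat X^{i/N}_t, \mathbb G^{i/N}(\boldsymbol\mu_t))}$. As $g$ is bounded and $P$ is $L_P$-Lipschitz in its measure argument by Assumption~\ref{ass:Lip}, the family $\cbc{\tilde g : g \in \mathcal G,\, \hat\pi \in \Pi,\, i,\, N}$ is uniformly bounded and uniformly Lipschitz, so \eqref{eq:xconv} at $t+1$ is an instance of \eqref{eq:xmuconv} at $t$. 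Second, I would split \eqref{eq:xmuconv} at $t+1$ by inserting $\E\brk{h(X^i_{t+1}, \mathbb G^{i/N}(\boldsymbol\mu_{t+1}))}$: since $\mathbb G^{i/N}(\boldsymbol\mu_{t+1})$ is deterministic, the resulting state term has the form $\E\brk{g'(X^i_{t+1})} - \E\brk{g'(\hat X^{i/N}_{t+1})}$ with $g'(\cdot) = h(\cdot, \mathbb G^{i/N}(\boldsymbol\mu_{t+1}))$ bounded, controlled by \eqref{eq:xconv} at $t+1$ just established, while the neighborhood term is bounded using the uniform Lipschitz constant $L_h$ by $L_h\, \E\brk{\norm{\mathbb G^i_{t+1} - \mathbb G^{i/N}(\boldsymbol\mu_{t+1})}}$.

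Everything thus reduces to the single-time estimate $\E\brk{\norm{\mathbb G^i_s - \mathbb G^{i/N}(\boldsymbol\mu_s)}} \to 0$ for $i \in \mathcal W_N$, which I would prove via the decomposition
\[
\mathbb G^i_s - \mathbb G^{i/N}(\boldsymbol\mu_s) = \underbrace{\bc{\mathbb G^{i/N}_N(\boldsymbol\mu^N_s) - \mathbb G^{i/N}_N(\boldsymbol\mu_s)}}_{(A_1)} + \underbrace{\bc{\mathbb G^{i/N}_N(\boldsymbol\mu_s) - \mathbb G^{i/N}(\boldsymbol\mu_s)}}_{(A_2)}.
\]
For each fixed $x \in \mathcal X$, the coordinate $(A_1)(x)$ equals $\boldsymbol\mu^N_s(f'_{N,i,x}) - \boldsymbol\mu_s(f'_{N,i,x})$ with $f'_{N,i,x}(x',\beta) = \rho_N^{-1} W_N(i/N,\beta)\, \mathbf 1_{\{x = x'\}}$. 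As $W_N$ is the $\cbc{0,1}$-valued adjacency step-graphon, this function takes only the values $0$ and $\rho_N^{-1}$ and is supported on the neighbor intervals of $i$; with $c_N = \rho_N$ (satisfying $c_N = o(1)$ and $c_N^{-1} = o(N)$ since the average degree diverges) and $B_N$ the neighbor set of $i$, this is exactly the class covered by Lemma~\ref{thm:muconv2}, giving $\E\brk{\abs{(A_1)(x)}} \to 0$. The term $(A_2)(x) = \int_{\mathcal I}\bc{\rho_N^{-1} W_N(i/N,\beta) - W(i/N,\beta)}\mu^\beta_s(x)\,\mathrm d\beta$ is deterministic, and its average over the agent index is bounded by $\norm{\rho_N^{-1} W_N - W}_{L_\infty \to L_1} \to 0$ from Assumption~\ref{ass:W}; by Markov's inequality the indices on which $\abs{(A_2)(x)}$ exceeds a threshold have measure tending to zero. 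Taking $\mathcal W_N$ to be the complement, over the finitely many $s \in \mathcal T$ and $x \in \mathcal X$, of these exceptional sets yields a single set with $\abs{\mathcal W_N} \ge \lfloor (1-p)N \rfloor$.

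The main obstacle is the neighborhood term $(A_1)$, and specifically obtaining its convergence \emph{uniformly} over the agent index $i$ and the deviation $\hat\pi$, which is the precise reason Lemma~\ref{thm:muconv2} is needed in place of Theorem~\ref{thm:muconv} (the unbounded factor $\rho_N^{-1}$ precludes the latter). Two points require care here: one must verify that $\abs{B_N} = \deg_N(i) = O(N\rho_N)$ holds with a uniform constant, which fails for the rare high-degree hubs and hence requires excluding an $O(p)$ fraction of nodes whose degree exceeds a multiple of the average (this is the second contribution to the exceptional set), and one must confirm that the quantitative bound in Lemma~\ref{thm:muconv2} depends only on the structural constants $c_N = \rho_N$ and $\abs{B_N} = O(N\rho_N)$, not on the individual neighbor pattern. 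The remaining bookkeeping is routine: $(A_2)$, and hence $\mathcal W_N$, is independent of $\hat\pi$ because $\boldsymbol\mu_s = \Psi(\boldsymbol\pi)$ is unaffected by a single-agent deviation and the agent's own state perturbs the empirical aggregate $\boldsymbol\mu^N_s$ only at order $1/N$, while uniformity over $g$ and $h$ is inherited from the uniform boundedness and Lipschitz continuity of the families.
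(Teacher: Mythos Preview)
Your proposal is correct and follows essentially the same approach as the paper: the same three-way split of \eqref{eq:xmuconv} into a state term (reduced to \eqref{eq:xconv}) and the two neighborhood pieces $(A_1)$, $(A_2)$, with $(A_1)$ controlled by Lemma~\ref{thm:muconv2} applied to $f'_{N,i,x}(x',\beta)=\rho_N^{-1}W_N(i/N,\beta)\,\mathbf 1_{\{x=x'\}}$. The paper defers \eqref{eq:xconv} and the term $(A_2)$ to \cite{cui2021learning}; your explicit simultaneous induction and Markov-inequality construction of $\mathcal W_N$ (including the degree-exclusion needed to make $|B_N|=O(N\rho_N)$ uniform in $i$ for Lemma~\ref{thm:muconv2}) spell out details that the paper leaves implicit.
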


Finally, keeping in mind the above statements, Theorem~\ref{thm:approxnash} can be proven.

\begin{proof}[Proof of Theorem~\ref{thm:approxnash}]
Leveraging Lemma~\ref{lem:xconv}, we can establish Theorem~\ref{thm:approxnash} with an argumentation as in \cite[proofs of Corollary A.1 and Theorem 3]{cui2021learning}.
\end{proof}

\section{PROOF OF THEOREM~\ref{thm:muconv}}
\begin{proof}
We provide a proof by induction over $t$ which is structurally similar to an argument in \cite{cui2021learning}. The case $t=0$ follows from a law of large numbers argument.
For the induction step we consider the inequality
\begin{align*}
    \E \left[ \left| \boldsymbol \mu^N_{t+1}(f) - \boldsymbol \mu_{t+1}(f) \right| \right] &\leq \E \left[ \left| \boldsymbol \mu^N_{t+1}(f) - \boldsymbol \mu^N_t P^{\boldsymbol \pi^N}_{t, \boldsymbol \mu^N_t, \frac{W_N}{\rho_N}}(f) \right| \right] \\
    &\qquad \quad + \E \left[ \left| \boldsymbol \mu^N_t P^{\boldsymbol \pi^N}_{t, \boldsymbol \mu^N_t, \frac{W_N}{\rho_N}}(f) - \boldsymbol \mu^N_t P^{\boldsymbol \pi^N}_{t, \boldsymbol \mu^N_t, W}(f) \right| \right] \\
    &\qquad \quad + \E \left[ \left| \boldsymbol \mu^N_t P^{\boldsymbol \pi^N}_{t, \boldsymbol \mu^N_t, W}(f) - \boldsymbol \mu^N_t P^{\boldsymbol \pi}_{t, \boldsymbol \mu^N_t, W}(f) \right| \right] \\
    &\qquad \quad + \E \left[ \left| \boldsymbol \mu^N_t P^{\boldsymbol \pi}_{t, \boldsymbol \mu^N_t, W}(f) - \boldsymbol \mu^N_t P^{\boldsymbol \pi}_{t, \boldsymbol \mu_t, W}(f) \right| \right] \\
    &\qquad \quad + \E \left[ \left| \boldsymbol \mu^N_t P^{\boldsymbol \pi}_{t, \boldsymbol \mu_t, W}(f) - \boldsymbol \mu_{t+1}(f) \right| \right] \, .
\end{align*}
where, as before, $f\colon \mathcal X \times \mathcal I \to \mathbb R$ is an arbitrary bounded function such that $|f| \leq M_f$. While the last three terms can be bounded as in \cite{cui2021learning}, the first two summands now include the factor $\rho_N$. Thus, the first term is bounded by
\begin{align*}
    &\E \left[ \left| \boldsymbol \mu^N_{t+1}(f) - \boldsymbol \mu^N_t P^{\boldsymbol \pi^N}_{t, \boldsymbol \mu^N_t, \frac{W_N}{\rho_N}}(f) \right| \right] \\
    &\quad = \E \left[ \left| \int_{\mathcal I} \sum_{x \in \mathcal X} \mu^{N, \alpha}_{t+1}(x) \, f(x, \alpha) \, \mathrm d\alpha  -  \int_{\mathcal I} \sum_{\substack{x, x' \in \mathcal X \\ u \in \mathcal U}} \mu^{N, \alpha}_t(x)  \pi^{N,\alpha}_t(u \mid x)
    \cdot P \left(x' \mid x, u, \left(\frac{W_N}{\rho_N} \mu^N_t \right) (\alpha) \right) \, f(x', \alpha) \, \mathrm d\alpha \right| \right] \\
    &\quad = \E \left[ \left| \sum_{i \in  V_N} \left( \int_{(\frac{i-1}{N}, \frac{i}{N}]} f(X^i_{t+1}, \alpha) \, \mathrm d\alpha - \E \left[ \int_{(\frac{i-1}{N}, \frac{i}{N}]} f(X^i_{t+1}, \alpha) \, \mathrm d\alpha \innermid \mathbf X_t \right] \right) \right| \right] \\
    &\quad \leq \left( \E \left[ \left( \sum_{i \in  V_N} \left( \int_{(\frac{i-1}{N}, \frac{i}{N}]} f(X^i_{t+1}, \alpha) \, \mathrm d\alpha
    - \E \left[ \int_{(\frac{i-1}{N}, \frac{i}{N}]} f(X^i_{t+1}, \alpha) \, \mathrm d\alpha \innermid \mathbf X_t \right] \right) \right)^2 \right] \right)^{\frac 1 2} \\
    &\quad = \left( \sum_{i \in  V_N} \E \left[ \left( \int_{(\frac{i-1}{N}, \frac{i}{N}]} f(X^i_{t+1}, \alpha) \, \mathrm d\alpha
    - \E \left[ \int_{(\frac{i-1}{N}, \frac{i}{N}]} f(X^i_{t+1}, \alpha) \, \mathrm d\alpha \innermid \mathbf X_t \right] \right)^2 \right] \right)^{\frac 1 2} \leq \frac{2M_f}{\sqrt N}
\end{align*}
where we point out that the $\{ X^i_{t+1} \}_{i \in  V_N}$ are independent if conditioned on $\mathbf X_t \equiv \{ X^i_{t} \}_{i \in  V_N}$. Turning to the second summand, under Assumption~\ref{ass:W} we can derive
\begin{align*}
    &\E \left[ \left| \boldsymbol \mu^N_t P^{\boldsymbol \pi^N}_{t, \boldsymbol \mu^N_t, \frac{W_N}{\rho_N}}(f) - \boldsymbol \mu^N_t P^{\boldsymbol \pi^N}_{t, \boldsymbol \mu^N_t, W}(f) \right| \right] \\
    &\quad \leq |\mathcal X| M_f L_P \E \left[ \int_{\mathcal I} \left\Vert \int_{\mathcal I} \frac{W_N(\alpha, \beta)}{\rho_N} \mu^{N,\beta}_t \, \mathrm d\beta - \int_{\mathcal I} W(\alpha, \beta) \mu^{N,\beta}_t \, \mathrm d\beta \right\Vert \, \mathrm d\alpha \right] \\
    &\quad \leq |\mathcal X|^2 M_f L_P \sup_{x \in \mathcal X} \E \left[ \int_{\mathcal I} \left| \int_{\mathcal I} \frac{W_N(\alpha, \beta)}{\rho_N}\mu^{N,\beta}_t(x) 
    - W(\alpha, \beta)\mu^{N,\beta}_t(x) \, \mathrm d\beta \right| \, \mathrm d\alpha \right] \to 0
\end{align*}
 where $\mu^{N,\beta}_t(x)$ as a probability is less than or equal to one. The convergence in the last line is at rate $ O(1/\sqrt N)$ if the convergence rate in Assumption~\ref{ass:W} is also $ O(1/\sqrt N)$.
\end{proof}

\section{PROOF OF LEMMA~\ref{thm:muconv2}}
\begin{proof}
	The proof is by induction as follows. Note that it is similar to the proof of Theorem~\ref{thm:muconv}.
	
	\paragraph{Base case.} 
	Starting with $t=0$, we derive
	\begin{align*}
		&\E \left[ \left| \boldsymbol \mu^N_{0}\bc{f_N} - \boldsymbol \mu_{0}\bc{f_N} \right| \right] \\
		&\quad = \E \left[ \left| \int_{\mathcal I} \sum_{x \in \mathcal X} (\mu^{N, \alpha}_{0}(x) - \mu^\alpha_0(x)) f_N(x, \alpha) \, \mathrm d\alpha \right| \right] \\
		&\quad = \E \left[ \left| \int_{\mathcal I} \sum_{x \in \mathcal X} \bc{\sum_{i \in V_N} \boldsymbol{1}_{ \alpha \in (\frac{i-1}{N}, \frac{i}{N}] } \delta_{X_0^i}} \, f_N(x, \alpha) - \sum_{x \in \mathcal X} \mu^\alpha_0(x) \, f_N(x, \alpha) \, \mathrm d\alpha \right| \right] \\
		&\quad \overset{(Fubini)}{=} \E \left[ \left|\sum_{i \in  V_N}  \int_{(\frac{i-1}{N}, \frac{i}{N}]} f_N(X^i_{0}, \alpha) \, \mathrm d\alpha - \E \left[ \int_{(\frac{i-1}{N}, \frac{i}{N}]} f_N(X^i_{0}, \alpha) \, \mathrm d\alpha \right] \right| \right] \\
		&\quad \leq \left( \E \left[ \left( \sum_{i \in B_N} \left( \int_{(\frac{i-1}{N}, \frac{i}{N}]} f_N(X^i_{0}, \alpha) \, \mathrm d\alpha - \E \left[ \int_{(\frac{i-1}{N}, \frac{i}{N}]} f_N(X^i_{0}, \alpha) \, \mathrm d\alpha \right] \right) \right)^2 \right] \right)^{\frac 1 2} \\
		&\quad = \left(  \sum_{i \in B_N} \E \left[ \left( \int_{(\frac{i-1}{N}, \frac{i}{N}]} f_N(X^i_{0}, \alpha) \, \mathrm d\alpha
		- \E \left[ \int_{(\frac{i-1}{N}, \frac{i}{N}]} f_N(X^i_{0}, \alpha) \, \mathrm d\alpha \right] \right)^2 \right] \right)^{\frac 1 2}\\
		&\quad = \left(  \sum_{i \in B_N} \E \left[ \left( O \bc{c_N^{-1}} \cdot \frac{1}{N}  \right)^2 \right] \right)^{\frac 1 2}  = \left(  O \bc{N \cdot c_N} \cdot  \left( O \bc{c_N^{-2}} \cdot \frac{1}{N^2}  \right) \right)^{\frac 1 2} = \frac{O \bc{c_N^{- \frac 1 2}}}{\sqrt N} = o (1)
	\end{align*}
	by exploiting the fact that the $\cbc{X^i_0}_{i \in V_N}$ are independent and $X^i_0 \sim \mu_0 = \mu^\alpha_0$ holds $\forall i \in V_N, \alpha \in \mathcal I$.
	
	\paragraph{Induction step.} 
	For performing the induction step, we start with the following reformulation
	\begin{align*}
		\E \left[ \left| \boldsymbol \mu^N_{t+1}\bc{f_N} - \boldsymbol \mu_{t+1}\bc{f_N} \right| \right]  &\leq \E \left[ \left| \boldsymbol \mu^N_{t+1}\bc{f_N} - \boldsymbol \mu^N_t P^{\boldsymbol \pi^N}_{t, \boldsymbol \mu^N_t, \frac{W_N}{\rho_N}}\bc{f_N} \right| \right] \\
		&\quad + \E \left[ \left| \boldsymbol \mu^N_t P^{\boldsymbol \pi^N}_{t, \boldsymbol \mu^N_t, \frac{W_N}{\rho_N}}\bc{f_N} - \boldsymbol \mu^N_t P^{\boldsymbol \pi^N}_{t, \boldsymbol \mu^N_t, W}\bc{f_N} \right| \right] \\
		&\quad + \E \left[ \left| \boldsymbol \mu^N_t P^{\boldsymbol \pi^N}_{t, \boldsymbol \mu^N_t, W}\bc{f_N} - \boldsymbol \mu^N_t P^{\boldsymbol \pi}_{t, \boldsymbol \mu^N_t, W}\bc{f_N} \right| \right] \\
		&\quad + \E \left[ \left| \boldsymbol \mu^N_t P^{\boldsymbol \pi}_{t, \boldsymbol \mu^N_t, W}\bc{f_N} - \boldsymbol \mu^N_t P^{\boldsymbol \pi}_{t, \boldsymbol \mu_t, W}\bc{f_N} \right| \right] \\
		&\quad + \E \left[ \left| \boldsymbol \mu^N_t P^{\boldsymbol \pi}_{t, \boldsymbol \mu_t, W}\bc{f_N} - \boldsymbol \mu_{t+1}\bc{f_N} \right| \right]
	\end{align*}
	and assume as usual that the induction assumption is fulfilled for $t$. Apart from the second summand, all terms converge to zero by arguments as in Theorem~\ref{thm:muconv}. For the second term, however, we require a different argument. 

	Before we consider the second term itself, we make the following useful observation. By Assumption~\ref{ass:W}, we know that 
	\begin{align*} 
		&\left\Vert \frac{W_N}{\rho_N} - W \right\Vert_{L_\infty \to L_1} = \sup_{\norm{g}_\infty \leq 1} \int_{\mathcal{I}} \abs{\int_{\mathcal{I}} \left( \frac{W_N \bc{\alpha, \beta}}{\rho_N} - W \bc{\alpha, \beta}\right) g (\beta) \mathrm{d} \beta}  \mathrm{d} \alpha
	\end{align*}
	converges to zero which especially implies that for almost all $\alpha \in \mathcal{I}$ we have
	\begin{align*}
	 \abs{\int_{\mathcal{I}} \bc{\frac{W_N \bc{\alpha, \beta}}{\rho_N} - W \bc{\alpha, \beta}} \mu^{N,\beta}_t(x) \mathrm{d} \beta}  = o(1)
	\end{align*}
	for all  $x \in \mathcal{X}$ since $\mu^{N,\beta}_t(x)$ is trivially bounded by $1$. This in turn implies that for every positive real number $\epsilon > 0$ there exists a $N' \in \mathbb{N}$ such that for all $N > N'$ and $x \in \mathcal{X}$
	\begin{align}\label{int_sec_term}
		\abs{\int_{\mathcal{I}} \bc{\frac{W_N \bc{\alpha, \beta}}{\rho_N} - W \bc{\alpha, \beta}} \mu^{N,\beta}_t(x) \mathrm{d} \beta}  < \epsilon
	\end{align}
	holds for almost all $\alpha \in \mathcal{I}$. For an arbitrary but fixed $\epsilon > 0$ and under the assumption that $N > N'$, we have 
	\begin{align*}
		&\E \left[ \left| \boldsymbol \mu^N_t P^{\boldsymbol \pi^N}_{t, \boldsymbol \mu^N_t, \frac{W_N}{\rho_N}}\bc{f_N} - \boldsymbol \mu^N_t P^{\boldsymbol \pi^N}_{t, \boldsymbol \mu^N_t, W}\bc{f_N} \right| \right] \\
		&= \E \left[ \left| \int_{\mathcal I} \sum_{x \in \mathcal X} \mu^{N,\alpha}_t(x) \sum_{u \in \mathcal U} \pi^{N,\alpha}_t(u \mid x) \sum_{x' \in \mathcal X}  f_N (x', \alpha) 
		\left( P \left(x' \mid x, u, \left( \frac{W_N}{\rho_N} \mu^N_t \right) \right)  
		-  P \left(x' \mid x, u, \left( W \mu^N_t \right)  \right) \right) \mathrm d\alpha \right| \right] \\
		&= \E \left[ \left| \int_{A_N} \sum_{x \in \mathcal X} \mu^{N,\alpha}_t(x) \sum_{u \in \mathcal U} \pi^{N,\alpha}_t(u \mid x) \sum_{x' \in \mathcal X}  f_N (x', \alpha) 
        \left( P \left(x' \mid x, u, \left( \frac{W_N}{\rho_N} \mu^N_t \right) \right)  -  P \left(x' \mid x, u, \left( W \mu^N_t \right)  \right) \right) \mathrm d\alpha \right| \right] \\
		&\leq   O \bc{c_N^{-1}} \cdot  \sup_{x \in \mathcal X} \E \left[ \int_{A_N} \left| \left( \frac{W_N}{\rho_N} \mu^N_t \right) (\alpha, x) - \left( W \mu^N_t \right) (\alpha, x) \right| \, \mathrm d\alpha \right]\\
		&\overset{\mathrm{ineq.} \eqref{int_sec_term}}{\leq} O \bc{c_N^{-1}} \cdot  \sup_{x \in \mathcal X} \E \left[ \int_{A_N} \epsilon \, \mathrm d\alpha \right] 
		= O (1) \cdot \epsilon.
	\end{align*}
	Choosing $\epsilon$ arbitrarily close to zero yields the desired result, i.e. 
	\begin{align*}
		&\E \left[ \left| \boldsymbol \mu^N_t P^{\boldsymbol \pi^N}_{t, \boldsymbol \mu^N_t, \frac{W_N}{\rho_N}}\bc{f_N} - \boldsymbol \mu^N_t P^{\boldsymbol \pi^N}_{t, \boldsymbol \mu^N_t, W}\bc{f_N} \right| \right] \to 0.
	\end{align*}
	\end{proof}

\section{PROOF OF LEMMA~\ref{lem:xconv}}\label{sect:thm_2_3_last}
\begin{proof}
Parts of this proof are built on arguments in \cite[Proof of Lemma A.1]{cui2021learning}. Therefore, we focus on the differences and indicate parts that carry over from \cite{cui2021learning} to our case. For establishing \eqref{eq:xconv} we refer to \cite{cui2021learning}. Thus, it remains to show that \eqref{eq:xconv} implies \eqref{eq:xmuconv}.
As a starting point, let $\mathcal{H}$ be a  uniformly bounded (by $M_h$), uniformly Lipschitz (with Lipschitz constant $L_h$) family of measurable functions $h\colon \mathcal X \times \mathcal B(\mathcal X) \to \mathbb R$. The following reformulation holds for all $h \in \mathcal H$
\begin{align*}
    \left| \E \left[ h(X^i_{t}, \mathbb G^{\frac i N}_N(\boldsymbol \mu^N_t)) \right] - \E \left[ h(\hat X^{\frac i N}_{t}, \mathbb G^{\frac i N}(\boldsymbol \mu_t)) \right] \right|  
    &= \left| \E \left[ h(X^i_{t}, \mathbb G^{\frac i N}_N(\boldsymbol \mu^N_t)) \right] - \E \left[ h(X^i_{t}, \mathbb G^{\frac i N}_N(\boldsymbol \mu_t)) \right] \right| \\
    &\qquad + \left| \E \left[ h(X^i_{t}, \mathbb G^{\frac i N}_N(\boldsymbol \mu_t)) \right] - \E \left[ h(X^i_{t}, \mathbb G^{\frac i N}(\boldsymbol \mu_t)) \right] \right| \\
    &\qquad + \left| \E \left[ h(X^i_{t}, \mathbb G^{\frac i N}(\boldsymbol \mu_t)) \right] - \E \left[ h(\hat X^{\frac i N}_{t}, \mathbb G^{\frac i N}(\boldsymbol \mu_t)) \right] \right|.
\end{align*}
Focusing on the first term, we obtain
\begin{align*}
    \left| \E \left[ h(X^i_{t}, \mathbb G^{\frac i N}_N(\boldsymbol \mu^N_t)) \right] - \E \left[ h(X^i_{t}, \mathbb G^{\frac i N}_N(\boldsymbol \mu_t)) \right] \right|
    & \leq \E \left[ \E \left[ \left| h(X^i_{t}, \mathbb G^{\frac i N}_N(\boldsymbol \mu^N_t)) h(X^i_{t}, \mathbb G^{\frac i N}_N(\boldsymbol \mu_t)) \right| \innermid X^i_{t} \right] \right] \\
    & \leq L_h \E \left[ \left\Vert \mathbb G^{\frac i N}_N(\boldsymbol \mu^N_t) - \mathbb G^{\frac i N}_N(\boldsymbol \mu_t) \right\Vert \right] \\
    & = L_h \sum_{x' \in \mathcal X} \E \left[ \left| \int_{\mathcal I} \frac{1}{\rho_N} W_N(\frac i N, \beta) ( \mu^{N,\beta}_t(x) - \mu^\beta_t(x))  \mathrm d\beta \right| \right]
\end{align*}
which converges to zero. To see this, define the functions $f_{N,i,x}'(x', \beta) = \frac{1}{\rho_N} W_N(\frac i N, \beta) \cdot \mathbf 1_{\cbc{x=x'}}$ and apply Lemma~\ref{thm:muconv2} to them. Combining our findings for the first term with results in \cite{cui2021learning} for the second and third term concludes the proof.
\end{proof}

\section{PROOF OF THEOREM~\ref{thm:smoothed_reward}}\label{sec:smoothed_proofs}

From now on, we use the notation $\widehat{W}$ for the smoothed step graphon and $W$ is the original step graphon. Ultimately, our goal is to establish Theorem~\ref{thm:smoothed_reward}. Therefore, we have to provide some intermediate results first for which the corresponding proofs can be found in the subsequent sections. Note that the structure and proofs of the intermediate results follow ideas in \cite{cui2021learning} although the referenced paper does not consider smoothed step graphons.
\begin{lemma} \label{lem:smoothed}
Let $\boldsymbol \pi \in \boldsymbol \Pi$ be a policy ensemble generating  $\hat{\boldsymbol{\mu}} \in \boldsymbol{\mathcal{M}}$ in the smoothed MP MFG and generating $\boldsymbol{\mu} \in \boldsymbol{\mathcal{M}}$ in the standard MP MFG under Assumption~\ref{ass:Lip}. Then, we have for all $t \in \mathcal{T}$ and all measurable functions $f: \mathcal{X} \times \mathcal{I} \to \mathbb{R}$ uniformly bounded by some $M_f > 0$ that
\begin{align}
     \left\vert \hat{\boldsymbol{\mu}}_t (f) - \boldsymbol{\mu}_t (f) \right\vert \to 0 \quad \textrm{as} \quad \xi \to 0 
\end{align}
uniformly over all $\boldsymbol \pi \in \boldsymbol \Pi$.
\end{lemma}
This first lemma in turn allows us to provide the following result.
\begin{lemma}\label{lem:smoothed_state}
In the setup of Lemma~\ref{lem:smoothed}, for any uniformly bounded family of functions $\mathcal{G}$  from $\mathcal{X}$ to $\mathbb{R}$ and any $\varepsilon, p > 0$, $t \in \mathcal{T}$, there exists $\xi' > 0$ such that for all $\xi \in (0, \xi')$
\begin{align}\label{eq:smoothed_state}
    \sup_{g \in \mathcal{G}} \left\vert \Erw \left[g (X_t^\alpha) \right] - \Erw \left[g (\hat{X}_t^\alpha) \right] \right\vert < \varepsilon
\end{align}
holds for all $\alpha \in \mathcal{J} \subseteq \mathcal{I}$ where $\lambda (\mathcal{J}) \geq 1-p$. Under the same conditions, we have that for any uniformly bounded, uniformly Lipschitz family of measurable functions $\mathcal{H}$ from $\mathcal{X} \times \mathcal{B}_1 (\mathcal{X})$ to $\mathbb{R}$ and any $\varepsilon, p > 0$, $t \in \mathcal{T}$, there exists $\xi' > 0$ such that for all $\xi \in (0, \xi')$
\begin{align}\label{eq:smoothed_mf_and_state}
    \sup_{h \in \mathcal{H}} \left\vert \Erw \left[h (X_t^\alpha, \mathbb{G}_W^{\alpha} (\boldsymbol{\mu}_t)) \right] - \Erw \left[h (\hat{X}_t^\alpha), \mathbb{G}_{\widehat{W}}^{\alpha} (\hat{\boldsymbol{\mu}}_t) \right] \right\vert < \varepsilon
\end{align}
holds for all $\alpha \in \mathcal{J} \subseteq \mathcal{I}$ where $\lambda (\mathcal{J}) \geq 1-p$.
\end{lemma}

The next statement is a consequence of the just stated results.
\begin{corollary}\label{cor:smoothed_reward}
In the setup of Lemma~\ref{lem:smoothed}, for every $\varepsilon, p > 0$ there exists a $\xi' > 0$ such that for all $\xi \in (0, \xi')$ 
\begin{align*}
    \sup_{\pi \in \Pi} \vert J^t_{\alpha, W} (\pi) - J^t_{\alpha, \widehat{W}} (\pi) \vert < \varepsilon \, .
\end{align*}
holds for all $\alpha \in \mathcal{J} \subseteq \mathcal{I}$ where $\lambda (\mathcal{J}) \geq 1-p$.
\end{corollary}
With these statements in place, we can prove the theorem of interest.

\begin{proof}[Proof of Theorem~\ref{thm:smoothed_reward}]
We consider the inequality
\begin{align*}
    \sup_{\boldsymbol{\pi} \in \Pi}  J_{\alpha, W}(\boldsymbol{\pi}_{\mathrm{smo}}) - J_{\alpha, W} (\boldsymbol{\pi}) 
    &\leq \sup_{\boldsymbol{\pi} \in \Pi} \left\vert J_{\alpha, W}(\boldsymbol{\pi}_{\mathrm{smo}})- J_{\alpha, W_{\mathrm{smo}}}(\boldsymbol{\pi}_{\mathrm{smo}}) \right\vert \\
    &\quad + \sup_{\boldsymbol{\pi} \in \Pi} J_{\alpha, W_{\mathrm{smo}}}(\boldsymbol{\pi}_{\mathrm{smo}}) - J_{\alpha, W_{\mathrm{smo}}}(\boldsymbol{\pi})  \\
    &\quad + \sup_{\boldsymbol{\pi} \in \Pi} \left\vert J_{\alpha, W_{\mathrm{smo}}}(\boldsymbol{\pi}) - J_{\alpha, W} (\boldsymbol{\pi}) \right\vert \, .
\end{align*}
The second term is equal to zero because $(\boldsymbol{\pi}_{\mathrm{smo}}, \boldsymbol{\mu}_{\mathrm{smo}})$ is a GMFE for the smoothed game. The first and third term, on the other hand, can be bounded by $\epsilon /2$ each by using Corollary~\ref{cor:smoothed_reward} for all $\alpha \in \mathcal{J}$ for some $\mathcal{J} \subseteq \mathcal{I}$ with Lebesgue measure $\lambda (\mathcal{J}) \geq 1-p$. Eventually, this yields
\begin{align*}
    &\sup_{\boldsymbol{\pi} \in \Pi} \left\vert J_{\alpha, W} (\boldsymbol{\pi}) - J_{\alpha, W}(\boldsymbol{\pi}_{\mathrm{smo}}) \right\vert  \leq \frac{\varepsilon}{2} + 0 + \frac{\varepsilon}{2} = \varepsilon
\end{align*}
and thereby concludes the proof.
\end{proof}

\section{PROOF OF LEMMA~\ref{lem:smoothed}}
We prove the statement via induction over $t$. The base case trivially holds. For the induction step, we consider the following reformulation
\begin{align*}
    \left\vert \hat{\boldsymbol{\mu}}_{t+1} (f) - \boldsymbol{\mu}_{t+1} (f) \right\vert 
    &\leq  \left| \hat{\boldsymbol{\mu}}_t  P^{\boldsymbol \pi}_{t,  \hat{\boldsymbol{\mu}}_t, \widehat{W}} (f) - \hat{\boldsymbol{\mu}}_t P^{\boldsymbol \pi}_{t, \hat{\boldsymbol{\mu}}_t, W}(f) \right|  \\
    &\qquad \quad +  \left| \hat{\boldsymbol{\mu}}_t P^{\boldsymbol \pi}_{t, \hat{\boldsymbol{\mu}}_t, W}(f) - \hat{\boldsymbol{\mu}}_t P^{\boldsymbol \pi}_{t, \boldsymbol{\mu}_t, W}(f) \right| \\
    &\qquad \quad +  \left| \hat{\boldsymbol{\mu}}_t P^{\boldsymbol \pi}_{t, \boldsymbol{\mu}_t, W}(f) - \boldsymbol \mu_{t+1}(f) \right| \, .
\end{align*}
We consider the three summands separately.

\paragraph{First term.}
For the first term we note that
\begin{align*}
    \left| \hat{\boldsymbol{\mu}}_t  P^{\boldsymbol \pi}_{t,  \hat{\boldsymbol{\mu}}_t, \widehat{W}} (f) - \hat{\boldsymbol{\mu}}_t P^{\boldsymbol \pi}_{t, \hat{\boldsymbol{\mu}}_t, W}(f) \right| 
    &\leq |\mathcal X| M_f L_P \int_{\mathcal I} \left\Vert \int_{\mathcal I} \widehat{W}(\alpha, \beta) \hat{\mu}^\beta_t \, \mathrm d\beta  - \int_{\mathcal I} W(\alpha, \beta) \hat{\mu}^\beta_t \, \mathrm d\beta \right\Vert \, \mathrm d\alpha \\
    & \leq |\mathcal X|^2 M_f L_P \sup_{x \in \mathcal X} \int_{\mathcal I} \left| \int_{\mathcal I} \widehat{W}(\alpha, \beta) \hat{\mu}^\beta_t(x)  -  W(\alpha, \beta)\hat{\mu}^\beta_t(x) \, \mathrm d\beta \right| \, \mathrm d\alpha
\end{align*}
converges to zero as $\xi \to 0$ by the construction of the smoothed step graphon $\widehat{W}$.
\paragraph{Second term.} Consider the following reformulation
\begin{align*}
    &\left| \hat{\boldsymbol{\mu}}_t P^{\boldsymbol \pi}_{t, \hat{\boldsymbol{\mu}}_t, W}(f) - \hat{\boldsymbol{\mu}}_t P^{\boldsymbol \pi}_{t, \boldsymbol{\mu}_t, W}(f) \right| \\
    &\quad \leq M_f |\mathcal X|  \sup_{x,u} \int_{\mathcal I} \left| P \left(x' \mid x, u, \int_{\mathcal I} W(\alpha, \beta) \hat{\mu}^\beta_t \, \mathrm d\beta \right) -  P \left(x' \mid x, u, \int_{\mathcal I} W(\alpha, \beta) \mu^\beta_t \, \mathrm d\beta \right) \right| \, \mathrm d\alpha  \\
    &\quad \leq M_f |\mathcal X| L_P \sum_{x' \in \mathcal X} \int_{\mathcal I}   \left| \int_{\mathcal I} W(\alpha, \beta) \hat{\mu}^\beta_t(x')\, \mathrm d\beta - \int_{\mathcal I} W(\alpha, \beta) \mu^\beta_t(x') \, \mathrm d\beta \right| \, \mathrm d\alpha.
\end{align*}
Then, the induction assumption can be applied to
\begin{align*}
    &\int_{\mathcal I}  \left| \int_{\mathcal I} W(\alpha, \beta) \hat{\mu}^\beta_t(x')\, \mathrm d\beta - \int_{\mathcal I} W(\alpha, \beta) \mu^\beta_t(x') \, \mathrm d\beta \right| \, \mathrm d\alpha  = \int_{\mathcal I} \left| \boldsymbol \hat{\mu}_t(f'_{\alpha, x'}) - \boldsymbol \mu_t(f'_{\alpha, x'}) \right| \, \mathrm d\alpha \to 0
\end{align*}
for $\xi \to 0$ where we choose $f'_{x', \alpha}(x, \beta) = W(\alpha, \beta) \cdot \mathbf 1_{\cbc{x=x'}}$. This immediately implies
\begin{align*}
    & \left| \hat{\boldsymbol{\mu}}_t P^{\boldsymbol \pi}_{t, \hat{\boldsymbol{\mu}}_t, W}(f) - \hat{\boldsymbol{\mu}}_t P^{\boldsymbol \pi}_{t, \boldsymbol{\mu}_t, W}(f) \right| \to 0 \quad \textrm{as} \quad \xi \to 0.
\end{align*}

\paragraph{Third term.}
For the last term we have by the induction assumption
\begin{align*}
     \left| \hat{\boldsymbol{\mu}}_t P^{\boldsymbol \pi}_{t, \boldsymbol{\mu}_t, W}(f) - \boldsymbol \mu_{t+1}(f) \right|
    &= \left| \int_{\mathcal I} \sum_{x \in \mathcal X} \hat{\mu}^\alpha_t(x) f''(x, \alpha) \, \mathrm d\alpha - \int_{\mathcal I} \sum_{x \in \mathcal X} \mu^\alpha_t(x) f''(x, \alpha) \, \mathrm d\alpha \right|\\
    &\quad = \left| \hat{\boldsymbol{\mu}}_t(f'') - \boldsymbol \mu_t(f'') \right| \to 0 \quad \textrm{as} \, \, \xi \to 0.
\end{align*}
Here we have applied the induction assumption to the function $f''$ defined by
\begin{align*}
    &f''(x, \alpha) = \sum_{\substack{u \in \mathcal U \\ x' \in \mathcal X}}\pi^\alpha_t(u \mid x) P \left(x' \mid x, u, \int_{\mathcal I} W(\alpha, \beta) \mu^{\beta}_t \, \mathrm d\beta \right) \, f(x', \alpha).
\end{align*}

\section{PROOF OF LEMMA~\ref{lem:smoothed_state}}

We start by showing that \eqref{eq:smoothed_state} implies \eqref{eq:smoothed_mf_and_state}. Subsequently, we establish \eqref{eq:smoothed_state} to complete the proof. First, note that \begin{align*}
    \left\vert \Erw \left[h (X_t^\alpha, \mathbb{G}_W^{\alpha} (\boldsymbol{\mu}_t)) \right] - \Erw \left[h (\hat{X}_t^\alpha, \mathbb{G}_{\widehat{W}}^\alpha (\hat{\boldsymbol{\mu}}_t)) \right] \right\vert
    &= \left\vert \Erw \left[h (X_t^\alpha, \mathbb{G}_W^{\alpha} (\boldsymbol{\mu}_t)) \right] - \Erw \left[h (X_t^\alpha, \mathbb{G}_{\widehat{W}}^\alpha (\boldsymbol{\mu}_t)) \right] \right\vert \\
    &+ \left\vert \Erw \left[h (X_t^\alpha, \mathbb{G}_{\widehat{W}}^{\alpha} (\boldsymbol{\mu}_t)) \right] - \Erw \left[h (X_t^\alpha, \mathbb{G}_{\widehat{W}}^{\alpha} (\hat{\boldsymbol{\mu}}_t)) \right] \right\vert \\
    &+ \left\vert \Erw \left[h (X_t^\alpha, \mathbb{G}_{\widehat{W}}^{\alpha} (\hat{\boldsymbol{\mu}}_t)) \right] - \Erw \left[h (\hat{X}_t^\alpha, \mathbb{G}_{\widehat{W}}^{\alpha} (\hat{\boldsymbol{\mu}}_t)) \right] \right\vert \, .
\end{align*}
Let us consider the three terms separately. We denote by $L_h$ the Lipschtiz constant and by $M_h$ the uniform bound of $\mathcal{H}$.
\paragraph{First term.}
For the first term we have
\begin{align*}
    \left\vert \Erw \left[h (X_t^\alpha, \mathbb{G}_W^{\alpha} (\boldsymbol{\mu}_t)) \right] - \Erw \left[h (X_t^\alpha, \mathbb{G}_{\widehat{W}}^{\alpha} (\boldsymbol{\mu}_t)) \right] \right\vert
    &\leq L_h \Erw \left[ \left\Vert \mathbb{G}_W^{\alpha} (\boldsymbol{\mu}_t)) -  \mathbb{G}_{\widehat{W}}^{\alpha} (\boldsymbol{\mu}_t)) \right\Vert \right] \\
    &\quad = L_h \sum_{x \in \mathcal{X}} \left\vert \int_{\mathcal{I}} \left( W (\alpha, \beta) - \widehat{W} (\alpha, \beta) \right) \mu_t^\beta (x) \mathrm{d} \beta \right\vert
\end{align*}
which converges to $0$ as $\xi \to 0$ for arbitrary fractions of the possible $\alpha$ since the set of points $(\alpha, \beta) \in \mathcal{I}^2$ where $W$ and $\widehat{W}$ differ becomes arbitrarily small by construction for $\xi$ close enough to $0$.

\paragraph{Second term.}
Using a similar reformulation as for the first term, we obtain
\begin{align*}
    &\left\vert \Erw \left[h (X_t^\alpha, \mathbb{G}_{\widehat{W}}^{\alpha} (\boldsymbol{\mu}_t)) \right] - \Erw \left[h (X_t^\alpha, \mathbb{G}_{\widehat{W}}^{\alpha} (\hat{\boldsymbol{\mu}}_t)) \right] \right\vert\leq L_h \sum_{x \in \mathcal{X}} \left\vert \int_{\mathcal{I}} \left( \mu_t^\beta (x) - \hat{\mu}_t^\beta (x) \right) \widehat{W} (\alpha, \beta)  \mathrm{d} \beta \right\vert
\end{align*}
which converges to zero as $\xi \to 0$. This follows from applying Lemma~\ref{lem:smoothed} to the functions $f_{\alpha, x} (x', \beta) = \widehat{W} (\alpha, \beta) \boldsymbol{1}_{\{ x = x' \}}$.

\paragraph{Third term.} Finally, the last term
\begin{align*}
 \left\vert \Erw \left[h (X_t^\alpha, \mathbb{G}_{\widehat{W}}^{\alpha} (\hat{\boldsymbol{\mu}}_t)) \right] - \Erw \left[h (\hat{X}_t^\alpha, \mathbb{G}_{\widehat{W}}^{\alpha} (\hat{\boldsymbol{\mu}}_t)) \right] \right\vert
\end{align*}
converges to zero as $\xi \to 0$ by \eqref{eq:smoothed_state}. Thus, we have established that \eqref{eq:smoothed_state} implies \eqref{eq:smoothed_mf_and_state}.

Now it remains to show \eqref{eq:smoothed_state}. We prove \eqref{eq:smoothed_state} via induction over $t$. For $t=0$, the statement trivially holds. For the induction step, we note that
\begin{align*}
    &\left\vert \Erw \left[g (X_{t+1}^\alpha) \right] - \Erw \left[g (\hat{X}_{t+1}^\alpha) \right] \right\vert = \left\vert \Erw \left[f'_t (X_t^\alpha, \mathbb{G}^\alpha_W (\boldsymbol{\mu}_t)) \right] - \Erw \left[f'_t (\hat{X}_t^\alpha, \mathbb{G}^\alpha_{\widehat{W}} (\hat{\boldsymbol{\mu}}_t)) \right] \right\vert
\end{align*}
with
\begin{align*}
    f'_t (x, \nu) = \sum_{u \in \mathcal{U}} \pi_t (u \mid x) \sum_{x' \in \mathcal{X}} P (x' \mid x, u, \nu) g(x').
\end{align*}
Then we can apply \eqref{eq:smoothed_mf_and_state} to 
\begin{align*}
   \left\vert \Erw \left[f'_t (X_t^\alpha, \mathbb{G}^\alpha_W (\boldsymbol{\mu}_t)) \right] - \Erw \left[f'_t (\hat{X}_t^\alpha, \mathbb{G}^\alpha_{\widehat{W}} (\hat{\boldsymbol{\mu}}_t)) \right] \right\vert
\end{align*}
to obtain the desired result. Note that \eqref{eq:smoothed_mf_and_state} holds by the induction assumption because we have already established that \eqref{eq:smoothed_state} implies \eqref{eq:smoothed_mf_and_state}. This concludes the proof.

\section{PROOF OF COROLLARY~\ref{cor:smoothed_reward}}\label{sec:thm_smoothed_last}
The statement follows from Lemma~\ref{lem:smoothed_state}. To see this, we define the functions
\begin{align*}
    r_t (x, \nu) = \sum_{u \in \mathcal{U}} \pi_t (u \mid x) r(x, u, \nu)
\end{align*}
which are uniformly Lipschitz and uniformly bounded by construction. Then, we obtain by Lemma~\ref{lem:smoothed_state}
\begin{align*}
    &\vert J^t_{\alpha, W}(\pi) - J^t_{\alpha, \widehat{W}}(\pi)\vert \leq \sum_{t =0}^{T-1} \left\vert \Erw \left[r_t (X_t^\alpha, \mathbb{G}^\alpha_W (\boldsymbol{\mu}_t)) \right] - \Erw \left[r_t (\hat{X}_t^\alpha, \mathbb{G}^\alpha_{\widehat{W}} (\hat{\boldsymbol{\mu}}_t)) \right] \right\vert \leq \varepsilon
\end{align*}
for all $\alpha \in \mathcal{J}$ for some $\mathcal{J} \subseteq \mathcal{I}$ with $\lambda (\mathcal{J}) \geq 1-p$. This concludes the proof.

\section{DETAILS ON LEARNING LPGMFGS}\label{app:learning_methods}

Before we prove the theoretical results on learning LPGMFGs, we introduce some terminology and technical details that were not included in the main text for space reasons. Our implementation is based on \cite{cui2021learning} (MIT license). For Figure~\ref{fig:network-compare}, we used data from \cite{rozemberczki2019gemsec} (GPL-3.0 license). The dataset from \cite{rozemberczki2019gemsec} contains mutual likes among verified Facebook pages of TV shows that were obtained from public APIs. Thus, the data set does not contain any personal information of individuals and no person is identifiable. Our code is available online\footnote{Link: \url{https://github.com/ChrFabian/Learning_sparse_GMFGs}}. The total amount of compute to reproduce our work amounts to approximately 2 days on a 64 core AMD Epyc processor and 64 GB of RAM. No GPUs were used in this research.

\paragraph{Equivalence Classes.}
For learning LPGMFGs, we apply the well-known equivalence class method, see for example \cite{cui2021learning}.  Therefore, we discretize the continuous spectrum of agents $\mathcal{I}$ into $M + 1$ disjoint subsets.  Formally, we consider a grid of agents $\bc{\alpha_m}_{0\leq m \leq M}$ with $\alpha_m = m /100$ where each $\alpha_m$ is associated with a policy $\pi^{\alpha_m}\in \mathcal{P} \bc{\mathcal{U}}^{\mathcal{T}}$ and a mean field $\mu^{\alpha_m} \in \mathcal{P} \bc{\mathcal{X}}^{\mathcal{T}}$. Then, each agent $\alpha \in \brk{0,1}$ follows the policy $\pi^{\alpha_m}$ of the agent at grid point $\alpha_m$ closest to $\alpha$. Thus, there are $M+ 1$ intervals such that $\mathcal{I} = \bigcup_{m =0}^M \tilde{\mathcal{I}}_m$ with $\tilde{\mathcal{I}}_m = \big[ \frac{m - 1/2}{M}, \frac{m+1/2}{M}\big)$ for $0< m < M$,$\tilde{\mathcal{I}}_0 = \big[ 0, \frac{1}{2 M} \big]$, and $\tilde{\mathcal{I}}_M = \big[ \frac{M- 1}{M}, 1 \big]$. Note that all agents in an interval follow the same policy. For any fixed policy ensemble $\boldsymbol{\pi} \in \boldsymbol{\Pi}$ we define
\begin{align*}
\hat{\Psi}(\boldsymbol \pi) = \sum_{m= 1}^M \boldsymbol{1}_{\cbc{\alpha \in \tilde{\mathcal{I}}_m}} \hat{\mu}^{\alpha_m}
\end{align*}
where $\hat{\boldsymbol{\mu}}$ is recursively defined by
\begin{align*}
    \hat{\mu}_{t + 1}^{\alpha_m} (x) = \sum_{\substack{x' \in \mathcal X \\ u \in \mathcal{U}}} \hat{\mu}_{t}^{\alpha_m} \bc{x'} \pi_t^{\alpha_m} \bc{u \vert x'} P \bc{x \vert x', u, \hat{\mathbb{G}}_t^{\alpha_m}}
\end{align*}
with $\hat{\mu}_{0}^{\alpha_m} (x)  = \mu_{0} (x)$ for $x \in \mathcal{X}$ and $m = 1, \ldots, M$, and neighborhood mean fields 
\begin{align*}
	\hat{\mathbb{G}}_t^{\alpha} = \frac{1}{M}\sum_{m= 1}^M W \bc{\alpha, \alpha_m}  \hat{\mu}_{t}^{\alpha_m}.
\end{align*}
Similarly, we approximate the policy ensemble by
\begin{align*}
	\hat{\Phi} \bc{\boldsymbol{\mu}} = \sum_{m= 1}^M \boldsymbol{1}_{\cbc{\alpha \in \tilde{\mathcal{I}}_m}} \pi^{\alpha_m}
\end{align*}
where $\pi^{\alpha_m}$ is the optimal policy of $\alpha_i$ for any fixed $\boldsymbol{\mu}$.

\paragraph{Multi-Population Mean Field Games.}
In this paragraph, we briefly introduce some crucial definitions for the following proofs. For more details on MP MFGs, see for example \cite{perolat2021scaling}. A player of population $i$ who implements policy $\pi^i$ while the mean field of players is distributed according to $\boldsymbol{\mu} \in \boldsymbol{\mathcal M}$ can expect the following sum of rewards
\begin{align*}
	J_i^{\boldsymbol{\mu}} \bc{\pi^i} &= \Erw \left[\sum_{t =0}^T r \bc{x^i_t, u^i_t, \mu_t} \Bigg\vert x^i_0 \sim \mu_0^i, u^i_t \sim \pi_t^i \bc{\cdot \big\vert x^i_t},  x_{t+1}^i \sim P \bc{\cdot \big\vert x_t^i, u_t^i, \mu_t} \right].
\end{align*}
Similarly, we define the corresponding $Q$-function
\begin{align*}
	&Q_i^{\boldsymbol{\mu}, \pi^i} \bc{t, x^i, u^i} = \Erw \left[ \sum_{k =t}^T r \bc{x^i_k, u^i_k, \mu_k} \Bigg\vert x_t^i = x^i, u_t^i = u^i,  u^i_k \sim \pi_t^i \bc{\cdot \big\vert x^i_k}, x_{k+1}^i \sim P \bc{\cdot \big\vert x_k^i, u_k^i, \mu_k} \right]
\end{align*}
and the corresponding value function
\begin{align*}
&V_i^{\boldsymbol{\mu}, \pi^i} \bc{t, x^i} = \Erw \left[ \sum_{k =t}^T r \bc{x^i_k, u^i_k, \mu_k} \Bigg\vert x_t^i = x^i, u^i_k \sim \pi_t^i \bc{\cdot \big\vert x^i_k}, x_{k+1}^i \sim P \bc{\cdot \big\vert x_k^i, u_k^i, \mu_k} \right].
\end{align*}
The sequence of state distributions $\boldsymbol{\mu}^{\boldsymbol{\pi}, \boldsymbol{\mu}'} \in \boldsymbol{\mathcal M}$ is defined by the forward equation
\begin{align*}
	&\mu_{t+1}^{i, \boldsymbol{\pi}, \boldsymbol{\mu}'} \bc{{x'}^i} \coloneqq \sum_{\bc{x^i, u^i} \in \mathcal{X} \times \mathcal{U}} \pi_t^i \bc{u^i \vert x^i} \mu_{t}^{i, \boldsymbol{\pi}, \boldsymbol{\mu}'} \bc{{x}^i} P \bc{{x'}^i \big\vert x^i, u^i, \mu'_t}
\end{align*}
for $(\boldsymbol \pi, \boldsymbol \mu) \in \boldsymbol \Pi \times \boldsymbol{\mathcal M}$. In the case of $\boldsymbol{\mu} = \boldsymbol{\mu}'$, we will often use the abbreviated notation $\boldsymbol{\mu}^{\boldsymbol{\pi}} = \boldsymbol{\mu}^{\boldsymbol{\pi}, \boldsymbol{\mu}'}$.

\paragraph{OMD Algorithm.}
As we will see in the subsequent statements and proofs, there is some terminology necessary for analyzing OMDs such as the concept of a regularizer $h\colon \mathcal{P} (\mathcal{U}) \to \mathbb{R}$. Here, we choose the entropy as the regularizer, i.e. $h(\pi) \coloneqq - \sum_{u \in \mathcal{U}} \pi(u) \log(\pi(u))$.
Then, the convex conjugate of $h$ is defined as $h^* (y) = \max_{p \in \mathcal{P}(\mathcal{U})} \bc{\scal{y}{p} - h (\pi)} = \log (\sum_{u \in \mathcal{U}} \exp(y(u)))$. Finally, we define for (almost) every $y$
\begin{align*}
\Gamma (y) \coloneqq \nabla h^* (y) &= \argmax_{p \in \mathcal{P}(\mathcal{U})} \bc{\scal{y}{p} - h (\pi)} 
= \frac{\exp (y )}{\sum_{u' \in \mathcal{U}} \exp(y (u'))}.
\end{align*}

A more detailed introduction to these concepts can be found, for example, in \cite{perolat2021scaling}. For our learning approach, we apply the OMD algorithm as formulated in \cite{perolat2021scaling}, see Algorithm~\ref{alg:discrete}.

\paragraph{Proof Strategy} The theoretical analysis is conducted by considering the continuous time analogue (\cite{mertikopoulos2018cycles}) of Algorithm~\ref{alg:discrete} on the smoothed step graphon, which we refer to as smoothed CTOMD, and which is characterized by the equations 
\begin{align}\label{eq: CTOMD_dynamics_y}
    y_{t, \tau}^\alpha \bc{x^\alpha, u^\alpha} = \int_0^\tau Q_\alpha^{\pi_s^\alpha, \mu^{\pi_s}} \bc{x^\alpha, u^\alpha} \textrm{d} s 
\end{align} 
and
\begin{align}\label{eq: CTOMD_dynamics_pi}
    \pi_{t, \tau}^\alpha (\cdot \vert x^\alpha) = \Gamma \bc{y_{t, \tau}^\alpha (x^\alpha, \cdot)}
\end{align}
for all $\alpha \in \mathcal{I}$, $t \in \mathcal{T}$, and $\tau \in \mathbb{R}_+$ with initial values $y_{t, 0}^i = 0$. Taking the derivative with respect to $\tau$ on both sides of equation \eqref{eq: CTOMD_dynamics_y} yields
\begin{align*}
	\frac{\mathrm d}{\mathrm d \tau}y_{t, \tau}^\alpha \bc{x^\alpha, u^\alpha} = Q_\alpha^{\pi_\tau^\alpha, \mu^{\pi_\tau}} \bc{x^\alpha, u^\alpha}
\end{align*}
and taking the derivative with respect to $\tau$ on both sides of equation \eqref{eq: CTOMD_dynamics_pi} results in
\begin{align*}
	\frac{\mathrm d}{\mathrm d \tau} \pi_{t, \tau}^\alpha (u^\alpha \vert x^\alpha)
	&= \frac{\mathrm d}{\mathrm d \tau} \Gamma \bc{y_{t, \tau}^\alpha (x^\alpha, u^\alpha)} \\
	&= \exp (y_{t, \tau}^\alpha (x^\alpha, u^\alpha)) \cdot Q_\alpha^{\pi_\tau^\alpha, \mu^{\pi_\tau}} \bc{x^\alpha, u^\alpha} \cdot \frac{\sum_{u' \in \mathcal{U}} \exp (y_{t, \tau}^\alpha (x^\alpha, {u'}^\alpha))}{\bc{\sum_{u' \in \mathcal{U}} \exp (y_{t, \tau}^\alpha (x^\alpha, {u'}^\alpha))}^2}\\
	&\quad - \brk{\sum_{u' \in \mathcal{U}} Q_\alpha^{\pi_\tau^\alpha, \mu^{\pi_\tau}} \bc{x^\alpha, {u'}^\alpha} \exp (y_{t, \tau}^\alpha (x^\alpha, {u'}^\alpha))} \cdot \frac{\exp (y_{t, \tau}^\alpha (x^\alpha, u^\alpha))}{\bc{\sum_{u' \in \mathcal{U}} \exp (y_{t, \tau}^\alpha (x^\alpha, {u'}^\alpha))}^2}\\
	&= \pi_{t, \tau}^\alpha (u^\alpha \vert x^\alpha) \cdot Q_\alpha^{\pi_\tau^\alpha, \mu^{\pi_\tau}} - \pi_{t, \tau}^\alpha (u^\alpha \vert x^\alpha) \cdot \sum_{u' \in \mathcal{U}} Q_\alpha^{\pi_\tau^\alpha, \mu^{\pi_\tau}} \bc{x^\alpha, {u'}^\alpha} \pi_{t, \tau}^\alpha ({u'}^\alpha \vert x^\alpha) 
	\eqqcolon g ( \pi_{t, \tau}^\alpha (u^\alpha \vert x^\alpha)).
\end{align*}
Following the above derivations, our goal is to solve the differential equation
\begin{align}\label{eq: CTOMD_diff_eq}
	\frac{\mathrm d}{\mathrm d \tau} \pi_{t, \tau}^\alpha (u^\alpha \vert x^\alpha) = g (\pi_{t, \tau}^\alpha (u^\alpha \vert x^\alpha)).
\end{align}

The strategy of this section is to eventually apply LaSalle's theorem \cite[Theorem 4.4]{khalil2002nonlinear} to equation \eqref{eq: CTOMD_diff_eq} which will yield Theorem~\ref{thm:unique_conv}. First, we define a measure of similarity
\begin{align}\label{def:sim_measure}
	H(y) \coloneqq &\int_{\mathcal{I}} \sum_{t = 0}^T \sum_{x^i \in \mathcal{X}} \mu_t^{i, \boldsymbol{\pi}^*} \bc{x^i}  \left[ h^* \bc{y_{t, \tau}^i \bc{x^i, \cdot}} - h^* \bc{y_t^{i, *}\bc{x^i, \cdot}} -\scal{\pi_t^{i, *} \bc{\cdot \vert x^i}}{y_{t, \tau}^i \bc{x^i, \cdot} - y_t^{i, *}\bc{x^i, \cdot}} \right] \mathrm d \alpha 
\end{align}
using the convention that $\Gamma (y_t^{i, *}\bc{x^i, \cdot}) = \pi_t^{i, *} \bc{\cdot \vert x^i}$, where $\boldsymbol{\pi}^*$ is the unique NE.
Next, we make a rather technical observation.
\begin{lemma}\label{lem:derivative_con_conj}
The convex conjugate of the regularizer satisfies
\begin{align*}
	\frac{\mathrm d}{\mathrm d \tau}h^* \bc{y_{t, \tau}^\alpha \bc{x^\alpha, \cdot}} = \scal{\pi_{t, \tau}^\alpha \bc{\cdot \big\vert x^\alpha}}{Q_\alpha^{\boldsymbol{\mu}^{\boldsymbol{\pi}_\tau}, \pi^\alpha_\tau} \bc{t, x^\alpha, \cdot}}.
\end{align*}
\end{lemma}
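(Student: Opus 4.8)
The plan is to recognize Lemma~\ref{lem:derivative_con_conj} as a direct application of the chain rule, exploiting that the softmax map $\Gamma$ is exactly the gradient of the convex conjugate $h^*$. Two ingredients are already available in the excerpt: first, the explicit form $h^*(y) = \log\bc{\sum_{u \in \mathcal U}\exp(y(u))}$ together with the identity $\Gamma(y) = \nabla h^*(y)$; second, the defining CTOMD dynamics \eqref{eq: CTOMD_dynamics_y}, which upon differentiation in $\tau$ give $\frac{\mathrm d}{\mathrm d\tau} y_{t,\tau}^\alpha\bc{x^\alpha, u^\alpha} = Q_\alpha^{\boldsymbol{\mu}^{\boldsymbol{\pi}_\tau}, \pi_\tau^\alpha}\bc{t, x^\alpha, u^\alpha}$ for each $u^\alpha \in \mathcal U$.

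First I would fix $\alpha \in \mathcal I$, $t \in \mathcal T$, and $x^\alpha \in \mathcal X$, and view $\tau \mapsto h^*\bc{y_{t,\tau}^\alpha\bc{x^\alpha, \cdot}}$ as the composition of the map $h^*\colon \mathbb R^{\mathcal U}\to\mathbb R$ with the curve $\tau\mapsto y_{t,\tau}^\alpha\bc{x^\alpha,\cdot}\in\mathbb R^{\mathcal U}$. Since the integrand in \eqref{eq: CTOMD_dynamics_y} is continuous in $s$, the curve is continuously differentiable in $\tau$ by the fundamental theorem of calculus, while $h^*$ (being log-sum-exp) is smooth; hence the chain rule applies and yields
\[
\frac{\mathrm d}{\mathrm d\tau}h^*\bc{y_{t,\tau}^\alpha\bc{x^\alpha, \cdot}} = \scal{\nabla h^*\bc{y_{t,\tau}^\alpha\bc{x^\alpha, \cdot}}}{\frac{\mathrm d}{\mathrm d\tau}y_{t,\tau}^\alpha\bc{x^\alpha, \cdot}}.
\]

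It then remains only to substitute the two ingredients. By $\Gamma = \nabla h^*$ and the policy definition \eqref{eq: CTOMD_dynamics_pi}, the first slot becomes $\nabla h^*\bc{y_{t,\tau}^\alpha\bc{x^\alpha,\cdot}} = \Gamma\bc{y_{t,\tau}^\alpha\bc{x^\alpha,\cdot}} = \pi_{t,\tau}^\alpha\bc{\cdot\,\vert\, x^\alpha}$, while the second slot is the $Q$-function by the differentiated $y$-dynamics. This gives precisely $\scal{\pi_{t,\tau}^\alpha\bc{\cdot\,\vert\,x^\alpha}}{Q_\alpha^{\boldsymbol{\mu}^{\boldsymbol{\pi}_\tau},\pi_\tau^\alpha}\bc{t, x^\alpha,\cdot}}$, as claimed.

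The computation is essentially immediate, so there is no substantial obstacle; the only points deserving care are bookkeeping rather than mathematics. One should verify componentwise that $\partial_{y(u)} h^*(y) = \exp(y(u))/\sum_{u'}\exp(y(u')) = \Gamma(y)(u)$, so that the abstract identity $\Gamma = \nabla h^*$ is indeed the softmax used in \eqref{eq: CTOMD_dynamics_pi}, and one should confirm that the time index $t$ carried by the $Q$-function in the statement matches the per-step $Q_\alpha^{\pi_\tau^\alpha, \mu^{\pi_\tau}}$ appearing in the differentiated dynamics. Both are consistent with the definitions in Appendix~\ref{app:learning_methods}, so assembling the steps completes the proof.
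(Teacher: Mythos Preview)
Your proposal is correct and follows essentially the same approach as the paper's proof: apply the chain rule to the composition $\tau \mapsto h^*\bc{y_{t,\tau}^\alpha(x^\alpha,\cdot)}$, then identify $\nabla h^* = \Gamma$ with the policy via \eqref{eq: CTOMD_dynamics_pi} and the $\tau$-derivative of $y$ with the $Q$-function via the differentiated dynamics \eqref{eq: CTOMD_dynamics_y}. The paper writes out the chain rule componentwise as a sum over $u_j \in \mathcal U$ rather than as an inner product, but the argument is otherwise identical.
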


\begin{algorithm}[t]
    \caption{\textbf{Online Mirror Descent (\cite{perolat2021scaling})}}
    \label{alg:discrete}
    \begin{algorithmic}[1]
        \STATE \textbf{Input}: $\gamma$, number of iterations $\tau_{\max}$, and $\forall i \leq M, t \in \mathcal{T}: y_{t, 0}^i = 0$ 
        \FOR {$\tau=1, \ldots, \tau_{\max}$}
            \STATE Forward update for all $i$: $\mu^{i, \pi_\tau}$
            \STATE Backward update for all $i$: $Q_i^{\pi_\tau^i, \mu^{\pi_\tau}}$
            \STATE Update for all $i, t, x, u$
            \STATE $y_{t, \tau + 1}^i (x, u) = y_{t, \tau}^i (x, u) + \gamma \cdot Q_i^{\pi_\tau^i, \mu^{\pi_\tau}} (x,a)$
            \STATE $\pi_{t, \tau +1}^i (\cdot \vert x) = \Gamma \bc{y_{t, \tau + 1}^i (x, \cdot)}$
        \ENDFOR
    \end{algorithmic}
\end{algorithm}

Then, we can leverage this technical observation to derive the following result which will be crucial for proving Theorem~\ref{thm:unique_conv} from the main text.
\begin{lemma}\label{lem:derivative_H}
	The similarity measure $H$ defined in equation \eqref{def:sim_measure} satisfies
	\begin{align*}
		\frac{\mathrm d}{\mathrm d \tau} H \bc{y_\tau} = \Delta J \bc{\boldsymbol{\pi}_\tau, \boldsymbol{\pi}^*} + \tilde{d}\bc{\boldsymbol{\pi}_\tau, \boldsymbol{\pi}^*} 
	\end{align*}
where $\Delta J \bc{\boldsymbol{\pi}_\tau, \boldsymbol{\pi}^*} = \int_{\mathcal{I}} J_\alpha^{\boldsymbol{\mu}^*} \bc{\pi_\tau^\alpha} -J_\alpha^{\boldsymbol{\mu}^*} \bc{\pi^{\alpha,*}} \mathrm d \alpha$ is always non-positive with $\boldsymbol \mu^* = \Psi(\boldsymbol \pi^*)$ and $\tilde{d}\bc{\boldsymbol{\pi}, \boldsymbol{\pi'}}$ defined as in equation \eqref{def:mon_metric}.
\end{lemma}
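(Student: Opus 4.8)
The plan is to view $H$ in \eqref{def:sim_measure} as a state-measure-weighted Fenchel coupling and differentiate it in $\tau$. First I would rewrite the bracket inside \eqref{def:sim_measure} as the Bregman divergence $D_{h^*}\bc{y_{t,\tau}^\alpha\bc{x,\cdot}, y_t^{\alpha,*}\bc{x,\cdot}}$ of the convex conjugate, using $\nabla h^* = \Gamma$ and the convention $\Gamma\bc{y_t^{\alpha,*}} = \pi_t^{\alpha,*}$; this makes $H\geq 0$ manifest and writes it as $\int_{\mathcal I}\sum_{t}\sum_{x}\mu_t^{\alpha,\boldsymbol\pi^*,\boldsymbol\mu^{\boldsymbol\pi_\tau}}\bc{x}\,D_{h^*}\bc{y_{t,\tau}^\alpha\bc{x,\cdot}, y_t^{\alpha,*}\bc{x,\cdot}}\,\mathrm d\alpha$. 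Differentiating by the product rule splits $\frac{\mathrm d}{\mathrm d\tau}H$ into a \emph{policy part}, where $\frac{\mathrm d}{\mathrm d\tau}$ acts on $y_{t,\tau}$, and a \emph{measure part}, where it acts on the state flow $\mu_t^{\alpha,\boldsymbol\pi^*,\boldsymbol\mu^{\boldsymbol\pi_\tau}}$, which is genuinely $\tau$-dependent precisely because $P$ depends on the mean field.

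For the policy part I would substitute Lemma~\ref{lem:derivative_con_conj}, giving $\frac{\mathrm d}{\mathrm d\tau}h^*\bc{y_{t,\tau}^\alpha\bc{x,\cdot}} = \scal{\pi_{t,\tau}^\alpha\bc{\cdot\mid x}}{Q_\alpha^{\boldsymbol\mu^{\boldsymbol\pi_\tau},\pi_\tau^\alpha}\bc{t,x,\cdot}}$, together with the CTOMD dynamics $\frac{\mathrm d}{\mathrm d\tau}y_{t,\tau}^\alpha\bc{x,u} = Q_\alpha^{\pi_\tau^\alpha,\mu^{\pi_\tau}}\bc{x,u}$. This collapses the integrand to $\mu_t^{\alpha,\boldsymbol\pi^*,\boldsymbol\mu^{\boldsymbol\pi_\tau}}\bc{x}\,\scal{\pi_{t,\tau}^\alpha\bc{\cdot\mid x}-\pi_t^{\alpha,*}\bc{\cdot\mid x}}{Q_\alpha^{\boldsymbol\mu^{\boldsymbol\pi_\tau},\pi_\tau^\alpha}\bc{t,x,\cdot}}$. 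The central computation is then a performance-difference identity: writing $\scal{\pi_{t,\tau}^\alpha\bc{\cdot\mid x}}{Q_\alpha^{\boldsymbol\mu^{\boldsymbol\pi_\tau},\pi_\tau^\alpha}\bc{t,x,\cdot}}$ as the value function $V_\alpha^{\boldsymbol\mu^{\boldsymbol\pi_\tau},\pi_\tau^\alpha}$, inserting the Bellman equation for $Q_\alpha^{\boldsymbol\mu^{\boldsymbol\pi_\tau},\pi_\tau^\alpha}$ into the $\pi^{\alpha,*}$-term, and using that $\mu^{\alpha,\boldsymbol\pi^*,\boldsymbol\mu^{\boldsymbol\pi_\tau}}$ is \emph{exactly} the forward flow generated by $\pi^{\alpha,*}$ under the kernel evaluated at $\boldsymbol\mu^{\boldsymbol\pi_\tau}$, the value terms telescope over $t$ and only the $t=0$ boundary survives. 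This identifies the policy part with $\int_{\mathcal I}\brk{J_\alpha^{\boldsymbol\mu^{\boldsymbol\pi_\tau}}\bc{\pi_\tau^\alpha} - J_\alpha^{\boldsymbol\mu^{\boldsymbol\pi_\tau}}\bc{\pi^{\alpha,*}}}\,\mathrm d\alpha$.

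It then remains to match this to the stated right-hand side, which is pure algebra. Expanding $\Delta J\bc{\boldsymbol\pi_\tau,\boldsymbol\pi^*}$ and $\tilde d\bc{\boldsymbol\pi_\tau,\boldsymbol\pi^*}$ from their definitions in \eqref{def:mon_metric} (with $\boldsymbol\mu = \boldsymbol\mu^{\boldsymbol\pi_\tau}$ and $\boldsymbol\mu' = \boldsymbol\mu^*$), the cross terms $J_\alpha^{\boldsymbol\mu^*}\bc{\pi_\tau^\alpha}$ and $J_\alpha^{\boldsymbol\mu^*}\bc{\pi^{\alpha,*}}$ appear with opposite signs and cancel, leaving exactly $\int_{\mathcal I}\brk{J_\alpha^{\boldsymbol\mu^{\boldsymbol\pi_\tau}}\bc{\pi_\tau^\alpha} - J_\alpha^{\boldsymbol\mu^{\boldsymbol\pi_\tau}}\bc{\pi^{\alpha,*}}}\,\mathrm d\alpha$. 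Hence the policy part equals $\Delta J\bc{\boldsymbol\pi_\tau,\boldsymbol\pi^*} + \tilde d\bc{\boldsymbol\pi_\tau,\boldsymbol\pi^*}$, which is the claim provided the measure part contributes nothing.

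The hard part will be the measure part, i.e. the product-rule term $\int_{\mathcal I}\sum_{t}\sum_{x}\bc{\frac{\mathrm d}{\mathrm d\tau}\mu_t^{\alpha,\boldsymbol\pi^*,\boldsymbol\mu^{\boldsymbol\pi_\tau}}\bc{x}}\,D_{h^*}\bc{y_{t,\tau}^\alpha\bc{x,\cdot}, y_t^{\alpha,*}\bc{x,\cdot}}\,\mathrm d\alpha$. This term is absent in the mean-field-independent analysis of \cite{perolat2021scaling}, where the reference flow of $\boldsymbol\pi^*$ does not move with $\tau$, so showing that it does not affect the identity is exactly the new difficulty introduced by a mean-field-dependent kernel. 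The plan is to prove that this contribution vanishes; I expect the route to go through differentiating the forward equation for $\mu_t^{\alpha,\boldsymbol\pi^*,\boldsymbol\mu^{\boldsymbol\pi_\tau}}$, controlling $\frac{\mathrm d}{\mathrm d\tau}\mu_t^{\alpha,\boldsymbol\pi^*,\boldsymbol\mu^{\boldsymbol\pi_\tau}}$ via the Lipschitz regularity of $P$ from Assumption~\ref{ass:Lip} and the smoothed step graphon, and establishing the requisite cancellation against the Bregman weights. Making this step rigorous, rather than the routine telescoping and algebra above, is where I expect essentially all of the technical effort to concentrate.
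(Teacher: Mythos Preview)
Your ``policy part'' is exactly the paper's proof: it applies Lemma~\ref{lem:derivative_con_conj} and the CTOMD dynamics to obtain the integrand $\mu_t^{\alpha,\boldsymbol\pi^*,\boldsymbol\mu'_\tau}(x)\,\langle \pi_{t,\tau}^\alpha-\pi_t^{\alpha,*},\,Q_\alpha^{\boldsymbol\mu'_\tau,\pi_\tau^\alpha}(t,x,\cdot)\rangle$, rewrites the first pairing as $V_\alpha^{\boldsymbol\mu'_\tau,\pi_\tau^\alpha}$, inserts the Bellman recursion into the $\pi^{\alpha,*}$-pairing, uses that $\mu_t^{\alpha,\boldsymbol\pi^*,\boldsymbol\mu'_\tau}$ is the forward flow of $\pi^{\alpha,*}$ under $P(\cdot\mid\cdot,\cdot,\mu'_{t,\tau})$ to telescope, and then adds and subtracts $J_\alpha^{\boldsymbol\mu^*}(\pi_\tau^\alpha)$, $J_\alpha^{\boldsymbol\mu^*}(\pi^{\alpha,*})$ to land on $\Delta J+\tilde d$. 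There is no difference in strategy or in any step.

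Where you diverge from the paper is the ``measure part''. The paper's computation simply does not differentiate the weights $\mu_t^{\alpha,\boldsymbol\pi^*,\boldsymbol\mu'_\tau}$: in the very first line it writes $\frac{\mathrm d}{\mathrm d\tau}$ acting only on the bracket $[\,h^*(y_{t,\tau}^\alpha)-h^*(y_t^{\alpha,*})-\langle\pi_t^{\alpha,*},y_{t,\tau}^\alpha-y_t^{\alpha,*}\rangle\,]$ and carries the weight along as a fixed coefficient. In other words, the paper treats \eqref{def:sim_measure} as if $\frac{\mathrm d}{\mathrm d\tau}H(y_\tau)$ is taken with the state flow frozen, exactly as in \cite{perolat2021scaling} where the flow of $\boldsymbol\pi^*$ is genuinely $\tau$-independent because $P$ there does not depend on the mean field. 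No argument for the vanishing of the product-rule term appears.

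Your instinct that this extra term should vanish is therefore not what the paper establishes, and I would caution against expecting it to: there is no evident mechanism by which $\int_{\mathcal I}\sum_{t,x}\big(\tfrac{\mathrm d}{\mathrm d\tau}\mu_t^{\alpha,\boldsymbol\pi^*,\boldsymbol\mu'_\tau}(x)\big)\,D_{h^*}\big(y_{t,\tau}^\alpha(x,\cdot),y_t^{\alpha,*}(x,\cdot)\big)\,\mathrm d\alpha$ would be identically zero along the CTOMD trajectory --- the Bregman weight is nonnegative but generically nonzero, and the signed measure $\tfrac{\mathrm d}{\mathrm d\tau}\mu_t^{\alpha,\boldsymbol\pi^*,\boldsymbol\mu'_\tau}$ has no reason to be orthogonal to it. So either the lemma is to be read with the weights held fixed (in which case your policy-part calculation already \emph{is} the whole proof, and the ``hard part'' you anticipate is vacuous), or you have correctly identified a term that the paper's derivation omits. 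Either way, do not invest effort in trying to make the measure part cancel via Lipschitz bounds on $P$; that route will not produce an exact identity.
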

Note that the proofs of Lemma~\ref{lem:derivative_con_conj} and  Lemma~\ref{lem:derivative_H} can be found in the subsequent sections.

\section{PROOF OF LEMMA~\ref{lem:omd_unique}}
Note that the following proof is based on an argument structure in \cite{perolat2021scaling}.
\begin{proof}
 Theorem~\ref{thm:existence} ensures that a NE exists. So we prove uniqueness by contradiction. Therefore, assume there are two different Nash Equilibria $\boldsymbol \pi$ and $\boldsymbol \pi'$. Their associated mean fields are denoted by $\boldsymbol \mu = \Psi(\boldsymbol \pi)$ and $\boldsymbol \mu' = \Psi(\boldsymbol \pi')$ as before. On the one hand, the strictly weak monotonicity of the mean field game implies 
\begin{align*}
	&\int_{\mathcal{I}} \left[ J^{\boldsymbol \mu}_{\alpha}\bc{\pi^\alpha} + J^{\boldsymbol \mu'}_{\alpha}\bc{{\pi'}^\alpha}  -  J^{\boldsymbol \mu}_{\alpha}\bc{{\pi'}^\alpha} -J^{\boldsymbol \mu'}_{\alpha}\bc{\pi^\alpha} \right] \mathrm d \alpha < 0.
\end{align*}
On the other hand, we know that both $\int_{\mathcal{I}} J^{\boldsymbol \mu}_{\alpha}\bc{\pi^\alpha} - J^{\boldsymbol \mu}_{\alpha}\bc{{\pi'}^\alpha} \mathrm d \alpha  \geq 0$ and $\int_{\mathcal{I}} J^{\boldsymbol \mu'}_{\alpha}\bc{{\pi'}^\alpha}  -J^{\boldsymbol \mu'}_{\alpha}\bc{\pi^\alpha} \mathrm d \alpha \geq 0$ hold by assuming that $\boldsymbol \pi$ and $\boldsymbol \pi'$ are Nash Equilibria. This in turn yields
\begin{align*}
	&\int_{\mathcal{I}} \left[ J^{\boldsymbol \mu}_{\alpha}\bc{\pi^\alpha} + J^{\boldsymbol \mu'}_{\alpha}\bc{{\pi'}^\alpha} - J^{\boldsymbol \mu}_{\alpha}\bc{{\pi'}^\alpha} -J^{\boldsymbol \mu'}_{\alpha}\bc{\pi^\alpha} \right] \mathrm d \alpha \geq 0.
\end{align*}
which is an obvious contradiction to the strictly weak monotonicity. This concludes the proof.
\end{proof}

\section{PROOF OF LEMMA~\ref{lem:derivative_con_conj}}
\begin{proof}
Recall that $h^*$ is defined as a function $h^*\colon \mathbb{R}^{\abs{\mathcal{U}}} \to \mathbb{R}$. For notational convenience, we define $h^* \bc{a_1, \ldots, a_{\abs{\mathcal{U}}}} = z \in \mathbb{R}$ where $a_1, \ldots, a_{\abs{\mathcal{U}}} \in \mathbb{R}$ and each $a_j$ represents the probability weight of one specific action $u_j \in \mathcal{U}$. Furthermore, we denote by $\frac{\mathrm d}{\mathrm d \tau}$ the total differential with respect to $\tau$. These definitions allow us to state
\begin{align*}
	&\frac{\mathrm d}{\mathrm d \tau}h^* \bc{y_{t, \tau}^\alpha \bc{x^\alpha, \cdot}} = \sum_{u_j \in \mathcal U} \frac{\partial h^*}{\partial a_j}  \bc{y_{t, \tau}^\alpha \bc{x^\alpha, u_j}} \frac{\mathrm d y_{t, \tau}^\alpha \bc{x^\alpha, u_j} }{\mathrm d \tau}.
\end{align*}
Keeping in mind the OMD algorithm and especially
\begin{align*}
	y_{t, \tau}^\alpha \bc{x^\alpha, u^\alpha} = \int_0^\tau Q_\alpha^{\boldsymbol{\mu}^{\boldsymbol{\pi}_s}, \pi^\alpha_s} \bc{t, x^\alpha, u^\alpha} \mathrm d s
\end{align*}
yields
\begin{align*}
	 \frac{\mathrm d y_{t, \tau}^\alpha \bc{x^\alpha, u_j} }{\mathrm d \tau} = Q_i^{\boldsymbol{\mu}^{\boldsymbol{\pi}_\tau}, \pi^\alpha_\tau} \bc{t, x^\alpha, u^\alpha}.
\end{align*}
Additionally, by definition we have
\begin{align*}
	\Gamma(y) = \nabla h^* (y) \quad \textrm{ and } \quad \pi_{t, \tau}^\alpha \bc{\cdot \big\vert x^\alpha} = \Gamma \bc{y_{t, \tau}^\alpha \bc{x^\alpha, \cdot}}
\end{align*}
which yields
\begin{align*}
	 \frac{\partial h^*}{\partial a_j} \bc{y_{t, \tau}^\alpha \bc{x^\alpha, u_j}} =  \Gamma \bc{y_{t, \tau}^\alpha \bc{x^\alpha, u_j}} =  \pi_{t, \tau}^\alpha \bc{u_j \big\vert x^\alpha}.
\end{align*}
In summary, the above observations imply
\begin{align*}
	\frac{\mathrm d}{\mathrm d \tau} h^* \bc{y_{t, \tau}^\alpha \bc{x^\alpha, \cdot}} 
	&= \sum_{u_j \in \mathcal U} \frac{\partial h^*}{\partial a_j} \frac{\mathrm d y_{t, \tau}^\alpha \bc{x^\alpha, u_j} }{\mathrm d \tau}\\
	&\qquad = \sum_{u_j \in \mathcal U}\pi_{t, \tau}^\alpha \bc{u_j \big\vert x^\alpha}  Q_i^{\boldsymbol{\mu}^{\boldsymbol{\pi}_\tau}, \pi^\alpha_\tau} \bc{t, x^\alpha, u^\alpha}
	= \scal{\pi_{t, \tau}^\alpha \bc{\cdot \big\vert x^\alpha}}{Q_i^{\boldsymbol{\mu}^{\boldsymbol{\pi}_\tau}, \pi^\alpha_\tau} \bc{t, x^\alpha, \cdot}}
\end{align*}
which yields the desired result.
\end{proof}

\section{PROOF OF LEMMA \ref{lem:derivative_H}}\label{sec:last_lemma_OMD}
This section follows \cite[Appendix D]{perolat2021scaling}.
\begin{proof}
	Assume that $\boldsymbol{\pi}^* \in \boldsymbol{\Pi}$ is a NE. Furthermore, for all $\tau > 0$ define the mean field $\boldsymbol \mu'_{ \tau} = \Psi \bc{\boldsymbol \pi_{ \tau}}$ and keep in mind that the transition kernel does not depend on the mean field by assumption. Then we have
	\begin{align*}
		&\frac{\mathrm d}{\mathrm d \tau} H \bc{y_\tau}\\
		&= \int_{\mathcal{I}} \sum_{t \in \mathcal T} \sum_{x^\alpha \in \mathcal X} \mu_t^{\alpha, \boldsymbol{\pi}^*} \bc{x^\alpha} \frac{\mathrm d}{\mathrm d \tau} \left[ h^* \bc{y_{t, \tau}^\alpha \bc{x^\alpha, \cdot}} - h^* \bc{y_t^{\alpha, *}\bc{x^\alpha, \cdot}} - \scal{\pi_t^{\alpha, *} \bc{\cdot \vert x^\alpha}}{y_{t, \tau}^\alpha \bc{x^\alpha, \cdot} - y_t^{\alpha, *}\bc{x^\alpha, \cdot}} \right] \mathrm d \alpha\\
		&= \int_{\mathcal{I}} \sum_{t \in \mathcal T} \sum_{x^\alpha \in \mathcal X} \mu_t^{\alpha, \boldsymbol{\pi}^*} \bc{x^\alpha} \cdot \left[ \frac{\mathrm d}{\mathrm d \tau}h^* \bc{y_{t, \tau}^\alpha \bc{x^\alpha, \cdot}} - \scal{\pi_t^{\alpha, *}\bc{\cdot \vert x^\alpha}}{\frac{\mathrm d}{\mathrm d \tau} y_{t, \tau}^\alpha \bc{x^\alpha, \cdot}} \right] \mathrm d \alpha\\
		&= \int_{\mathcal{I}} \sum_{\substack{t \in \mathcal T \\ x^\alpha \in \mathcal X}} \mu_t^{\alpha, \boldsymbol{\pi}^*} \bc{x^\alpha}  \left[\scal{\pi^{\alpha}_{t, \tau}\bc{\cdot \vert x^\alpha}}{Q_\alpha^{ \boldsymbol \mu'_{ \tau}, \pi_\tau^\alpha}\bc{t, x^\alpha, \cdot}} - \scal{\pi_t^{\alpha, *}\bc{\cdot \vert x^\alpha}}{Q_i^{\boldsymbol \mu'_{ \tau}, \pi_\tau^\alpha}\bc{t, x^\alpha, \cdot}} \right] \mathrm d \alpha\\
		&= \int_{\mathcal{I}} \sum_{t \in \mathcal T} \sum_{x^\alpha \in \mathcal X} \mu_t^{\alpha, \boldsymbol{\pi}^*} \bc{x^\alpha}  \brk{V_\alpha^{\boldsymbol \mu'_{ \tau}, \pi_\tau^\alpha}\bc{t, x^\alpha} - \scal{\pi_t^{\alpha, *}\bc{\cdot \vert x^\alpha}}{r \bc{x^\alpha, \cdot, \mu'_{t, \tau}}}} \mathrm d \alpha\\
		&\quad- \int_{\mathcal{I}} \sum_{t \in \mathcal T} \sum_{x^\alpha \in \mathcal X} \mu_t^{\alpha, \boldsymbol{\pi}^*} \bc{x^\alpha} \scal{\pi_t^{\alpha, *}\bc{\cdot \vert x^\alpha}}{\sum_{{x'}^\alpha \in \mathcal X} P \bc{{x'}^\alpha \big\vert x^\alpha, \cdot} V_\alpha^{\boldsymbol \mu'_{ \tau}, \pi_\tau^\alpha}\bc{t + 1, {x'}^\alpha}} \mathrm d \alpha\\
		&= \int_{\mathcal{I}} \sum_{t \in \mathcal T} \sum_{x^\alpha \in \mathcal X} \brk{\mu_t^{\alpha, \boldsymbol{\pi}^*} \bc{x^\alpha}  V_\alpha^{\boldsymbol \mu'_{ \tau}, \pi_\tau^\alpha}\bc{t, x^\alpha}} - \sum_{\substack{t \in \mathcal T \\ x^\alpha \in \mathcal X}} \mu_t^{\alpha, \boldsymbol{\pi}^*} \bc{x^\alpha} \scal{\pi_t^{\alpha, *}\bc{\cdot \vert x^\alpha}}{r \bc{x^\alpha, \cdot, \mu'_{t, \tau}}} \\
		&\quad \qquad \qquad \qquad \qquad \qquad \qquad \qquad \qquad \qquad \qquad \qquad -  \sum_{t \in \mathcal T} \sum_{{x'}^\alpha \in \mathcal X}V_\alpha^{\boldsymbol \mu'_{ \tau}, \pi_\tau^\alpha}\bc{t + 1, {x'}^\alpha}  \mu_{t + 1}^{\alpha, \boldsymbol{\pi}^*} \bc{{x'}^\alpha} \mathrm d \alpha\\
		&= \int_{\mathcal{I}} \left( \sum_{t \in \mathcal T} \sum_{x^\alpha \in \mathcal X} \brk{\mu_t^{\alpha, \boldsymbol{\pi}^*} \bc{x^\alpha}  V_\alpha^{\boldsymbol \mu'_{ \tau}, \pi_\tau^\alpha}\bc{t, x^\alpha}} -  \sum_{t \in \mathcal T} \sum_{x^\alpha \in \mathcal X} \brk{\mu_{t+1}^{\alpha, \boldsymbol{\pi}^*} \bc{x^\alpha}  V_i^{\boldsymbol \mu'_{ \tau}, \pi_\tau^\alpha}\bc{t + 1, x^\alpha}}\right.  \\
		& \quad \qquad \qquad \qquad \qquad \qquad \qquad \qquad \qquad \qquad \qquad \qquad \left. - \sum_{\substack{t \in \mathcal T \\ x^\alpha \in \mathcal X}} \mu_t^{\alpha, \boldsymbol{\pi}^*} \bc{x^\alpha} \scal{\pi_t^{\alpha, *}\bc{\cdot \vert x^\alpha}}{r \bc{x^\alpha, \cdot, \mu'_{t, \tau}}} \right) \mathrm d \alpha\\
		&= \int_{\mathcal{I}} \brk{J_\alpha^{\boldsymbol{\mu'}_\tau}\bc{\pi_\tau^\alpha} - J_\alpha^{\boldsymbol{\mu'}_\tau} \bc{\pi^{\alpha, *}}} \mathrm d \alpha\\
		&= \int_{\mathcal{I}} \left[J_\alpha^{\boldsymbol{\mu'}_\tau}\bc{\pi_\tau^\alpha} - J_\alpha^{\boldsymbol{\mu'}_\tau} \bc{\pi^{\alpha, *}} +J_i^{\boldsymbol{\mu}^*} \bc{\pi^{\alpha, *}} -J_\alpha^{\boldsymbol{\mu}^*} \bc{\pi_\tau^\alpha} -J_\alpha^{\boldsymbol{\mu}^*} \bc{\pi^{\alpha, *}} + J_\alpha^{\boldsymbol{\mu}^*} \bc{\pi_\tau^\alpha}\right] \mathrm d \alpha \\
		&= \Delta J \bc{\boldsymbol{\pi}_\tau, \boldsymbol{\pi}^*} + \tilde{d}\bc{\boldsymbol{\pi}_\tau, \boldsymbol{\pi}^*} 
	\end{align*}
where the third equality follows from Lemma~\ref{lem:derivative_con_conj}.
\end{proof}

\section{PROOF OF THEOREM~\ref{thm:unique_conv}}\label{sec:final_proof_OMD}

The following proof is based on an idea in \cite[Appendix G]{perolat2021scaling}.
\begin{proof}
	Define the function $f: \boldsymbol{\Pi} \to \mathbb{R}$ given by
	\begin{align}
		f (\boldsymbol{\pi}) &= \int_{\mathcal{I}} \sum_{t = 0}^T \sum_{x^\alpha \in \mathcal{X}} \mu_t^{\alpha, \boldsymbol{\pi}^*} \bc{x^\alpha}  D_{\mathrm{KL}} \bc{\pi_t^{\alpha, *} \bc{\cdot \big\vert x^\alpha }, \pi_t^{\alpha} \bc{\cdot \big\vert x^\alpha }} \mathrm d \alpha \nonumber
	\end{align}
where $\boldsymbol{\pi}^*$ is the NE of the smoothed MP-MFG which is unique by Lemma~\ref{lem:omd_unique}. We point out that this immediately implies that $f (\boldsymbol{\pi}) = 0$ if and only if $\boldsymbol{\pi} = \boldsymbol{\pi}^*$ (almost everywhere) and $f (\boldsymbol{\pi}) > 0$ otherwise by the basic properties of the Kullback-Leibler divergence.

The just defined function $f$ and the function $H$ defined in \eqref{def:sim_measure} are closely related. To see that, we start with some calculations
\begin{align}\label{eq:H_equivalence}
&h^* \bc{y_{t, \tau}^i \bc{x^i, \cdot}} - h^* \bc{y_t^{i, *}\bc{x^i, \cdot}}
-\scal{\pi_t^{i, *} \bc{\cdot \vert x^i}}{y_{t, \tau}^i \bc{x^i, \cdot} - y_t^{i, *}\bc{x^i, \cdot}} \nonumber\\
&\quad = \log \bc{\sum_{u \in \mathcal{U}} \exp(y_{t, \tau}^i \bc{x^i, u})} - \log \bc{\sum_{u \in \mathcal{U}} \exp(y_t^{i, *}\bc{x^i, u})}  -  \scal{\pi_t^{i, *} \bc{\cdot \vert x^i}}{y_{t, \tau}^i \bc{x^i, \cdot} - y_t^{i, *}\bc{x^i, \cdot}}\nonumber \\
&\quad = \log \bc{\sum_{u \in \mathcal{U}} \exp(y_{t, \tau}^i \bc{x^i, u})} - \scal{\pi_t^{i, *} \bc{\cdot \big\vert x^i}}{y_{t, \tau}^i (x^i, \cdot)} + \sum_{u \in \mathcal{U}} \pi_t^{i, *} \bc{u \vert x^i}  \log \bc{\frac{\log (y_t^{i, *}\bc{x^i, u})}{\sum_{u' \in \mathcal{U}} \exp(y_t^{i, *}\bc{x^i, u'})}} \nonumber\\
&\quad= \log \bc{\sum_{u' \in \mathcal{U}} \exp(y_{t, \tau}^i (x^i, u'))} - \scal{\pi_t^{i, *} \bc{\cdot \big\vert x^i}}{y_{t, \tau}^i (x^i, \cdot)} + \sum_{u \in \mathcal{U}} \pi_t^{i, *} \bc{u \big\vert x^i} \cdot  \log \bc{\pi_t^{i, *} \bc{u \big\vert x^i}} \nonumber\\
&\quad= \sum_{u \in \mathcal{U}} \pi_t^{i, *} \bc{u \big\vert x^i} \cdot \left[ \log \bc{\pi_t^{i, *} \bc{u \big\vert x^i}} + \log \bc{\frac{\sum_{u' \in \mathcal{U}} \exp(y_{t, \tau}^i (x^i, u'))}{\exp(y_{t, \tau}^i (x^i, u))}} \right] \nonumber\\
&\quad= \sum_{u \in \mathcal{U}} \pi_t^{i, *} \bc{u \big\vert x^i} \cdot \log \bc{\frac{\pi_t^{i, *} \bc{u \big\vert x^i}}{\Gamma \bc{y_{t, \tau}^i (x^i, u)}}} \nonumber\\
&\quad= D_{\mathrm{KL}} \bc{\pi_t^{i, *} \bc{\cdot \big\vert x^i }, \Gamma \bc{y_{t, \tau}^i (x^i, \cdot)}}. \nonumber
\end{align}
Therefore, we have $H(y) = f(\boldsymbol{\pi})$ for all pairs $y, \boldsymbol{\pi}$ for which $\Gamma(y) = \boldsymbol{\pi}$ holds. This, in turn implies together with Lemma~\ref{lem:derivative_con_conj} that for an arbitrary $\boldsymbol{\pi}_\tau \in \boldsymbol{\pi}$ we have
\begin{align}
	\frac{\mathrm d}{\mathrm d \tau} f(\boldsymbol{\pi}_\tau)
	= \Delta J \bc{\boldsymbol{\pi}_\tau, \boldsymbol{\pi}^*} + \tilde{d}\bc{\boldsymbol{\pi}_\tau, \boldsymbol{\pi}^*}
\end{align}
Now, we claim that 
\begin{align}\label{eq:der_f_is_non_positive}
	\frac{\mathrm d}{\mathrm d \tau} f(\boldsymbol{\pi}_\tau)
	= \Delta J \bc{\boldsymbol{\pi}_\tau, \boldsymbol{\pi}^*} + \tilde{d}\bc{\boldsymbol{\pi}_\tau, \boldsymbol{\pi}^*} \leq 0
\end{align}
is true for all $\boldsymbol{\pi}_\tau \in \boldsymbol{\pi}$ and that it is equal to zero if and only $\boldsymbol{\pi}_\tau$ is the NE, i.e. $\boldsymbol{\pi}_\tau = \boldsymbol{\pi}^*$.
To see this, consider an arbitrary but fixed $\boldsymbol{\pi}_\tau \in \boldsymbol{\pi}$. The term $\Delta J \bc{\boldsymbol{\pi}_\tau, \boldsymbol{\pi}^*} + \tilde{d}\bc{\boldsymbol{\pi}_\tau, \boldsymbol{\pi}^*}$ can either be negative or equal to zero. If $\Delta J \bc{\boldsymbol{\pi}_\tau, \boldsymbol{\pi}^*} + \tilde{d}\bc{\boldsymbol{\pi}_\tau, \boldsymbol{\pi}^*} = 0$ holds, this implies $ \tilde{d}\bc{\boldsymbol{\pi}_\tau, \boldsymbol{\pi}^*} = 0$ which in turn yields $\boldsymbol{\mu}^{\boldsymbol \pi_\tau} =\boldsymbol{\mu}^{\boldsymbol \pi^*}$. Besides that, $\Delta J \bc{\boldsymbol{\pi}_\tau, \boldsymbol{\pi}^*} + \tilde{d}\bc{\boldsymbol{\pi}_\tau, \boldsymbol{\pi}^*} = 0$ also means  that $\Delta J \bc{\boldsymbol{\pi}_\tau, \boldsymbol{\pi}^*} = 0$ has to hold. Reformulating the equation, we immediately obtain $\int_{\mathcal{I}} J_\alpha^{\boldsymbol{\mu}^*} \bc{\pi_\tau^\alpha} \mathrm d \alpha = \int_{\mathcal{I}} J_\alpha^{\boldsymbol{\mu}^*} \bc{\pi^{\alpha,*}} \mathrm d \alpha$. In combination with the previous observation $\boldsymbol{\mu}^{\boldsymbol \pi_\tau} =\boldsymbol{\mu}^{\boldsymbol \pi^*}$, this implies that $\boldsymbol{\pi}_\tau$ is a NE, i.e. $\boldsymbol{\pi}^*= \boldsymbol{\pi}_\tau$. To see this, recall that $\boldsymbol{\pi}^*$ is a NE by assumption and thereby a best response to $\boldsymbol{\mu}^{\boldsymbol \pi^*}$ (and $\boldsymbol{\mu}^{\boldsymbol \pi_\tau}$ by the first argument). Then,  $\int_{\mathcal{I}} J_\alpha^{\boldsymbol{\mu}^*} \bc{\pi_\tau^\alpha} \mathrm d \alpha = \int_{\mathcal{I}} J_\alpha^{\boldsymbol{\mu}^*} \bc{\pi^{\alpha,*}} \mathrm d \alpha$ means that both $\boldsymbol{\pi}^*$ and $\boldsymbol{\pi}_\tau$ yield the same expected reward given $\boldsymbol{\mu}^{\boldsymbol \pi_\tau}$. Since $\boldsymbol{\pi}^*$ is a best response to $\boldsymbol{\mu}^{\boldsymbol \pi_\tau}$, $\boldsymbol{\pi}_\tau$ also has to be a best response to $\boldsymbol{\mu}^{\boldsymbol \pi_\tau}$ and therefore a NE.

However, if $\Delta J \bc{\boldsymbol{\pi}_\tau, \boldsymbol{\pi}^*} + \tilde{d}\bc{\boldsymbol{\pi}_\tau, \boldsymbol{\pi}^*} < 0$ holds, $\boldsymbol{\pi}$ cannot be a NE. Assume, by contradiction, that $\boldsymbol{\pi}$ is a NE. Then, we have
\begin{align*}
	&\tilde{d}\bc{\boldsymbol{\pi}_\tau, \boldsymbol{\pi}^*} =\int_{\mathcal{I}} \left[ J^{\boldsymbol \mu_\tau}_{\alpha}\bc{\pi^\alpha_\tau}  - J^{\boldsymbol \mu_\tau}_{\alpha}\bc{\pi^{\alpha, *}} + J^{\boldsymbol \mu^*}_{\alpha}\bc{\pi^{\alpha, *}} -J^{\boldsymbol \mu^*}_{\alpha}\bc{\pi^i_\tau}\right]  \mathrm d \alpha \geq 0
\end{align*}
because $\boldsymbol{\pi}_\tau$ and $\boldsymbol{\pi}^*$ which implies $\int_{\mathcal{I}} J^{\boldsymbol \mu_\tau}_{\alpha}\bc{\pi^\alpha_\tau}  - J^{\boldsymbol \mu_\tau}_{\alpha}\bc{\pi^{\alpha, *}} \mathrm d \alpha \geq 0$ and $\int_{\mathcal{I}}  J^{\boldsymbol \mu^*}_{\alpha}\bc{\pi^{\alpha, *}} -J^{\boldsymbol \mu^*}_{\alpha}\bc{\pi^\alpha_\tau} \mathrm d \alpha \geq 0$, respectively. This observation combined with inequality \eqref{def:mon_metric} yields $\tilde{d}\bc{\boldsymbol{\pi}_\tau, \boldsymbol{\pi}^*} = 0$ and thereby $\boldsymbol{\mu}^{\boldsymbol \pi_\tau} =\boldsymbol{\mu}^{\boldsymbol \pi^*}$. Thus, $\boldsymbol{\pi}_\tau$ and $\boldsymbol{\pi}^*$ are both best responses to $\boldsymbol{\mu}^{\boldsymbol \pi^*}$ which means that $\Delta J \bc{\boldsymbol{\pi}_\tau, \boldsymbol{\pi}^*} =0$. This is a contradiction to the assumption $\Delta J \bc{\boldsymbol{\pi}_\tau, \boldsymbol{\pi}^*} + \tilde{d}\bc{\boldsymbol{\pi}_\tau, \boldsymbol{\pi}^*} < 0$. Hence, $\Delta J \bc{\boldsymbol{\pi}_\tau, \boldsymbol{\pi}^*} + \tilde{d}\bc{\boldsymbol{\pi}_\tau, \boldsymbol{\pi}^*} < 0$ implies that $\boldsymbol{\pi}_\tau$ is not a NE.

With the above arguments in place, we can apply LaSalle's theorem \cite[Theorem 4.4]{khalil2002nonlinear} to solve the differential equation \eqref{eq: CTOMD_diff_eq}. Choosing $f ( \boldsymbol{y})$ as the function $V$ in \cite[Theorem 4.1]{khalil2002nonlinear} with  $\boldsymbol{\Pi}$ as the compact set, we point out that the unique NE $\boldsymbol{\pi}^*$ is the only element in $\boldsymbol{\Pi}$ with $\frac{\mathrm d}{\mathrm d \tau} f(\boldsymbol{\pi}^*) = 0$ as we have established in equation \eqref{eq:der_f_is_non_positive}. Therefore, the OMD algorithm converges to $\boldsymbol{\pi}^*$ for every starting point $\boldsymbol{\pi}_0 \in \boldsymbol{\Pi}$.
\end{proof}

\section{CUTOFF POWER LAW GRAPHON}\label{app:cut_off}

In this paragraph we consider graphs $G_n$ with cutoff power law degree distributions. The benefits of adding a cutoff are twofold. On the one hand, a technical advantage of the modified degree distribution is that it is both integrable and Lipschitz continuous. On the other hand, the cutoff version of the power-law turns out to be a more realistic modelling option in numerous real-world applications. In practice, the expression 'power law distribution' frequently refers only to the tail of the distribution. This accounts for the fact that pure power laws often do not describe the empirical observations accurately, see for example \cite{newman2005power}. The interested reader is also referred to \cite{newman2018networks} and the references therein for an overview on the topic.

The cutoff power-law distribution can be constructed by connecting two vertices $i, j \in \cbc{1, \ldots, n}$ with probability $\min \bc{1, n^\beta \bc{\max\cbc{i, c \cdot n} \cdot \max \cbc{j, c \cdot n}}^{- \alpha}}$ independently of all other edges where $0 < c < 1$, $0 < \alpha < 1$, and $0 \leq \beta < 2 \alpha$. Furthermore, the assumption $\beta > 2 \alpha -1$ implies that the expected number of edges is super-linear. We note that $n^\beta \bc{ c \cdot n \cdot c \cdot n}^{- \alpha} = c^{ - 2 \alpha} n^{\beta - 2 \alpha} = o(1)$ is implied by the assumption $\beta < 2 \alpha$. This in turn allows us to drop the $\min$ term in the edge probability since for $n \rightarrow \infty$ we have $n^\beta \bc{\max\cbc{i, c \cdot n} \cdot \max \cbc{j, c \cdot n}}^{- \alpha} = o(1)$. The following calculation provides the expected edge density which is a key element for the subsequent arguments.
\begin{align*}
	&\Erw \brk{\norm{G_n}_1} = 
	n^{- 2} \sum_{i, j \in V\bc{G_n}} \Erw \brk{ \mathbf{1}_{(i j) \in E \bc{G_n}}} \\
	&= \sum_{i, j \in V\bc{G_n}}\frac{\min \bc{1, n^\beta \bc{\max\cbc{i, c n} \cdot \max \cbc{j, c n}}^{- \alpha}}}{n^2}\\
	&= n^{- 2} \sum_{i, j \in V\bc{G_n}} n^\beta \bc{\max\cbc{i, c \cdot n} \cdot \max \cbc{j, c \cdot n}}^{- \alpha} + o \bc{n^{-2}} \\
	&= n^{\beta - 2} \bc{\sum_{i, j \in V\bc{G_n} } \bc{c \cdot n}^{- 2 \alpha} \mathbf{1}_{\cbc{i, j \leq c \cdot n}}} +  o \bc{n^{-2}}  + n^{\beta - 2} \bc {\sum_{i, j \in V\bc{G_n}}  (i j)^{- \alpha} \mathbf{1}_{\cbc{i, j > c \cdot n}}} \\
	&\qquad \qquad \qquad \qquad \qquad \qquad \qquad \qquad \qquad + 2 n^{\beta - 2} \bc {\sum_{i, j \in V\bc{G_n}}  (i \cdot c \cdot n)^{- \alpha} \mathbf{1}_{\cbc{i > c \cdot n \geq j}}} \\
	&= c ^{- 2 \alpha} n^{\beta - 2 - 2 \alpha} (c  n)^2 +  o \bc{n^{-2}} + n^{\beta - 2} \bc {\sum_{i, j \in V\bc{G_n}}  (i j)^{- \alpha} \mathbf{1}_{\cbc{i, j > c  n}}} + 2 c^{- \alpha} \cdot n^{\beta - 2 - \alpha} c  n \bc {\sum_{i \in V\bc{G_n}}  i ^{- \alpha} \mathbf{1}_{\cbc{i > c \cdot n }}} \\
	&\sim c ^{2 - 2 \alpha} n^{\beta - 2 \alpha} + n^{\beta - 2} \bc { \int_{c \cdot n}^n i^{-\alpha} \, \mathrm d i} \bc { \int_{c \cdot n}^n j^{-\alpha} \, \mathrm d j} + 2 c^{1 - \alpha}  n^{\beta - 1 - \alpha} \int_{c \cdot n}^n i^{-\alpha} \, \mathrm d i \\
	&= c ^{2 - 2 \alpha} n^{\beta - 2 \alpha} + n^{\beta - 2}(1 - \alpha)^{-2} \bc{n^{1 - \alpha} - (c n)^{1 - \alpha}}^2\\
	&\quad + 2 c^{1 - \alpha}  n^{\beta - 1 - \alpha} (1 - \alpha)^{-1} \bc{n^{1 - \alpha} - (c n)^{1 - \alpha}} \\
	&= c ^{2 - 2 \alpha} n^{\beta - 2 \alpha} + n^{\beta - 2 \alpha}(1 - \alpha)^{-2}\bc{1 - c^{1 - \alpha}}^2  + 2 c^{1 - \alpha}  n^{\beta - 2 \alpha} (1 - \alpha)^{-1} \bc{1 - c^{1 - \alpha}}\\
	&=n^{\beta - 2 \alpha}   c ^{2 - 2 \alpha}  +  n^{\beta - 2 \alpha}\frac{1 - c^{1 - \alpha}}{(1 - \alpha)^2} \brk{1 - c^{1 - \alpha} + 2 (1- \alpha) c^{1 - \alpha}} \\
	&=(1 - \alpha)^{-2} n^{\beta - 2 \alpha}  \left( c ^{2 - 2 \alpha} (1 - \alpha)^2 + 1 - c^{2 - 2\alpha}  - 2  \alpha c^{1 - \alpha  }\bc{1 - c^{1 - \alpha}}\right) \\
	&=(1 - \alpha)^{-2} n^{\beta - 2 \alpha}  \bc{ \alpha^2 c ^{2 - 2 \alpha} + 1- 2  \alpha c^{1 - \alpha }}\\
	&=(1 - \alpha)^{-2} n^{\beta - 2 \alpha}  \bc{ 1 - \alpha c^{1 - \alpha}}^2
\end{align*}

Knowing the expected edge density, we are now able to determine the limiting graphon of interest. The required mathematical framework can be found in \cite{borgs2018p} which also provides the idea for our proof. Therefore, we just give the key definitions and refer to \cite{borgs2018p} for more details. In general, consider a weighted graph $G$ and a partition $\mathcal{P}$ of $G$ into $q$ parts, i.e. $\mathcal{P} \coloneqq \cbc{ V_1, \ldots , V_q}$ where some sets of the partition can be empty. Alternatively, this partition can be expressed in terms of a map $\phi \colon V(G) \rightarrow [q]$ where $\phi (x) = i$ if and only if $x \in V_i$. This map $\phi$ gives rise to the definition of the quotient $G/ \phi$ which consists of a pair $\bc{\alpha \bc {G/ \phi}, \beta \bc{G/ \phi}}$ where $\alpha \bc {G/ \phi} \in \mathbb{R}^q$ and $\beta \bc{G/ \phi} \in \mathbb{R}^{q \times q}$. The entries of $\alpha \bc {G/ \phi}$ and $\beta \bc{G/ \phi}$ are defined as
\begin{align*}
	\alpha_i \bc {G/ \phi} \coloneqq \frac{\alpha_{V_i} (G)}{\alpha_G}
\end{align*}
and
\begin{align*}
	\beta_{i j} \bc {G/ \phi} \coloneqq \frac{1}{\norm{G}_1} \sum_{(u,v) \in V_i \times V_j} \frac{\alpha_u (G)}{\alpha_G}  \frac{\alpha_v (G)}{\alpha_G}  \beta_{u v} (G)
\end{align*}
for all $i, j \in \cbc{1, \ldots, q}$. To obtain analogous concepts for graphons, we define a fractional partition of $[0,1]$ into $q$ classes as a $q$-tuple of measurable functions $\rho_1, \ldots, \rho_q\colon [0, 1] \rightarrow [0,1]$ with $\rho_1 (x) + \ldots + \rho_q (x) =1$ for all $x \in [0,1]$. Then, the pair $\bc{\alpha_i (W / \rho), \beta (W / \rho)}$ with entries
\begin{align*}
	\alpha_i (W / \rho) &\coloneqq \alpha_i (\rho) \coloneqq \int_0^1 \rho_i (x) \, \mathrm d x \\
	\beta_{i j} (W / \rho) &\coloneqq  \int_{\brk{0, 1}^2} \rho_i (x) \rho_j (y) W(x, y) \, \mathrm d x \, \mathrm d y
\end{align*}
defines the quotient $W / \rho$. With the above definitions in place, we turn to the proof that the graph sequence $\bc{G_n}_{n \in \mathbb{N}}$ converges to the $L^p$ graphon $W(x,y) = (1 - \alpha)^2 \bc{ 1 - \alpha c^{1 - \alpha}}^{-2} \bc{\max\cbc{x, c} \cdot \max \cbc{y, c}}^{- \alpha}$.
For each $V_i$ we define
\begin{align*}
	A_i \coloneqq \bigcup_{x \in V_i} (x -1, x] \qquad \textrm{and} \qquad B_i \coloneqq \bigcup_{x \in V_i} \Bigg(\frac{x -1}{n}, \frac{x}{n}\Bigg]
\end{align*}
which implies both $A_i \subset [0, n]$ and $B_i \subset [0, 1]$.
Turn to the expectation
\begin{align*}
	\Erw \brk{\beta_{i j} \bc {G_n / \phi}} 
	&\sim \frac{1}{n^2 \Erw \brk{\norm{G}_1}} \sum_{(u,v) \in V_i \times V_j}  \min \bc{1, n^\beta \bc{\max\cbc{u, c \cdot n} \cdot \max \cbc{v, c \cdot n}}^{- \alpha}} \\
	&\sim \frac{(1 - \alpha)^2}{ n^{\beta - 2 \alpha}  \bc{ 1 - \alpha c^{1 - \alpha}}^2} \sum_{(u,v) \in V_i \times V_j}   n^{\beta - 2} \bc{\max\cbc{u, c \cdot n} \cdot \max \cbc{v, c \cdot n}}^{- \alpha}\\
	&= (1 - \alpha)^{2} n^{ 2 \alpha -2}  \bc{ 1 - \alpha c^{1 - \alpha}}^{- 2}  \sum_{(u,v) \in V_i \times V_j} \bc{\max\cbc{u, c \cdot n} \cdot \max \cbc{v, c \cdot n}}^{- \alpha}\\
	&\sim \bc{\frac{1 - \alpha}{1 - \alpha c^{1 - \alpha}}}^2 n^{ 2 \alpha -2} \cdot \int_{\brk{0, n}^2} \mathbf{1}_{\cbc{x \in A_i}} \mathbf{1}_{\cbc{y \in A_j}} \\ &\qquad \qquad \cdot \bc{\max\cbc{x, c \cdot n} \cdot \max \cbc{y, c \cdot n}}^{- \alpha} \, \mathrm d x \, \mathrm d y\\
	&= \bc{\frac{1 - \alpha}{1 - \alpha c^{1 - \alpha}}}^2 n^{ -2} \cdot \int_{\brk{0, n}^2} \mathbf{1}_{\cbc{x \in A_i}} \mathbf{1}_{\cbc{y \in A_j}}  \bc{\max\cbc{x/n, c} \cdot \max \cbc{y/n, c}}^{- \alpha} \, \mathrm d x \, \mathrm d y\\
	&= \int_{\brk{0, 1}^2} \mathbf{1}_{\cbc{x \in  B_i}} \mathbf{1}_{\cbc{y \in B_j}} (1 - \alpha)^2 \bc{ 1 - \alpha c^{1 - \alpha}}^{-2} \bc{\max\cbc{x, c} \cdot \max \cbc{y, c}}^{- \alpha} \, \mathrm d x \, \mathrm d y \\
	&= \int_{\brk{0, 1}^2} \rho_i (x) \rho_j (y) W(x, y) \, \mathrm d x \, \mathrm d y
\end{align*}
where
\begin{align*}
	 W(x, y) = \bc{\frac{1 - \alpha}{1 - \alpha c^{1 - \alpha}}}^2 \bc{\max\cbc{x, c} \cdot \max \cbc{y, c}}^{- \alpha}.
\end{align*}
Eventually, arguments as in \cite[Section 3.3.3]{borgs2018p} yield the desired convergence result.

\section{PROBLEM DETAILS}\label{app:prob_details}
In this section, we give more comprehensive descriptions of the problems considered in our work.

\subsection{Cyber Security} \label{app:cyber}
Let us start with fundamental components of the cyber security model. Computers are either infected or susceptible to infection. Furthermore, they can be defended and are otherwise called unprotected. This leads to a total number of four states, namely $DI$ (defended infected), $DS$ (defended and susceptible to infection), $UI$ (unprotected infected), and $US$ (unprotected and susceptible to infection). Formally, we define the state space as $\mathcal{X} \coloneqq \cbc{DI, DS, UI, US}$.  Each owner of a computer in the network can choose between two actions at each time step, i.e. $\mathcal{U} \coloneqq \cbc{0, 1}$. Here, action $0$ means that the computer owner is satisfied with its current defense status (defended or unprotected) and does not try to adjust it. Conversely, action $1$ captures the fact that the owner is trying to change the current defense status to the opposing one. It is important to note that a geometrically distributed (with parameter $0 < \lambda\leq 1$) waiting time passes before the desired adjustment takes place. This brings us to specifying the network structure of interest. In contrast to \cite{carmona2018probabilistic} we assume that the connections in the computer network are characterized by a not-so-dense graph structure. More specifically, we choose sequence of graphs that converge to a power-law $L^p$-graphon $W$ (see the previous chapters for details) when the number of computers in the network approaches infinity.  This gives rise to the usual empirical neighborhood mean field $\mathbb{G}_t^\alpha$  of agent $\alpha$ at time $t$.
\footnote{\cite{carmona2018probabilistic} define the model for a continuous time interval in contrast to our discrete time approach. Consequently, we make some adjustments such as using geometrically distributed waiting times instead of exponentially distributed ones.}

Already infected computers have a recovery rate which depends on their protection level, meaning there is one recovery probability $q_{\textrm{rec}}^D$ for defended computers and one for unprotected ones, i.e. $q_{\textrm{rec}}^U$. Computers can become infected either directly by the hacker or by other infected computers. Formally, we define $v_H$ to be the intensity of the hacker's attacks. Then, the probability for direct infection of a defended computer is $v_H z_{\textrm{inf}}^D$ while it is $v_H z_{\textrm{inf}}^U$ for an unprotected one. For the second way of infection through other infected computers, we assume that computers can only infect each other if they share a direct connection in the network. Put differently, the nodes depicting the respective computers in the graph have to be connected by an edge. This concept provides the foundation for the transition probabilities we have not specified yet. If $\alpha$ is an unprotected susceptible computer, it's probability to be infected by an infected and defended computer at time $t$ is $\beta_{D U} \mathbb{G}_t^\alpha \cbc{DI}$. Similarly, the unprotected susceptible computer $\alpha$ has a probability of $\beta_{U U} \mathbb{G}_t^\alpha \cbc{UI}$ to be infected by an unprotected and infected computer.

If we assume $\alpha$ to be a defended susceptible computer, the transition probabilities have to be adapted accordingly. The probability for $\alpha$ to be infected by a defended and infected computer at time $t$ is then given by $\beta_{D D} \mathbb{G}_t^\alpha \cbc{DI}$, and it's probability to be infected by an unprotected infected computer is defined as $\beta_{U D} \mathbb{G}_t^\alpha \cbc{UI}$. Summing up the probabilities of infection by the hacker and infection by another computer yields the overall probability of being infected, i.e.
\begin{align*}
    q_\textrm{inf}^D &\coloneqq v_H z_{\textrm{inf}}^D + \beta_{D D} \mathbb{\widetilde G}_t^\alpha \cbc{DI} + \beta_{U D} \mathbb{\widetilde G}_t^\alpha \cbc{UI} - v_H z_{\textrm{inf}}^D  \beta_{D D} \mathbb{\widetilde G}_t^\alpha \cbc{DI}
	- v_H z_{\textrm{inf}}^D   \beta_{U D} \mathbb{\widetilde G}_t^\alpha \cbc{UI} \\
	&\qquad \qquad - \beta_{D D} \mathbb{\widetilde G}_t^\alpha \cbc{DI}  \beta_{U D} \mathbb{\widetilde G}_t^\alpha \cbc{UI}  +v_H z_{\textrm{inf}}^D  \beta_{D D} \mathbb{\widetilde G}_t^\alpha \cbc{DI}  \beta_{U D} \mathbb{\widetilde G}_t^\alpha \cbc{UI}
\end{align*}
and similarly for undefended computers
\begin{align*}
    q_\textrm{inf}^U &\coloneqq v_H z_{\textrm{inf}}^U + \beta_{D U} \mathbb{\widetilde G}_t^\alpha \cbc{DI} + \beta_{U U} \mathbb{\widetilde G}_t^\alpha \cbc{UI}  -v_H z_{\textrm{inf}}^U  \beta_{D U} \mathbb{\widetilde G}_t^\alpha \cbc{DI}
	-v_H z_{\textrm{inf}}^U   \beta_{U U} \mathbb{\widetilde G}_t^\alpha \cbc{UI} \\
	&\qquad \qquad -  \beta_{D U} \mathbb{\widetilde G}_t^\alpha \cbc{DI}  \beta_{U U} \mathbb{\widetilde G}_t^\alpha \cbc{UI}  + v_H z_{\textrm{inf}}^U  \beta_{D U} \mathbb{\widetilde G}_t^\alpha \cbc{DI}  \beta_{U U} \mathbb{\widetilde G}_t^\alpha \cbc{UI}
\end{align*}
where $\mathbb{\widetilde G}_t^\alpha \coloneqq \min(1, \mathbb{G}_t^\alpha)$, and the superscript ($D$ or $U$) indicates whether the respective computer is defended or unprotected at the current time point. For convenience of notation, we do not explicitly indicate the dependence of $q_\textrm{inf}^D$ and $q_\textrm{inf}^U$ on $\alpha$ and $t$. Thus, we obtain the following state-transition matrix which specifies all transition probabilities
\begin{align}\label{cyber_transition_matrix}
&M \bc{\alpha, t, u}= \qquad \bordermatrix{~ & DI & DS & UI & US \cr
	DI & \Bar{\lambda} \Bar{q}_{\textrm{rec}}^D &  \Bar{\lambda} q_{\textrm{rec}}^D& u \lambda \Bar{q}_{\textrm{rec}}^D & u \lambda q_{\textrm{rec}}^D \cr
	DS & \Bar{\lambda} q_\textrm{inf}^D  &\Bar{\lambda} \Bar{q}_\textrm{inf}^D & u \lambda q_\textrm{inf}^D  & u \lambda  \Bar{q}_\textrm{inf}^D \cr
	UI & u \lambda \Bar{q}_{\textrm{rec}}^U & u \lambda q_{\textrm{rec}}^U & \Bar{\lambda} \Bar{q}_{\textrm{rec}}^U & \Bar{\lambda} q_{\textrm{rec}}^U\cr
	US & u \lambda  q_\textrm{inf}^U & u \lambda \Bar{q}_{\textrm{inf}}^U & \Bar{\lambda} q_\textrm{inf}^U & \Bar{\lambda} \Bar{q}_{\textrm{inf}}^U}
\end{align}
where we have introduced $\Bar{\lambda} \coloneqq 1 - u \lambda$ and $\Bar{q} \coloneqq 1 - q$ for all $q$ with the different respective subscripts and superscripts.
By choosing their controls $(u)_{t=0, \ldots, T - 1} \in \cbc{0,1}^{T}$, computer owners try to maximize their rewards given by the reward function $r \bc{X_{t}^i, U_{t}^i, \mathbb G^i_t} = - k_D \mathbf{1}_{\cbc{X_{t}^i \in D}} - k_I\mathbf{1}_{\cbc{X_{t}^i \in I}}$ where $D \coloneqq \cbc{DI, DS}$, $I \coloneqq \cbc{DI, UI}$, and  $k_D \geq 0$ ($k_I \geq 0$) is the cost associated with being defended (infected)  for one computer at one time step.

In our experiments, we use the parameters $T=50$, $\mu_0 = [0.25, 0.25, 0.25, 0.25]$, $q_{\textrm{rec}}^D = 0.3$,  $q_{\textrm{rec}}^U = 0.2$, $\lambda = 0.3$, $v_H = 0.1$, $z_{\textrm{inf}}^D = 0.05$, $z_{\textrm{inf}}^U = 0.1$, $\beta_{D D} = 0.1$, $\beta_{U D} = 0.2$, $\beta_{D U} = 0.7$, $\beta_{U U} = 0.8$, $k_D = 0.7$, $k_I = 2$.

\subsection{Heterogeneous Cyber Security} \label{app:heterogeneous}
We assume existence of two classes of computers, such as the ones in private ownership ($\mathrm{Pri}$) and the ones bought by corporations for business purposes. Let $\gamma_{\mathrm{Pri}}$ and $\gamma_{\mathrm{Cor}}$ be the fraction of private and corporate computers in the network, respectively, with $\gamma_{\mathrm{Pri}} + \gamma_{\mathrm{Cor}} = 1$. This distinction between classes can bring the model closer to reality for several reasons. First of all, the accessible level of protection as well as the associated costs may vary drastically between private users and corporations. While private computer owners usually have access to average protection measures at affordable price levels, corporations can decide to invest into large-scale protection solutions, raising both the level of protection and the costs. Second, the costs in case of an infection may differ between private and commercial users. For the private user, costs may consist of inconveniences, i.e. the inability to use one's computer properly, or financial losses such as hacked bank accounts. In contrast to that, infected corporate computers may lead to immense financial and economic damages while personal inconveniences are secondary in this case. Third, the infection probabilities might vary between commercial and private computers. This can either be caused by changes in the owner behavior due to the different usage context, i.e. doing work versus casual activities (listening to music, social media, etc.). Or it might be the case that the hacker chooses one class of computer owners as her primary target, e.g. because corporate computers promise higher financial gains.

The implementation of multiple classes into our model is a rather straightforward extension of the basic setup. We implement the two different agent types as part of the state, see also for example \cite{cui2021learning} or \cite{mondal2021approximation}. Then, the new state space is given by $\mathcal{X} \coloneqq \mathcal{X}_{P} \cup \mathcal{X}_{C}$ with $\mathcal{X}_{P} \coloneqq \cbc{\mathrm{Pri}DI, \mathrm{Pri}DS, \mathrm{Pri}UI, \mathrm{Pri}US}$ and  $\mathcal{X}_{C} \coloneqq \cbc{\mathrm{Cor}DI, \mathrm{Cor}DS,\mathrm{Cor} UI, \mathrm{Cor}US}$ while the action space $\mathcal{U} \coloneqq \cbc{0, 1}$ remains the same.
As the base version of the model, the multi-class setup is also built on the assumption that the computer network structure follows a graph with power-law degree distribution. Coming to the transition probabilities, the multi-class approach requires two versions of the transition matrix in $\eqref{cyber_transition_matrix}$. 

For various possible reasons, which we have discussed above, the infection and recovery probabilities for commercially used computers do not have to be the same as for privately owned ones. Formalizing this observation, we introduce the recovery probabilities for defended private computers $q_{\textrm{rec}}^{\textrm{Pri} D}$ and unprotected private computers $q_{\textrm{rec}}^{\textrm{Pri} U}$. These and the following definitions are easily extended to the corporate case. However, we will not write them down explicitly in order to keep this paragraph clear and brief. Similarly to the basic version of the model, a privately owned defended computer is directly infected by the hacker with probability $v_H z_{\textrm{inf}}^{\textrm{Pri} D}$ and a private unprotected one is directly infected with probability $v_H z_{\textrm{inf}}^{\textrm{Pri} U}$ where $v_H$ denotes the effort of the hacker as before. Besides direct infection, computers can also be infected by already infected, neighboring computers. To account for the potential differences of the two classes of private and corporate computers, the probability for an defended private computer $\alpha$ to be infected by an unprotected neighbor is denoted as $\beta_{U \textrm{Pri}D} \mathbb{G}_t^\alpha \cbc{UI}$ where $\mathbb{G}_t^\alpha$ denotes the empirical neighborhood mean field of agent $\alpha$ at time $t$ as before. Accordingly, we define the other network infection probabilities by  $\beta_{D \textrm{Pri}D} \mathbb{G}_t^\alpha \cbc{UI}$, $\beta_{D \textrm{Pri}U} \mathbb{G}_t^\alpha \cbc{UI}$, and $\beta_{U \textrm{Pri} U} \mathbb{G}_t^\alpha \cbc{UI}$.
Combining both the direct infection probabilities and those through the network, we obtain the overall infection probabilities
\begin{align*}
	q_\textrm{inf}^{\textrm{Pri} D} &\coloneqq 
	v_H z_{\textrm{inf}}^{\textrm{Pri} D} + \beta_{D \textrm{Pri} D} \mathbb{\widetilde G}_t^\alpha \cbc{DI}  + \beta_{U \textrm{Pri} D} \mathbb{\widetilde G}_t^\alpha \cbc{UI}  
	- v_H z_{\textrm{inf}}^{\textrm{Pri} D}  \beta_{D \textrm{Pri}D} \mathbb{\widetilde G}_t^\alpha \cbc{DI} - v_H z_{\textrm{inf}}^{\textrm{Pri} D}   \beta_{U\textrm{Pri} D} \mathbb{\widetilde G}_t^\alpha \cbc{UI} \\
	&\qquad \qquad - \beta_{D\textrm{Pri} D} \mathbb{\widetilde G}_t^\alpha \cbc{DI}  \beta_{U \textrm{Pri}D} \mathbb{\widetilde G}_t^\alpha \cbc{UI} +v_H z_{\textrm{inf}}^{\textrm{Pri} D} \beta_{D\textrm{Pri} D} \mathbb{\widetilde G}_t^\alpha \cbc{DI}  \beta_{U \textrm{Pri}D} \mathbb{\widetilde G}_t^\alpha \cbc{UI}
\end{align*}
and
\begin{align*}
	q_\textrm{inf}^{\textrm{Pri} U} &\coloneqq v_H z_{\textrm{inf}}^{\textrm{Pri} U} + \beta_{D \textrm{Pri} U} \mathbb{\widetilde G}_t^\alpha \cbc{DI}  + \beta_{U \textrm{Pri} U} \mathbb{\widetilde G}_t^\alpha \cbc{UI}
	-v_H z_{\textrm{inf}}^{\textrm{Pri} U}  \beta_{D\textrm{Pri} U} \mathbb{\widetilde G}_t^\alpha \cbc{DI} -v_H z_{\textrm{inf}}^{\textrm{Pri} U}   \beta_{U \textrm{Pri}U} \mathbb{\widetilde G}_t^\alpha \cbc{UI} \\
	&\qquad \qquad - \beta_{D \textrm{Pri} U} \mathbb{\widetilde G}_t^\alpha \cbc{DI}  \beta_{U \textrm{Pri}U} \mathbb{\widetilde G}_t^\alpha \cbc{UI} +  v_H z_{\textrm{inf}}^{\textrm{Pri} U}  \beta_{D \textrm{Pri} U} \mathbb{\widetilde G}_t^\alpha \cbc{DI}  \beta_{U \textrm{Pri} U} \mathbb{\widetilde G}_t^\alpha \cbc{UI}
\end{align*}
where $\mathbb{\widetilde G}_t^\alpha \coloneqq \min(1, \mathbb{G}_t^\alpha)$. With the above definitions, we provide the state-transition matrix $M_{\textrm{Pri}} \bc{\alpha, t, u}$ for the '$\textrm{Pri}$' class and similarly $M_{\textrm{Cor}} \bc{\alpha, t, u}$ for the '$\textrm{Cor}$' class analogously to the homogeneous problem. Eventually, the state-transition matrix $M$ for the whole state space $\mathcal{X}$ is given by the definition
\begin{align*}
	M \bc{\alpha, t, u} \coloneqq
	\begin{pmatrix}
		M_{\textrm{Pri}} \bc{\alpha, t, u} & \mathbf{0}\\
		\mathbf{0} & M_{\textrm{Cor}} \bc{\alpha, t, u}
	\end{pmatrix}
\end{align*}
where each $\mathbf{0}$ is a $4 \times 4$ matrix with all entries equal to zero. Since there is no possibility of changing ownership in our setup, the transition probabilities we have not specified yet all turn out to be zero. Computer owners try to maximize their rewards $r_{\textrm{Pri}}  \bc{X_{t}^i, U_{t}^i, \mathbb G^i_t} = - k_{\textrm{Pri} D} \mathbf{1}_{\cbc{X_{t}^i \in D}} - k_{\textrm{Pri} I} \mathbf{1}_{\cbc{X_{t}^i \in I}}$ or $r_{\textrm{Cor}} \bc{X_{t}^i, U_{t}^i, \mathbb G^i_t} = - k_{\textrm{Cor} D} \mathbf{1}_{\cbc{X_{t}^i \in D}} - k_{\textrm{Cor} I} \mathbf{1}_{\cbc{X_{t}^i \in I}}$, depending on the class of ownership. Here, $k_{\textrm{Pri} D}$ denotes the cost to defend a privately owned computer and $k_{\textrm{Pri} I}$ the cost associated with a privately owned, infected computer. Note that both costs are measured in terms of one time period and that $k_{\textrm{Cor} D}$ and $k_{\textrm{Cor} I}$ are defined analogously.

In our experiments, we use slightly adjusted parameters $T=50$, $\mu_0 = [0.125, 0.125, \ldots, 0.125]$, $q_{\textrm{rec}}^{\textrm{Cor} D} = 0.4$,  $q_{\textrm{rec}}^{\textrm{Cor} U} = 0.3$, $\lambda = 0.3$, $v_H = 0.1$, $z_{\textrm{inf}}^{\textrm{Cor} D} = 0.05$, $z_{\textrm{inf}}^{\textrm{Cor} U} = 0.1$, $\beta_{D \textrm{Cor}D} = 0.1$, $\beta_{U \textrm{Cor}D} = 0.2$, $\beta_{D \textrm{Cor}U} = 0.7$, $\beta_{U \textrm{Cor}U} = 0.8$, $k_{\textrm{Cor} D} = 0.7$, $k_{\textrm{Cor} I} = 2$, and parameters for the $\mathrm{Pri}$ agents $q_{\textrm{rec}}^{\textrm{Pri} D} = 0.4$,  $q_{\textrm{rec}}^{\textrm{Pri} U} = 0.3$, $\lambda = 0.3$, $v_H = 0.1$, $z_{\textrm{inf}}^{\textrm{Pri} D} = 0.05$, $z_{\textrm{inf}}^{\textrm{Pri} U} = 0.1$, $\beta_{D \textrm{Pri}D} = 0.2$, $\beta_{U \textrm{Pri}D} = 0.3$, $\beta_{D \textrm{Pri}U} = 0.9$, $\beta_{U \textrm{Pri}U} = 1.0$, $k_{\textrm{Pri} D} = 0.6$, $k_{\textrm{Pri} I} = 2$.

\subsection{Beach Bar}\label{app:beach_bar}

For implementation of the model described in the main text, we define a reward function $r \bc{X_{t}^i, U_{t}^i, \mathbb G^i_t} = \frac{2}{|\mathcal X|} \abs{B - X_{t}^i} + \frac{2}{|\mathcal X|} \abs{U_{t}^i} - 3 \mathbb G^i_t (x)$ and dynamics $X_{t+1}^i = X_{t}^i + U_{t}^i + \epsilon_t^i$, where $\epsilon_t^i$ is a random noise variable equal to $-1$ or $1$ with probabilities $0.05$ each, and $0$ otherwise.

\end{document}